\newtheorem{corollary}{Corollary}[section]
\newtheorem{lemma}{Lemma}[section]
\newtheorem{theorem}{Theorem}[section]
\newtheorem{example}{Example}[section]
\newtheorem{construction}{Construction}[section]
\newtheorem{remark}{Remark}[section]
\def\ie{{\it i.e.\ }}
\newcommand{\B}{\mathcal {B}}
\newcommand{\C}{\mathcal {C}}
\newcommand{\G}{\mathcal {G}}
\renewcommand{\S}{\mathcal {S}}
\newcommand{\bbZ}{{\mathbb Z}}
\begin{document}

\title{Optimal codes with small constant weight in $\ell_1$-metric}

\author{Tingting~Chen,~Yiming~Ma,~and~Xiande~Zhang
        \thanks{T. Chen ({\tt ttchenxu@mail.ustc.edu.cn}) and Y. Ma ({\tt mym024@mail.ustc.edu.cn}) are with
        School of Cyber Security, University of Science and Technology of China, Hefei, 230026, Anhui, China.}

\thanks{X. Zhang ({\tt drzhangx@ustc.edu.cn}) is with School of Mathematical Sciences,
University of Science and Technology of China, Hefei, 230026, Anhui, China.  The research of X. Zhang is supported by NSFC under grant 11771419, and by ``the  Fundamental
Research Funds for the Central Universities''.}
}
\maketitle

\begin{abstract} Motivated by the duplication-correcting problem for data storage in live DNA, we study the construction of constant-weight codes in $\ell_1$-metric. By using packings and group divisible designs in combinatorial design theory, we give constructions of optimal codes over non-negative integers and optimal ternary codes with $\ell_1$-weight $w\leq 4$ for all possible distances. In general, we derive the size of the largest ternary code with constant weight $w$ and distance $2w-2$ for sufficiently large length $n$ satisfying $n\equiv 1,w,-w+2,-2w+3\pmod{w(w-1)}$.
\end{abstract}
\begin{IEEEkeywords}
\boldmath DNA storage, tandem duplication error, constant-weight code, $\ell_1$-metric,  packing.
\end{IEEEkeywords}

\section{Introduction}
\IEEEPARstart{C}{onstant-weight} codes (CWCs) with Hamming distance constraint have attracted a lot
attention in recent years due to their vast
applications, such as in coding for bandwidth-efficient channels \cite{costello2007channel} and the design of oligonucleotide
sequences for DNA computing \cite{king2003bounds,milenkovic2005design}. One of the central problems in their study is to
 determine the maximum size of  CWCs
due to their close relations to combinatorial design theory, see for example \cite{graham1980lower,brouwer1990new,agrell2000upper,zhang2010,zhang2012optimal,chee2014complexity,chee2015hanani,
chee2017linear,chee2017constructions,chee2019decompositions}. Although there are several different metrics which have been considered in coding theory, to the best of our knowledge, there is little known  for CWCs in the literature besides Hamming distance.

In this paper, we initiate the study of  CWCs with $\ell_1$-metric, which is motivated from the error correcting problem of  data storage in live DNA \cite{jain2017duplication}. To prove the reliability of information stored in live DNA, codes which can correct errors such as tandem duplication, point mutations, insertions, and deletions arising from various mutations, must  be considered. Among these, duplication-correcting codes have been studied by a number of recent works, see \cite{jain2017duplication,kovavcevic2018asymptotically,lenz2018bounds,tang2019single,yehezkeally2019reconstruction}. In \cite{jain2017duplication},  the authors studied tandem duplication, which is a process of inserting a copy of a segment of the DNA adjacent to its original position. For example, for a sequence $AGCTCT$, $CTCT$ is a tandem duplication error of length two on $CT$.
Tandem duplications  constitute
about 3\% of the human genome \cite{lander2001initial} and may cause important
phenomena such as chromosome fragility, expansion diseases,
silencing genes \cite{usdin2008biological}, and rapid morphological variation \cite{fondon2004molecular}.
Jain et al. \cite{jain2017duplication} proposed a coding scheme to combat tandem duplications, which is based on CWCs in $\ell_1$-metric over integers. More specifically \cite[Construction B]{jain2017duplication}, given a set of $\ell_1$-metric codes, which are of length $m$, constant $\ell_1$-weight $w$ and minimum $\ell_1$-distance $2(t+1)$, with $1\leq m\leq n-k+1$ and $0\leq w\leq \lfloor\frac{n-k}{k}\rfloor$,  one can construct a code of length $n$ capable of correcting $t$ errors of $k$-tandem-duplication. Due to this, a choice of optimal $\ell_1$-metric codes with certain weights and lengths will result in an optimal code correcting tandem duplications.

Codes in $\ell_1$-metric distance have been widely studied because of their applications in rank-modulation scheme for flash memory \cite{barg2010codes,jiang2009error,tallini20111,zhou2014systematic,farnoud2013error,kabatianskycodes}. However,  most works focus on permutation codes or multi-permutation codes. Kova\v{c}evi\'{c} and Tan \cite{vincent2018code} gave some properties and constructions of \emph{multiset codes}, based on Sidon sets and lattices, and derived bounds on the size of such codes. The multiset codes they studied are indeed the constant-weight codes in  $\ell_1$-metric over non-negative integers, in the context that the weight grows, the length is either constant or grows proportionally, while the distance remains fixed. Their construction is asymptotically optimal for small lengths, or for small distances.
Jinushi and Sakaniwa \cite{jinushi1990construction} proposed a construction for error-correcting codes in $\ell_1$-metric which relies upon the properties of generalized Hadamard matrices \cite{butson1962generalized}. They used the term \emph{absolute
summation distance} that we believe is \emph{constant weight in  $\ell_1$-metric} (Unfortunately, we could not find a copy of their paper to confirm it).

In \cite{zhang2010,zhang2012optimal}, the authors investigated the construction of optimal ternary CWCs with Hamming distance constraint, by using the tool of group divisible codes, which is an analog of group divisible designs (GDDs) in combinatorial design theory \cite{ge2007group}. Motivated by their methods, in this work, we construct CWCs  in  $\ell_1$-metric by using packings and GDDs for fixed weight $w$ and distance $d$, and determine the size of the largest codes of weight $w\leq 4$ for all possible distance $d$ and length $n$. For $w=1$ or $2$, codes are trivial. Our main contributions are listed below.

\begin{enumerate}
  \item For CWCs of weight $w\in \{3,4\}$ over non-negative integers, we derive the size $A(n,d,w)$ of the largest code for all distance $d$ and length $n$ completely. Some codes are constructed by optimal packings of triples by pairs \cite{stinson2006packings}, or by optimal packings of quadruples by triples \cite{bao2015completion}. In particular, we show that $A(n,4,3)\sim\frac{n^2}{6}, A(n,4,4)\sim\frac{n^3}{24}$ and $A(n,6,4) \sim \frac{n^2}{12} $, which improve the bounds given in \cite{vincent2018code} for $\frac{n^2}{6}\lesssim A(n,4,3)\lesssim \frac{n^3}{6}$, $\frac{n^3}{24}\lesssim A(n,4,4)\lesssim \frac{n^4}{24}$, and $A(n,6,4)\lesssim \frac{n^4}{24}$\footnote{$f(n)\sim g(n)$ means $\lim_{n\rightarrow \infty}f(n)/g(n)=1$, and $f(n)\lesssim g(n)$ means $\liminf_{n\rightarrow \infty}g(n)/f(n)\geq 1$.}.
  \item  For CWCs  over $\{0,1,2\}$, that is, ternary CWCs,   we construct non-trivial codes with maximum size by using Steiner triple systems \cite{Colbourn2006} and  packings with special leave graphs \cite{colbourn1986quadratic,colbourn1998class}, and solve the case completely for $w=3$. For $w=4$, the maximum sizes of codes are determined based on GDDs, but with very few   cases unsolved, for which we provide upper and lower bounds.
  \item For ternary CWCs with $d=2w-2$, we give a general construction using a result on  graph packings  of Alon et al. \cite{alon1998packing}, and determine the size of the largest code for sufficiently large length $n$ satisfying $n\equiv 1,w,-w+2,-2w+3\pmod{w(w-1)}$.
\end{enumerate}



Our paper is organized as follows. In Section~\ref{pre}, we give  necessary definitions, notations, results in combinatorial design theory, and the connections between $\ell_1$-metric codes and packings. In Section~\ref{cwcz}, we construct optimal CWCs over non-negative integers for weight three and four. In Sections~\ref{cwc3} and \ref{conca}, we consider ternary CWCs, and give combinatorial constructions for optimal codes for weight  three and four, respectively. In Section~\ref{dis2w}, we deal with  ternary codes for general weight $w$ and distance $2w-2$ by graph packings. Finally, we conclude our results in Section~\ref{s:con}.

\section{Preliminaries}\label{pre}

Let $\bbZ_{\geq 0}$ denote the set  of non-negative integers and $\bbZ_q$ denote the ring of integers modulo $q$, for any integer $q\geq 2$.  For any integers $a<b$, let $[a,b]$ denote the set of integers $\{a,a+1,\ldots,b\}$. We further abbreviate $[1,b]$ as $[b]$. For two sets $A$ and $B$, the {\it symmetric difference} of $A$ and $B$ is the set of elements which are in either of the $A$ or $B$ and not in their intersection, denoted by $A\vartriangle B$.


\subsection{CWCs with $\ell_1$-metric}

A $q$-ary code $\C$ of length $n$ is a set of vectors in $I_q^n$, where $I_q:=\{0,1,\ldots,q-1\}\subset \bbZ_{\geq 0}$. The elements of $\C$ are called {\it codewords}. For any two codewords $\textbf{u}=(\textbf{u}_1,\ldots,\textbf{u}_n),\textbf{v}=(\textbf{v}_1,\ldots,\textbf{v}_n)\in\C$, the {\it support} of $\textbf{u}$ is defined as $supp(\textbf{u})=\{x\in [n]\mid \textbf{u}_x\neq 0\}$, and the {\it $\ell_1$-distance} between $\textbf{u}$ and $\textbf{v}$ is defined as $d_{l_1}(\textbf{u},\textbf{v})=\sum_{x\in [n]}|\textbf{u}_x-\textbf{v}_x|$ (computations are over the ring of integers). The {\it $\ell_1$-weight} of $\textbf{u}$ is defined as the $\ell_1$-distance of $\textbf{u}$ and the zero vector, \ie\,wt$_{l_1}(\textbf{u})=\sum_{x\in [n]}|\textbf{u}_x|$.
A code $\C$ is said to be of {\it constant-weight $w$} if wt$_{l_1}(\textbf{u})=w$ for every codeword $\textbf{u}\in\C$, and of {\it minimum $\ell_1$-distance $d$} if $d_{l_1}(\textbf{u},\textbf{v})\geq d$ for any two distinct codewords $\textbf{u},\textbf{v}\in \C$. If both properties are satisfied, then a code is called a \emph{constant-weight code} in $\ell_1$-metric and denoted by an $(n,d,w)_q$ code if it is $q$-ary, and an $(n,d,w)$ code if it is over $\bbZ_{\geq 0}$. For convenience, we call a codeword of \emph{type $1^{e_1}2^{e_2}\cdots (q-1)^{e_{q-1}}$} if it has exactly $e_i$ positions with entry $i$ for $i\in[q-1]$, and all the rest positions are $0$.

In \cite{jain2017duplication}, Jain et al. established a connection between codes capable of  correcting tandem duplications and constant-weight codes with $\ell_1$-metric. They showed that a code is able to correct $t$ tandem duplications of length $k$ if and only if the zero signatures of the $z$-part of all $k$-congruent codewords form a constant-weight code in $\ell_1$-metric over integers with distance at least $2(t+1)$ ( see \cite[Theorem 20]{jain2017duplication} for details). More importantly, a choice of optimal $\ell_1$-metric constant-weight codes for certain wights and lengths will result in optimal tandem  duplication correcting codes \cite[Construction B]{jain2017duplication}. Further if we assume that each segment of length $k$ is duplicated at most $q-1$ times,  then we need  $q$-ary constant-weight codes in $\ell_1$-metric.


Motivated by this connection, we consider  constant-weight codes in $\ell_1$-metric with maximum possible size. Since we only consider $\ell_1$-metric in this paper, we omit the subscript $\ell_1$ or the term $\ell_1$-metric unless otherwise specified.
 The maximum number of codewords  in an $(n,d,w)_q$ code is denoted by $A_q(n,d,w)$, and the $(n,d,w)_q$ code is called {\it optimal} if it has $A_q(n,d,w)$ codewords. Similarly for codes over $\bbZ_{\geq 0}$, we use $A(n,d,w)$ to denote the largest possible number of codewords.

In the remaining of this paper, we  focus on determining the values of $A_q(n,d,w)$ and $A(n,d,w)$ by constructing optimal CWCs. The followings are some trivial cases.

\begin{theorem}\label{base}
  \begin{itemize}
     \item[(a)] $A_q(n,2\delta-1,w)=A_q(n,2\delta,w)$; $A(n,2\delta-1,w)=A(n,2\delta,w)$.
     \item[(b)] $A_q(n,2\delta,w)=A(n,2\delta,w)=1$ if $w<\delta.$
     \item[(c)] $A_q(n,2w,w)=\left\lfloor \frac{n}{\lceil\frac{w}{q-1}\rceil}\right\rfloor$; $A(n,2w,w)=n.$
     \item[(d)] $A_q(n,2,w)=A(n,2,w)=\binom{n+w-1}{w}$ if $w\leq q-1$; $A_q(n,2,w)=\sum_{j=0}^t(-1)^j\binom{n}{j}\binom{n-1+w-jq}{w-jq}$ if $w>q-1$, where $t=\lfloor \frac{w}{q}\rfloor$.
   \end{itemize}
\end{theorem}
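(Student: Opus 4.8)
The plan is to derive all four parts from two elementary observations about the $\ell_1$-geometry of weight-$w$ vectors, after which each statement collapses to a short counting or construction argument.

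First I would establish the structural fact driving part (a): since every codeword has non-negative entries summing to $w$, any two codewords $u,v$ of weight $w$ satisfy $\sum_x(u_x-v_x)=0$. Splitting the coordinates into those with $u_x>v_x$ and those with $u_x<v_x$ shows that the positive part $P=\sum_{x:u_x>v_x}(u_x-v_x)$ equals the negative part $N=\sum_{x:u_x<v_x}(v_x-u_x)$, so $d_{l_1}(u,v)=P+N=2P$ is even. Because the minimum distance of a constant-weight code is thus always an even integer, there is no attainable distance strictly between $2\delta-1$ and $2\delta$, so requiring $d\geq 2\delta-1$ is identical to requiring $d\geq 2\delta$. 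This gives $A_q(n,2\delta-1,w)=A_q(n,2\delta,w)$ and the analogous identity over $\bbZ_{\geq 0}$ at once.

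Next I would record the identity $d_{l_1}(u,v)=2w-2\sum_x\min(u_x,v_x)$, which follows from $|u_x-v_x|=u_x+v_x-2\min(u_x,v_x)$ and $\sum_x u_x=\sum_x v_x=w$. Since $\sum_x\min(u_x,v_x)\geq 0$, every pair obeys $d_{l_1}(u,v)\leq 2w$; when $w<\delta$ this forces $d_{l_1}(u,v)\leq 2w<2\delta$, so two distinct codewords cannot coexist and the code has size exactly $1$, proving (b). For (c) the same identity shows $d_{l_1}(u,v)=2w$ precisely when $\sum_x\min(u_x,v_x)=0$, i.e.\ when $u$ and $v$ have disjoint supports, so an $(n,2w,w)$ code is exactly a family of weight-$w$ vectors with pairwise disjoint supports. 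Over $\bbZ_{\geq 0}$ each codeword needs support size at least one, and the $n$ vectors having a single entry equal to $w$ realize $n$ disjoint supports, giving $A(n,2w,w)=n$. Over $I_q$ each entry is at most $q-1$, so a weight-$w$ codeword occupies at least $s:=\lceil w/(q-1)\rceil$ positions; partitioning $[n]$ into $\lfloor n/s\rfloor$ blocks of size $s$ and placing a weight-$w$ pattern in each (feasible since $s(q-1)\geq w$) attains the bound, yielding $\lfloor n/\lceil w/(q-1)\rceil\rfloor$.

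Finally, for (d) the evenness observation shows any two distinct weight-$w$ vectors are already at distance at least $2$, so the constraint $d\geq 2$ is vacuous and $A(n,2,w)$ (resp.\ $A_q(n,2,w)$) equals the total number of weight-$w$ vectors of length $n$ over the relevant alphabet. Over $\bbZ_{\geq 0}$ this counts non-negative integer solutions of $u_1+\cdots+u_n=w$, namely $\binom{n+w-1}{w}$ by stars and bars; when $w\leq q-1$ no single entry can exceed $w\leq q-1$, so the $q$-ary count coincides. For $w>q-1$ I would count solutions subject to $u_x\leq q-1$ by inclusion--exclusion on the events $\{u_x\geq q\}$: forcing $j$ chosen coordinates to be at least $q$ replaces the target $w$ by $w-jq$, contributing $(-1)^j\binom{n}{j}\binom{n-1+w-jq}{w-jq}$, with $j$ ranging up to $t=\lfloor w/q\rfloor$. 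The only mildly technical step is this bounded-composition count; the rest is immediate from the two geometric observations, so I expect no real obstacle beyond the inclusion--exclusion bookkeeping.
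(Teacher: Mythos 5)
Your proof is correct and follows essentially the same route as the paper: evenness of the $\ell_1$-distance between equal-weight vectors for (a), the identity $d_{l_1}(\textbf{u},\textbf{v})=2w-2\sum_x\min(\textbf{u}_x,\textbf{v}_x)$ giving disjoint supports and the support-size bound for (b) and (c), and counting weight-$w$ vectors for (d). The only cosmetic difference is that for $w>q-1$ you perform the inclusion--exclusion directly on the events $\{u_x\geq q\}$, whereas the paper extracts the same coefficient from the generating function $(1+x+\cdots+x^{q-1})^n$; these are the same computation.
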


\begin{proof}
  The equalities in (a) follow because the $\ell_1$-distance between any two codewords of constant weight is even. The results in (b) are obvious, and results in  (c) follow because the codewords must have disjoint supports and the sizes of supports of codewords over $I_q$ and $\bbZ_{\geq 0}$ are at least $\lceil\frac{w}{q-1}\rceil$ and one, respectively.

  For (d), the value of $A(n,2,w)$ equals the number of distinct vectors of length $n$ and weight $w$, that is the number of non-negative integer solutions to the equation $x_1+\cdots+x_n=w$, which is $\binom{n+w-1}{w}$. For $q$-ary codes, we split them in two cases. If $w\leq q-1$, any entry of the codeword is at most $w$, which is the same as codes over non-negative integers, then $A_q(n,2,w)=A(n,2,w)=\binom{n+w-1}{w}$. If $w>q-1$, any entry of the codeword is at most $q-1$. Let $a_j$ be the  number of distinct $q$-ary vectors of length $n$ and  weight $j$, that is, $a_j$ equals the number of integer solutions to the equation $x_1+\cdots+x_n=w$, where $0\leq x_i\leq q-1$. Then the generating function of the sequence $a_j$ is
  \begin{equation*}
    (1+x+\cdots+x^{q-1})^n=\sum_{j=0}^{kn}a_jx^j.
  \end{equation*}
  Expand the above equation, we get $A_q(n,2,w)=a_w=\sum_{j=0}^t(-1)^j\binom{n}{j}\binom{n-1+w-jq}{w-jq}$, where $t=\lfloor \frac{w}{q}\rfloor$.
\end{proof}

\vspace{0.5cm}
By Theorem~\ref{base},  we only need to consider the even distances between $4$ and $2w-2$ for any code of constant weight $w$.


\subsection{Designs}\label{sec:iib}

A {\it set system} is a pair $\S=(X,\B)$, where $X$ is a finite set of {\it points} and $\B$ is a set of subsets of $X$, called {\it blocks}. The {\it order} of $\S$ is the number of points $|X|$, and the {\it size} of $\S$ is the number of blocks $|\B|$.

A {\it graph} $G$ is a set system $(V,E)$ with all blocks in $E$ being $2$-subsets of $V$, in which a point of $V$ is called a {\it vertex} and a block of $E$ is called an {\it edge}. The {\it degree} of a vertex $v$ is the number of edges containing $v$. Let $ \delta(G)$ denote the minimum vertex degree of $G$. A graph is called  {\it complete} if each pair of vertices is connected by an edge, and denoted by $K_n$ if $|V|=n$, such a graph is also called {\it clique}. We call a sequence $(v_1,v_2,\ldots,v_m)$ of distinct vertices a {\it cycle} of length $m$ if $\{v_i,v_{i+1}\}\in E$ for all $i\in[m-1]$ and $\{v_m,v_1\}\in E$.

Let $K$ be a set of positive integers. A $t$-$(n,K,1)$ packing is a set system $(X,\B)$ of order $n$, such that $|B|\in K$ for each $B\in \B$, and every $t$-subset of points occurs in at most one block of $\B$. When $K=\{k\}$, we just write $k$ instead of $\{k\}$. The {\it packing number} $D(n,k,t)$ is the maximum number of blocks in any $t$-$(n,k,1)$ packing. A $t$-$(n,k,1)$ packing $(X,\B)$ is {\it optimal} if $|\B|=D(n,k,t)$. If every $t$-subset of points occurs in exactly one block, we call it a {\it $t$-$(n,k,1)$ design}, or a $t$-design in short. When $t=2$ and $k=3$, such a $2$-design is called a {\it Steiner  triple system} of order $n$, denoted by STS$(n)$. The {\it leave graph} of a $t$-$(n,k,1)$ packing is the set system $(X,E)$ where $E$ consists of all $t$-subsets of $X$ that do not appear in any blocks. For $t=2$ and $k=3,4$, or $t=3$ and $k=4$, the packing numbers have been completely determined, see \cite{stinson2006packings}, and the corresponding leave graphs are also characterized. We list them below for later use.


\begin{lemma}\cite{stinson2006packings}\label{3packing}
  For any positive $n\not\equiv 5\pmod{6}$, $D(n,3,2)=\lfloor \frac{n}{3} \lfloor \frac{n-1}{2}\rfloor\rfloor$; when $n\equiv 5\pmod{6}$,  $D(n,3,2)=\lfloor \frac{n}{3} \lfloor \frac{n-1}{2}\rfloor\rfloor-1$, and the leave graph consists of a cycle of length four and $n-4$ isolated vertices.

\end{lemma}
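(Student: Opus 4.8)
The plan is to translate the combinatorial object into graph language and then separate the argument into an upper bound and a matching construction. A $2$-$(n,3,1)$ packing is precisely a family of pairwise edge-disjoint triangles in the complete graph $K_n$: each block is a triple of vertices, and the condition that every pair occurs in at most one block says exactly that no two triangles share an edge. Under this dictionary $D(n,3,2)$ is the maximum number of edge-disjoint triangles in $K_n$, and the leave graph is the subgraph of all edges covered by no triangle. So the whole statement becomes: determine the maximum triangle packing of $K_n$ and describe the uncovered edges.

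First I would prove the upper bound by a degree count. Fix a vertex $v$; it meets $n-1$ edges of $K_n$, and every triangle through $v$ uses exactly two of them, so at most $\lfloor (n-1)/2\rfloor$ triangles can contain $v$. Summing this over all $n$ vertices counts each triangle three times, giving $3D(n,3,2)\le n\lfloor (n-1)/2\rfloor$, hence $D(n,3,2)\le \lfloor \frac{n}{3}\lfloor\frac{n-1}{2}\rfloor\rfloor$. A one-line check of the four residues $n\equiv 0,1,2,3\pmod 6$ shows this already equals the claimed value there (for $n\equiv 1,3\pmod 6$ it equals $n(n-1)/6$).

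The crux is the extra $-1$ for $n\equiv 5\pmod 6$, which I would extract from a parity obstruction on the leave. Write $n=6m+5$; the first-order bound is $6m^2+9m+3$, while $\binom{n}{2}=18m^2+27m+10$, so attaining it would leave exactly one uncovered edge. But when $n$ is odd, $n-1$ is even, and at each vertex the triangles through it delete an even number of incident edges, so every vertex of the leave has even degree; a single leave edge would force two odd-degree vertices, a contradiction. Hence $D(6m+5,3,2)\le 6m^2+9m+2$. Repeating the count with one fewer triangle leaves four edges, and in a simple graph the only even-degree configuration on four edges is a single $4$-cycle together with isolated vertices (any degree-$4$ vertex or a second component would force more edges or odd degrees), which pins down the leave exactly.

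It remains to construct packings meeting these bounds, and this is where the real work lies. For $n\equiv 1,3\pmod 6$ the bound is met with empty leave by invoking the existence of a Steiner triple system $\text{STS}(n)$. For the residues $0,2,4\pmod 6$, and for the $n\equiv 5$ case with its prescribed $C_4$-leave, I would build optimal packings recursively — for instance by adjoining or deleting a bounded number of points to an $\text{STS}$ of a nearby admissible order, or through a group-divisible design construction — and settle the finitely many small orders by direct search. I expect the main obstacle to be exactly this existence half: producing, uniformly across each residue class, a packing that simultaneously attains the counting bound and realizes the exact leave (especially the $4$-cycle for $n\equiv 5\pmod 6$), and checking that each recursive step preserves both the count and the leave structure. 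The counting upper bound and the parity refinement are short; the matching constructions are the substantive part.
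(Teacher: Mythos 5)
This lemma is not proved in the paper at all: it is quoted verbatim from the packings survey \cite{stinson2006packings} (the result goes back to Sch\"onheim and Spencer), so there is no in-paper argument to measure you against. Judged on its own terms, your proposal is half a proof. The upper-bound half is correct and complete: the degree count $3D(n,3,2)\le n\lfloor (n-1)/2\rfloor$ gives the Sch\"onheim bound, your arithmetic for $n=6m+5$ (first-order bound $6m^2+9m+3$ versus $\binom{n}{2}=18m^2+27m+10$, hence a one-edge leave) is right, the observation that for odd $n$ every leave vertex has even degree kills the one-edge leave, and an even simple graph with exactly $4$ edges is indeed forced to be a single $C_4$ plus isolated vertices. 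So the $-1$ and the leave structure are correctly \emph{forced}, conditional on existence.

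The genuine gap is the existence half, which you name but do not supply, and it is not uniformly routine. For $n\equiv 1,3\pmod 6$ citing Kirkman's theorem is fine, and for $n\equiv 0,2\pmod 6$ there is a one-line construction you could have written down (delete a point from an $\mathrm{STS}(n+1)$; the block count $\tfrac{n(n+1)}{6}-\tfrac n2=\tfrac{n(n-2)}{6}$ matches the bound). But for $n\equiv 4\pmod 6$ and especially for $n\equiv 5\pmod 6$ your text offers only ``adjoin or delete points / GDDs / direct search,'' with no actual construction and no verification that the recursion realizes the required leave; note that for $n\equiv 5\pmod 6$ you cannot simply delete a point from a nearby STS since $n\pm 1\equiv 0,4\pmod 6$. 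Producing a packing of size $6m^2+9m+2$ (whose leave is then automatically a $C_4$ by your parity argument) is precisely the content of the cited theorem, so as written the proposal establishes only the upper bounds and the conditional leave description, not the equalities.
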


\begin{lemma}\cite{stinson2006packings}\label{4packing}
For positive integers $n\not\in\{8,9,10,11,17,19\}$, we have

\begin{equation*}
    D(n,4,2)=\left\{
    \begin{array}{ll}
      \left\lfloor\frac{n}{4}\left\lfloor\frac{n-1}{3}\right\rfloor\right\rfloor-1,& n\equiv 7,10\pmod{12}\vspace{0.1cm},\\
      \left\lfloor\frac{n}{4}\left\lfloor\frac{n-1}{3}\right\rfloor\right\rfloor,& \text{otherwise}.\\
    \end{array}
    \right.
\end{equation*}
When $n=8,9,10,11,17,19$, the values of $D(n,4,2)$  are equal to $2, 3, 5, 6, 20, 25$, respectively.
\end{lemma}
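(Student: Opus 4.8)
This is a classical packing-number determination, so the plan is to establish matching upper and lower bounds on the number of blocks; I sketch the standard route. For the \textbf{upper bound} I would use the iterated Johnson (point-degree) counting argument. Fix a point $x$: every block containing $x$ consumes three of the $n-1$ pairs incident to $x$, and since each pair lies in at most one block, these triples are pairwise disjoint. Hence $x$ lies in at most $\lfloor (n-1)/3\rfloor$ blocks. Summing over all $n$ points and dividing by $4$ (each block has four points) gives $D(n,4,2)\le \lfloor \frac{n}{4}\lfloor\frac{n-1}{3}\rfloor\rfloor$, which is the stated value in the generic case.

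To obtain the extra $-1$ when $n\equiv 7,10\pmod{12}$, observe that both classes satisfy $n\equiv 1\pmod 3$, so $\lfloor(n-1)/3\rfloor=(n-1)/3$ and the generic bound equals $\lfloor n(n-1)/12\rfloor$. I would show this value is \emph{not} attainable by analyzing the \emph{leave graph} $L$ of a hypothetical packing meeting it. Writing $r_x$ for the number of blocks through $x$, the leave degree $n-1-3r_x$ is a multiple of $3$ at every vertex; and such a packing would leave exactly $\binom{n}{2}-6\lfloor n(n-1)/12\rfloor=3$ uncovered pairs. But a simple graph with exactly three edges cannot have all of its vertex degrees divisible by three: the three edges would be forced to be parallel between a single pair of vertices. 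This contradiction lowers the maximum by one, and tightness of the reduced value then follows from the constructions below.

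For the \textbf{lower bound} one must exhibit packings attaining these values. When $n\equiv 1,4\pmod{12}$ a Steiner system $S(2,4,n)$ exists (Hanani), giving a perfect packing with empty leave that meets the bound; the remaining residue classes are handled recursively through group divisible designs, transversal designs, and Wilson-type fillings that adjoin a bounded number of points to a smaller design and patch the resulting holes so that the final leave matches the minimum forced above (including the reduced value in the two exceptional classes). The six small orders $n\in\{8,9,10,11,17,19\}$, where the recursive machinery or the generic formula breaks down, are settled by explicit ad hoc packings. I expect the \textbf{main obstacle} to be precisely this lower-bound step: designing constructions that are simultaneously tight across all residue classes modulo $12$, and disposing of the finitely many sporadic orders by hand. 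The upper-bound count and the leave-graph obstruction are comparatively routine, so the delicate part is the exact optimality of the constructions, which is why one ordinarily invokes the completed determination in \cite{stinson2006packings}.
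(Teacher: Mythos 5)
This lemma is imported verbatim from \cite{stinson2006packings}; the paper offers no proof of its own, so the only fair comparison is against the classical argument your sketch is reconstructing. Your upper-bound half is correct and essentially complete: the point-degree (Johnson) count gives $D(n,4,2)\le\left\lfloor\frac{n}{4}\left\lfloor\frac{n-1}{3}\right\rfloor\right\rfloor$, and your leave-graph obstruction for $n\equiv 7,10\pmod{12}$ is sound --- in those classes $n\equiv 1\pmod 3$, a packing meeting the generic bound would leave exactly $3$ uncovered pairs, every leave degree $n-1-3r_x$ is divisible by $3$, and a simple graph with $3$ edges and degree sum $6$ concentrated in multiples of $3$ forces two vertices of degree $3$ carrying all three edges, i.e., a multi-edge. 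That correctly yields the $-1$.

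The genuine gap, beyond the lower-bound constructions you already flag as the main obstacle, is your treatment of the six sporadic orders. You propose to settle $n\in\{8,9,10,11,17,19\}$ ``by explicit ad hoc packings,'' but exhibiting packings only gives lower bounds, and at these orders the true values sit \emph{strictly below} the Johnson bound (e.g.\ $D(8,4,2)=2$ versus the bound $4$; $D(11,4,2)=6$ versus $8$; $D(19,4,2)=25$ versus the already-reduced $27$). Neither the degree count nor your three-edge leave argument rules out the intermediate values, so each of these orders needs a separate structural upper-bound argument (or exhaustive case analysis), which is a nontrivial part of Brouwer's completion of this problem. As it stands your sketch proves the upper bound only for the generic and $7,10\pmod{12}$ cases and defers everything else --- which is defensible, and indeed is exactly why the paper simply cites the completed determination, but it means the sporadic line of the statement is not actually established by your argument.
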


\begin{lemma}\cite{bao2015completion}\label{D(v,4,3)}
  For any positive $n$, it holds that
\begin{equation*}
    D(n,4,3)=\left\{
    \begin{array}{lr}
      \left\lfloor\frac{n}{4}\left\lfloor\frac{n-1}{3}\left\lfloor\frac{n-2}{2}\right\rfloor\right\rfloor\right\rfloor,& \text{if }n\not\equiv 0\pmod{6}\vspace{0.1cm},\\
      \left\lfloor\frac{n}{4}\left(\left\lfloor\frac{n-1}{3}\left\lfloor\frac{n-2}{2}\right\rfloor\right\rfloor-1\right)\right\rfloor,& \text{if } n\equiv 0\pmod{6}.
    \end{array}
    \right.
\end{equation*}
\end{lemma}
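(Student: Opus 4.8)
The plan is to establish the stated formula by sandwiching $D(n,4,3)$ between a counting upper bound and a matching construction. The upper bound I would obtain from an iterated Johnson (derived-packing) argument together with Lemma~\ref{3packing}; the lower bound requires exhibiting $3$-$(n,4,1)$ packings of the claimed size, which is the genuinely hard direction and is precisely what the completion argument in \cite{bao2015completion} supplies.

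For the upper bound, fix a point $x\in X$ in a $3$-$(n,4,1)$ packing $(X,\B)$ and delete $x$ from every block that contains it, producing a family of triples on the remaining $n-1$ points. I would check that this family is a $2$-$(n-1,3,1)$ packing: if a pair $\{a,b\}$ lay in two of these triples, then the $3$-subset $\{x,a,b\}$ would lie in two blocks of $\B$, contradicting the packing condition. Hence at most $D(n-1,3,2)$ blocks pass through each point, and since every block has exactly four points, double counting gives
\begin{equation*}
4\,D(n,4,3)\le n\,D(n-1,3,2),\qquad\text{so}\qquad D(n,4,3)\le\left\lfloor\frac{n}{4}\,D(n-1,3,2)\right\rfloor.
\end{equation*}
Substituting the value of $D(n-1,3,2)$ from Lemma~\ref{3packing} then reproduces the two displayed cases: when $n\not\equiv0\pmod 6$ we have $n-1\not\equiv5\pmod6$, so the inner quantity is $\lfloor\frac{n-1}{3}\lfloor\frac{n-2}{2}\rfloor\rfloor$, whereas when $n\equiv0\pmod6$ we have $n-1\equiv5\pmod6$, which injects the $-1$ correction.

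For the lower bound one must construct packings meeting this bound for every $n$. When $n\equiv2,4\pmod6$ a Steiner quadruple system SQS$(n)$ exists (Hanani), giving a perfect packing with empty triple-leave and $\binom{n}{3}/4$ blocks, which already equals the bound. For the remaining residues one would start from a suitable SQS (or a design on a nearby order), delete or adjoin points, and repair the resulting defect so that the leave on triples is as small as the floor permits; recursive constructions together with a finite list of small exceptional orders handle the base cases. Verifying that the correction term is \emph{exactly} $-1$ (and not larger) when $n\equiv0\pmod6$, and that no further loss is forced in the other classes, is the crux.

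The main obstacle is thus entirely on the construction side: the counting bound is soft, but proving it tight for all $n$---in particular pinning down the precise size of the minimum triple-leave in each congruence class and settling the sporadic small orders---is delicate, and this is the content of the cited completion. I would therefore treat the upper bound as the routine half and rely on \cite{bao2015completion} for the matching families, checking only that their parameters reproduce the floor expressions above.
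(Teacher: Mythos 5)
The paper offers no proof of this lemma: it is imported verbatim from \cite{bao2015completion}, so there is no in-paper argument to compare against. Your upper-bound half is correct and standard: the triples obtained by deleting a fixed point $x$ from the blocks through $x$ do form a $2$-$(n-1,3,1)$ packing (two such triples sharing a pair would force the triple $\{x,a,b\}$ into two blocks), double counting over the four points of each block gives $D(n,4,3)\le\left\lfloor\frac{n}{4}D(n-1,3,2)\right\rfloor$, and substituting Lemma~\ref{3packing} with the shift $n\mapsto n-1$ (so that $n-1\equiv 5\pmod 6$ exactly when $n\equiv 0\pmod 6$) reproduces the two displayed cases. For the lower bound you correctly identify that matching constructions for every residue class --- not just the SQS cases $n\equiv 2,4\pmod 6$ --- are the actual content of the cited completion, and you defer to it rather than reprove it; since the paper does exactly the same (it cites the result and never uses more than the stated formula), this is an acceptable, though not self-contained, treatment. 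The only thing to flag is that your sketch should not be mistaken for a proof of the hard direction: establishing that the leave cannot be forced any larger in the non-SQS classes is precisely what \cite{bao2015completion} settles.
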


 A {\it group divisible design} ($K$-GDD) of order $n$ is a triple $(X,\G,\B)$, where $(X,\B)$ is a set system of order $n$, and $\G$ is a partition of $X$ into subsets, called {\it groups}, such that (1) if $B\in\B$ then $|B|\in K$, (2) every pair of elements of $X$ not contained in a group appears in exactly one block, and (3) every pair contained in a group does not appear in any block. The {\it type} of a GDD is defined as the {\it exponential notation} $g_1^{a_1}g_2^{a_2}\cdots g_s^{a_s}$ if there are $a_i$ groups of size $g_i,i=1,2,\ldots,s$. A $K$-GDD of type $1^n$ is known as a {\it pairwise balanced design} of order $n$, denoted by an $(n,K)$-PBD. In fact, a $K$-GDD is a $2$-$(n,K,1)$ packing, and an $(n,K)$-PBD is a $2$-$(n,K,1)$ packing where every pair of $X$ occurs in exactly one block.

\begin{lemma}\cite{ge2007group}\label{3-GDD}
  The necessary and sufficient conditions for the existence of a 3-GDD of type $3^u$ with $u\geq 3$ is $u\equiv 1\pmod{2}$.
\end{lemma}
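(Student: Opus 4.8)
The plan is to prove the two directions separately: necessity by a replication‑number count, and sufficiency by weight‑$3$ inflation of suitable master designs, with a handful of small direct constructions as ingredients. It is convenient to read a $3$‑GDD of type $3^u$ as a decomposition of the complete $u$‑partite graph with parts of size $3$ into triangles, but I will argue directly with the design.

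\emph{Necessity.} Fix a point $x$ in its group $G$ and count the blocks through $x$. Since two points of the same group never lie in a common block, every block containing $x$ contains exactly two further points, both outside $G$; conversely each of the $3(u-1)$ points outside $G$ is paired with $x$ in exactly one block. Hence the number of blocks through $x$ equals $r=\tfrac{3(u-1)}{2}$, which forces $3(u-1)$, and therefore $u-1$, to be even. Thus $u\equiv 1\pmod 2$ is necessary. The only other divisibility requirement, that the total number of blocks $\tfrac{3u(u-1)}{2}$ be an integer, holds automatically because $u(u-1)$ is always even, so no further constraint arises.

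\emph{Sufficiency.} I would construct the designs by inflation. The basic ingredient is a $3$‑GDD of type $3^3$: this is exactly a transversal design $\mathrm{TD}(3,3)$, equivalently a Latin square of order $3$, which certainly exists (and already settles $u=3$). Now apply Wilson's Fundamental Construction to a master set system on $u$ points in which every point is given weight $3$ and every block $B$ is replaced by a $3$‑GDD of type $3^{|B|}$; the output is a $3$‑GDD of type $3^u$, the singleton master groups becoming the groups of size $3$. For $u\equiv 1,3\pmod 6$ I take the master to be a Steiner triple system STS$(u)$ (a $3$‑GDD of type $1^u$), whose existence in these residue classes is classical; all its blocks have size $3$, so only the ingredient $3$‑GDD of type $3^3$ is needed.

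\emph{Main obstacle.} The crux is $u\equiv 5\pmod 6$, where no STS$(u)$ exists. Here I would instead take as master a pairwise balanced design with block sizes in $\{3,5\}$, a $(u,\{3,5\})$‑PBD (the trivial single block handling $u=5$); inflating its blocks of sizes $3$ and $5$ by $3$‑GDDs of type $3^3$ and $3^5$ again yields a $3$‑GDD of type $3^u$. This reduces the problem to one extra small design, a $3$‑GDD of type $3^5$ on $15$ points, which I would build directly by a cyclic difference family over $\bbZ_{15}$ with groups the cosets of $\{0,5,10\}$: one checks that the base triples $\{0,1,4\}$ and $\{0,2,9\}$, developed under $\bbZ_{15}$, have pairwise differences exhausting the six nonzero non‑group difference classes exactly once, so every cross‑group pair is covered exactly once while no within‑group pair ($\pm5$) ever appears. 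Verifying the existence of the $(u,\{3,5\})$‑PBDs and pinning down this single direct construction is where the real work lies; everything else is routine bookkeeping inside the inflation framework.
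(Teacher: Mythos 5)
The paper does not prove this lemma at all --- it is quoted from \cite{ge2007group} (it is Hanani's classical existence theorem for $3$-GDDs, specialized to type $3^u$), so there is no internal proof to compare against. Judged on its own terms, your argument follows the standard route and is essentially sound. The necessity count is complete and correct: $r=3(u-1)/2$ forces $u$ odd. For sufficiency, the inflation scheme (weight $3$ on a master STS$(u)$ for $u\equiv 1,3\pmod 6$, and on a $(u,\{3,5\})$-PBD for $u\equiv 5\pmod 6$, with ingredient $3$-GDDs of types $3^3$ and $3^5$) is exactly how this result is usually proved. Your direct construction of the type-$3^5$ ingredient checks out: over $\bbZ_{15}$ with groups the cosets of $\{0,5,10\}$, the base triples $\{0,1,4\}$ and $\{0,2,9\}$ yield the difference classes $\pm\{1,3,4\}$ and $\pm\{2,7,6\}$ respectively, which together cover each of the six non-group classes exactly once and avoid $\pm 5$.

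The one genuine gap is the ingredient you yourself flag: the existence of a $(u,\{3,5\})$-PBD for every $u\equiv 5\pmod 6$. This is true (one can take a $2$-$(u,3,1)$ packing whose leave is a $K_5$, equivalently a $(u,\{3,5^*\})$-PBD with exactly one block of size $5$; the arithmetic works since $\binom{u}{2}\equiv 10\pmod 3$ and all degrees are even), but it is itself a nontrivial classical existence theorem of the same flavour as the lemma being proved, so asserting it without proof means you have reduced one black box to another rather than eliminated it. To make the argument self-contained you would need either to prove the PBD existence (e.g.\ by a recursive $\{3,5\}$-closure argument) or to handle $u\equiv 5\pmod 6$ directly, say by a difference-family construction over $\bbZ_{3u}$ generalizing your $u=5$ example. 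Everything else in your write-up is correct.
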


\begin{lemma}\cite{wei2014group}\label{GDD}
  Let $u\geq 4$ and $m> 0$. For each $g\in\{2$, $6$, $7$, $9$, $12$, $15$, $24$, $27$, $36\}$, there exists a $4$-GDD of type $g^um^1$ if and only if $m\leq g(u-1)/2,gu\equiv 0\pmod{3},g(u-1)+m\equiv 0 \pmod{3}$ and $\binom{gu+m}{2}-u\binom{g}{2}-\binom{m}{2}\equiv 0\pmod{6}$ except for $(g,u,m)=(2,6,5)$, and except possibly for $(g,u,m)\in\{(2,33,23), (2,33,29), (2,39,35), (6,13,27), (6,13,33), (6,17,39), (6,19,45), (6,19,51), (6,23,63)\}$, and $m\geq0$ if  $g\in\{6$, $12$, $24$, $36\}$.
\end{lemma}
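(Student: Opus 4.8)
The plan is to prove the necessity of the four stated conditions by elementary counting, and then to obtain sufficiency through the standard recursive machinery of design theory, reducing everything to a finite stock of small ingredients that must be built by hand.

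\emph{Necessity.} Write $n=gu+m$ and let $(X,\G,\B)$ be a hypothetical $4$-GDD of type $g^um^1$. Each block covers $\binom{4}{2}=6$ pairs, and the pairs that must be covered are precisely those not lying inside a group, so $|\B|=\left(\binom{gu+m}{2}-u\binom{g}{2}-\binom{m}{2}\right)/6$, which forces the $\pmod{6}$ divisibility condition. For the two $\pmod 3$ conditions, fix a point $p$ lying in a group of size $s$: since $p$ must be paired exactly once with each of the $n-s$ points outside its group, and every block through $p$ covers exactly three such pairs, the replication number of $p$ is $(n-s)/3$, which must be an integer. Taking $s=m$ gives $gu\equiv 0\pmod 3$ and $s=g$ gives $g(u-1)+m\equiv 0\pmod 3$. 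Finally, to get $m\le g(u-1)/2$, fix a point $p$ in a group of size $g$: each of the $m$ points of the $m$-group is paired with $p$, and since any block contains at most one point of the $m$-group, these $m$ pairings lie in $m$ distinct blocks through $p$; as $p$ lies in only $(g(u-1)+m)/3$ blocks in total, $m\le (g(u-1)+m)/3$, i.e. $2m\le g(u-1)$.

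\emph{Sufficiency.} For the converse I would take Wilson's Fundamental Construction as the principal engine: starting from a suitable master GDD or transversal design, assign integer weights to its points, replace each point by that many copies, and substitute each master block by an ingredient $4$-GDD (or $4$-transversal design) on the weighted block; the output is a $4$-GDD whose type is dictated by the weights. Weighting most groups uniformly by $g$ while reserving one truncated or separately weighted group to realize the $m^1$ part lets me assemble type $g^um^1$ from a small collection of ingredients analogous to the small designs recorded in Lemma~\ref{3-GDD} (but with block size four). Two complementary recursions extend the reach of this: \emph{filling in groups}, where a smaller $4$-GDD is embedded into each group of an auxiliary design relative to a common sub-GDD, and inflation by transversal designs. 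Together these push from a bounded set of base parameters to all admissible $(g,u,m)$ in each residue class, and the clause ``$m\ge 0$ if $g\in\{6,12,24,36\}$'' is absorbed by noting that for those $g$ the case $m=0$ is just the existence of a $4$-GDD of type $g^u$, obtained from the same machinery with one fewer group.

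\emph{Base cases and main obstacle.} The recursions only reduce the claim to finitely many small orders, so the residual task is to settle these directly by explicit block lists, difference families, or computer search, and thereby to certify that $(g,u,m)=(2,6,5)$ is a genuine exception and that the listed triples are exactly those where the ingredients cannot yet be supplied. The hard part will be this sufficiency stage, and within it two regimes are genuinely delicate: parameters with $m$ near the extremal value $g(u-1)/2$, where the counting leaves almost no slack and the weightings must be chosen with great care, and the finite set of small and borderline cases, where essentially all the case analysis and ad hoc ingredient construction is concentrated and where the precise ``except possibly'' list is pinned down.
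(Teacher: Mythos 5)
This lemma is not proved in the paper at all: it is imported verbatim from the cited reference \cite{wei2014group} (with a few small cases such as the $4$-GDDs of types $7^4$, $9^4$, $9^5$, $27^4$, $39^46^1$ drawn from \cite{ge2007group}), so the paper's ``proof'' is the citation. Your necessity argument is correct and standard: the pair count gives the modulo~$6$ condition, the replication number $(n-s)/3$ of a point in a group of size $s$ gives the two modulo~$3$ conditions, and the observation that the $m$ cross pairs at a fixed point of a $g$-group lie in distinct blocks gives $2m\le g(u-1)$. That half is fine.

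The genuine gap is that the sufficiency direction --- which is the entire content of the lemma --- is only a plan, not a proof. Invoking Wilson's Fundamental Construction, filling in groups, and inflation by transversal designs correctly names the machinery, but none of it is executed: you do not specify the master designs, the weightings, or the ingredient $4$-GDDs, and you do not supply the finite stock of base cases to which the recursion reduces. Moreover, two parts of the statement cannot emerge from this outline even in principle as written: the claim that $(g,u,m)=(2,6,5)$ is a \emph{genuine} exception requires a nonexistence proof (an exhaustive or structural argument that no such GDD exists), which you do not sketch, and the precise ``except possibly'' list is exactly the set of parameters where the recursive machinery fails and no direct construction is known --- certifying that list requires actually running the recursion and recording where it breaks down. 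As it stands, the proposal establishes the easy implication and correctly identifies where the difficulty lies, but it does not prove the lemma; in the context of this paper the honest course is to cite \cite{wei2014group}, as the authors do.
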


For convenience of later use, we derive the existence of several classes  of explicit GDDs from Lemma~\ref{GDD} as follows, except for the $4$-GDDs of type $7^4$, $9^4$, $9^5$,  $27^4$ and $39^46^1$, in which case they are given in \cite{ge2007group}.

\begin{corollary}\label{coro_4GDD}
  \begin{itemize}
    \item[(1)] There exists a $4$-GDD of type $2^um^1$ for each $u\geq 6, u\equiv 0\pmod{3}$ and $m\equiv 2\pmod{3}$ with $2\leq m\leq u-1$ except for $(u,m)=(6,5)$, and except possibly for $(u,m)\in\{(33,23),(33,29),(39,35)\}$.
    \item[(2)] A $4$-GDD of type $12^um^1$ exists if and only if either $u=3$ and $m=12$, or $u\geq 4$ and $m\equiv 0\pmod{3}$ with $0\leq m\leq 6(u-1)$.
    \item[(3)] A $4$-GDD of type $15^um^1$ exists if and only if either $u\equiv 0\pmod{4}$ and $m\equiv 0\pmod{3}$, $0\leq m\leq (15u-18)/2$; or $u\equiv 1\pmod{4}$ and $m\equiv 0\pmod{6}$, $0\leq m\leq (15u-15)/2$; or $u\equiv 3\pmod{4}$ and $m\equiv 3\pmod{6}$, $0< m\leq (15u-15)/2$.
    \item[(4)] For $u\geq 4$, a $4$-GDD of type $24^um^1$ exists if and only if $m\equiv 0\pmod{3}$ and $0\leq m\leq 12(u-1)$.
    \item[(5)] For $u\geq 4$, a $4$-GDD of type $36^um^1$ exists if and only if $m\equiv 0\pmod{3}$ and $0\leq m\leq 18(u-1)$.
    \item[(6)] There exists $4$-GDDs of types $6^7$, $6^{15}$, $6^{11}30^1$, $6^{12}30^1$, $7^4$, $7^{12}10^1$, $9^4$, $9^46^1$, $9^5$, $9^56^1$, $27^4$, $27^49^1$, $27^5$, $39^46^1$.
  \end{itemize}
\end{corollary}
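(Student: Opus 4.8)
The plan is to obtain every statement of the corollary by specializing Lemma~\ref{GDD} to each admissible group size $g\in\{2,12,15,24,36\}$ and simplifying its four numerical conditions; since the lemma is itself an \emph{if and only if}, each specialization automatically delivers the iff assertions in items (2)--(5) (including the nonexistence conclusions outside the stated ranges) as well as the sufficiency in item (1). The only condition that is not already a transparent congruence is the divisibility-by-$6$ condition, so the first step is to reduce it. A direct expansion gives the identity
\[
\binom{gu+m}{2}-u\binom{g}{2}-\binom{m}{2}=\frac{gu\bigl(g(u-1)+2m\bigr)}{2},
\]
so that condition (4) of Lemma~\ref{GDD} is equivalent to $gu\bigl(g(u-1)+2m\bigr)\equiv 0\pmod{12}$. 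With this in hand each family reduces to a short modular computation.

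For $g\in\{12,24,36\}$, all of which are divisible by $12$, the conditions $gu\equiv 0\pmod 3$ and $gu(g(u-1)+2m)\equiv 0\pmod{12}$ hold automatically, while $g(u-1)+m\equiv 0\pmod 3$ collapses to $m\equiv 0\pmod 3$; as $g$ is even the bound $m\le g(u-1)/2$ is already an integer constraint. This yields items (4), (5), and the $u\ge 4$ branch of (2) verbatim. For $g=2$, condition (2) forces $u\equiv 0\pmod 3$, whereupon condition (3) forces $m\equiv 2\pmod 3$; substituting $u\equiv 0\pmod 3$ into the reduced form $2u(u+m-1)\equiv 0\pmod 6$ shows condition (4) is automatic, giving item (1), with the genuine and the possible exceptions inherited directly from Lemma~\ref{GDD}.

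The delicate case, which I expect to be the main obstacle, is $g=15$: here $15$ is odd and $\equiv 3\pmod{12}$, so condition (4) is no longer free. Reducing $15u(15(u-1)+2m)\equiv 0\pmod{12}$ modulo $4$ (the modulo-$3$ part being automatic since $3\mid 15$) produces a parity constraint coupling $u\bmod 4$ with $m\bmod 2$. Carrying out the four subcases shows that $u\equiv 2\pmod 4$ is impossible (so the corollary correctly omits it), $u\equiv 0\pmod 4$ imposes no extra parity and keeps $m\equiv 0\pmod 3$, $u\equiv 1\pmod 4$ forces $m$ even hence $m\equiv 0\pmod 6$, and $u\equiv 3\pmod 4$ forces $m$ odd hence $m\equiv 3\pmod 6$. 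The remaining care is bookkeeping: one must convert the half-integer bound $m\le 15(u-1)/2$, intersected with these congruences, into the stated endpoints $(15u-18)/2$ for $u\equiv 0$ and $(15u-15)/2$ for $u\equiv 1,3\pmod 4$, checking in each case that the displayed value is an integer in the correct residue class and is the largest admissible one. This establishes item (3).

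Finally I treat the boundary and explicit types. The value $m=0$ is permitted by Lemma~\ref{GDD} exactly for $g\in\{6,12,24,36\}$, which covers the uniform types $6^7,6^{15}$ and the $m=0$ endpoints in (2), (4), (5); the remaining explicit types with $m>0$ and $u\ge 4$ (such as $6^{11}30^1,6^{12}30^1,7^{12}10^1,9^46^1,9^56^1,27^49^1$) are verified one at a time against the four conditions, and $27^5$ is obtained by taking the distinguished group to have size $m=g=27$. For $g=12,u=3$ a counting argument is needed: on four groups every block of a $4$-GDD is a transversal, so the number of blocks equals the product of any two group sizes, forcing all pairwise products equal and hence $m=12$; the resulting type $12^4$ exists as a transversal design $\mathrm{TD}(4,12)$, equivalently two MOLS of order $12$. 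The five types $7^4,9^4,9^5,27^4$ and $39^46^1$ lie outside the range of Lemma~\ref{GDD} (they would require $m=0$ with $g\notin\{6,12,24,36\}$, or $u=3$), and are supplied instead from \cite{ge2007group}.
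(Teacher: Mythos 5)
Your proposal is correct and follows essentially the same route as the paper, which simply specializes Lemma~\ref{GDD} to each admissible group size and cites \cite{ge2007group} for the types $7^4$, $9^4$, $9^5$, $27^4$ and $39^46^1$; your explicit modular reductions (especially the $g=15$ case analysis) and the transversal-counting argument for the $u=3$ branch of item (2) merely supply details the paper leaves implicit. One small slip in your final paragraph: $39^46^1$ falls outside Lemma~\ref{GDD} because $g=39$ is not among its admissible group sizes, not because of $m=0$ or $u=3$.
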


In the sequel of this paper, all mentioned $4$-GDDs can be found in Corollary~\ref{coro_4GDD} if not specified.

\subsection{The connections between $\ell_1$-metric codes and packings}\label{sec:iic}

We first consider the connection for a binary code, i.e., a set of vectors in $I_2^{n}$. There is a canonical one-to-one correspondence between all vectors $\textbf{u}$ in $I_2^n$ and all subsets $supp(\textbf{u})$  of $[n]$, so a binary code $\C\subset I_2^n$ corresponds to a set system $(X,\{supp(\textbf{u}):\textbf{u}\in \C\})$, where $X=[n]$. For two distinct codewords $\textbf{u},\textbf{v}\in I_2^n$, their distance  equals  $|supp(\textbf{u})\vartriangle supp(\textbf{v})|$. Suppose that $\C$ is an $(n,2t,w)_2$ code, then any two codewords $\textbf{u},\textbf{v}$ satisfy $d(\textbf{u},\textbf{v})\geq 2t$, i.e., $|supp(\textbf{u})\vartriangle supp(\textbf{v})|\geq 2t$, which means their intersection is at most $w-t$.  This is to say that any $(w-t+1)$-subset of $X$ occurs in at most one block of $\{supp(\textbf{u}):\textbf{u}\in \C\}$,  thus $(X,\{supp(\textbf{u}):\textbf{u}\in \C\})$ is a $(w-t+1)$-$(n,w,1)$ packing.

Next we  describe the similar connection for  $q$-ary codes. Note that the distance between any two $q$-ary codewords $\textbf{u}$ and $\textbf{v}$ can be expressed as follows,
\begin{equation}\label{eqdis}
d(\textbf{u},\textbf{v})=2w-2\times\sum_{x\in supp(\textbf{u})\cap supp(\textbf{v})}\min\{\textbf{u}_x,\textbf{v}_x\}.
\end{equation}
When $q=2$, the right hand side reduces to   $|supp(\textbf{u})\vartriangle supp(\textbf{v})|$. Similar to the binary case, to get the same distance $2t$, it is necessary that the intersection of $supp(\textbf{u})$ and $supp(\textbf{v})$ can not exceed $w-t$. So, we have a universal necessary condition, which we state below and call it the \textbf{UNC} condition.

 \vspace{0.2cm}
 \emph{\textbf{UNC}: If $\C$ is an $(n,2t,w)_q$ code, then the set system $(X,\{supp(\textbf{u}):\textbf{u}\in \C, |supp(\textbf{u})|\geq \tau\})$ is a $\tau$-$(n,\{w,w-1,\ldots,\tau\},1)$ packing, where $\tau \triangleq w-t+1$.}\vspace{0.2cm}\\
 Note that the supports may be the same for two distinct codewords, but this can only occur when both supports are of size less than $w-t+1$, which we do not include in the packing.

 Further, by Eq (\ref{eqdis}), the minimum value between  nonzero entries $\textbf{u}_x$ and $\textbf{v}_x$ can not be big if the intersection size, $|supp(\textbf{u})\cap supp(\textbf{v})|$ closes to $w-t$. For example, if the intersection size is exactly $w-t$, then each minimum value in the summation must be one. If the other extremal case happens, that is, the intersection size is just one, then the minimum value could be any value upper bounded by $w-t$. This is also a key observation when we deduce the necessary conditions for a code with certain distance.
 To indicate different nonzero entries, we associate a $q$-ary codeword $\textbf{u}$ with a subset $\phi(\textbf{u}):=\{(x,i):x\in\text{supp}(\textbf{u})\text{ and }\textbf{u}_x=i\}\subset X\times [q-1]$, where $X=[n]$. To simplify the notation, we will write $x_i$ in $\phi(\textbf{u})$ instead of $(x,i)$. It is difficult to characterize a general condition of the set system $(X\times [q-1],\{\phi(\textbf{u}):\textbf{u}\in \C\})$ to be an $(n,2t,w)_q$ code, since codes with different $q$, $t$ or $w$ may result in very different requirements.
By abuse of notation we sometimes do not distinguish between $\textbf{u}$ and $\phi(\textbf{u})$, since they refer to equivalent objects. For example, we have a $(6,6,4)_3$  code $\C\subset I_3^n$ with four codewords $210100$, $021010$, $002101$ and $100012$. Equivalently, we can describe them as $\{1_2,2_1,4_1\}$, $\{2_2,3_1,5_1\}$, $\{3_2,4_1,6_1\}$, $\{6_2,5_1,1_1\}$, which are subsets of $[6]\times [2]$. In the sequel of this paper, besides the conventional set $[n]$, we also assume that $X$ is an ordered set for some $X\neq [n]$.

%

When constructing certain codes in the following sections, we usually employ some group action on the set of positions of the code. So it is convenient for us to use some additive group like $\bbZ_n$ instead of the conventional set $[n]$, and give elements of $\bbZ_n$ a natural order. For example,   we have a $(4,4,3)_3$  code $\C\subset I_3^4$ with four codewords $1200$, $0120$, $0012$ and $2001$, or equivalently, we have a $(4,4,3)_3$  code $\C\subset \bbZ_4 \times [2]$ with codewords $\{0_1,1_2\}$, $\{1_1,2_2\}$, $\{2_1,3_2\}$, $\{3_1,0_2\}$. By the set notations, we see that $\C=\{\{(0+i)_1,(1+i)_2\}:i\in\bbZ_4\}$, that is,  $\C$ is obtained from a codeword $\{0_1,1_2\}$ by a group action $\bbZ_4$ on the set of positions.
We call such a codeword $\{0_1,1_2\}$ a {\it base codeword (or a base block)}. Motivated by this example, when we try to find certain code, we only need to find a base codeword (or a set of base codewords) with extra conditions so that it could be developed to the desired code.  This is a very common method in combinatorial design theory to find small designs by computer search, and will be very helpful for us to find small codes. To save space, when we write down the code, we only list the base codewords and the employed group action instead of listing all codewords. Note that the group action employed usually acts on the set of positions of the codewords, but leaves the nonzero entries fixed.

%
%

\section{CWCs  over Non-negative Integers}\label{cwcz}

In this section, we consider codes with constant weight $w\leq 4$ over $\bbZ_{\geq 0}$ and determine the exact value of $A(n,d,w)$ for $w=3,4$ and all distances $d=2t$ between  $4$ and $2w-2$. We can choose a small integer $q\geq w+1$, then the code of constant weight $w$ over $\bbZ_{\geq 0}$ can be viewed as a $q$-ary code. As we observed in Section~\ref{sec:iic}, the set system $(X,\{supp(\textbf{u}):\textbf{u}\in \C, |supp(\textbf{u})|\geq \tau\})$ is a $\tau$-$(|X|,\{w,w-1,\ldots,\tau\},1)$ packing, and  the minimum value between nonzero entries in the same position can not be big if the intersection size closes to $w-t$.

For weight three, there are codewords of three different types  $1^3$, $1^12^1$, and  $3^1$. Since $w=3$, we only need to consider $d=4$, then $t=2$ and $\tau=2$. The following Lemma is obtained by observing the limitation of distance.

\begin{lemma}\label{lem3_1}
  A code $\C\subset \bbZ_n\times \bbZ_{\geq 0}$ consisting of codewords of types $1^3,1^12^1$ and $3^1$ is an $(n,4,3)$ code if and only if the following conditions are satisfied:
\begin{itemize}
  \item[(1)] The collection of subsets $supp(\textbf{u})\subset \bbZ_n$, for all codewords $\textbf{u}$  of types $1^3$ or $1^12^1$, forms a $2$-$(n,\{2,3\},1)$ packing.
  \item[(2)] For any two codewords $\textbf{u},\textbf{v}$, if $i\in supp(\textbf{u})\cap supp(\textbf{v})$, then $\min\{\textbf{u}_i,\textbf{v}_i\}=1$.
\end{itemize}
\end{lemma}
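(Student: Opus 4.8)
The plan is to collapse the entire biconditional onto a single distance identity obtained by specializing Eq.~(\ref{eqdis}) to $w=3$. For a pair of distinct codewords write $S_{\textbf{u},\textbf{v}} := \sum_{x\in supp(\textbf{u})\cap supp(\textbf{v})}\min\{\textbf{u}_x,\textbf{v}_x\}$; then Eq.~(\ref{eqdis}) reads $d(\textbf{u},\textbf{v}) = 6 - 2S_{\textbf{u},\textbf{v}}$, so $\C$ being an $(n,4,3)$ code is \emph{exactly} the assertion that $S_{\textbf{u},\textbf{v}}\leq 1$ for every pair of distinct codewords. The whole proof is then a translation of this inequality into conditions (1) and (2). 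The one elementary fact I would use throughout is that each summand $\min\{\textbf{u}_x,\textbf{v}_x\}$ is a \emph{positive} integer on the common support, so $S_{\textbf{u},\textbf{v}}\geq |supp(\textbf{u})\cap supp(\textbf{v})|$, with equality precisely when every common coordinate has minimum equal to $1$.

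For the forward direction I would assume $S_{\textbf{u},\textbf{v}}\leq 1$ for all distinct pairs and read off the two conditions. Condition (2) is immediate: if $i\in supp(\textbf{u})\cap supp(\textbf{v})$ then $1\leq \min\{\textbf{u}_i,\textbf{v}_i\}\leq S_{\textbf{u},\textbf{v}}\leq 1$, forcing $\min\{\textbf{u}_i,\textbf{v}_i\}=1$. For condition (1), note that codewords of types $1^3$ and $1^12^1$ have supports of sizes $3$ and $2$, so all blocks have sizes in $\{2,3\}$; and if two such supports of distinct codewords shared two points, the bound $S_{\textbf{u},\textbf{v}}\geq |supp(\textbf{u})\cap supp(\textbf{v})|\geq 2$ would contradict $S_{\textbf{u},\textbf{v}}\leq 1$, so every $2$-subset lies in at most one block and the collection is a $2$-$(n,\{2,3\},1)$ packing.

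For the converse I would assume (1) and (2) and bound $S_{\textbf{u},\textbf{v}}$ for an arbitrary pair of distinct codewords. By (2) every common coordinate contributes exactly $1$, so $S_{\textbf{u},\textbf{v}} = |supp(\textbf{u})\cap supp(\textbf{v})|$, and it suffices to show this intersection has at most one point. I would split on the types: if at least one codeword has type $3^1$ its support is a singleton and the intersection is trivially $\leq 1$; if both have type $1^3$ or $1^12^1$, their supports are blocks of the packing, and the packing property forbids two points from lying in two distinct blocks, giving $|supp(\textbf{u})\cap supp(\textbf{v})|\leq 1$. In every case $S_{\textbf{u},\textbf{v}}\leq 1$, hence $d(\textbf{u},\textbf{v})=6-2S_{\textbf{u},\textbf{v}}\geq 4$.

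The only point needing care — a subtlety rather than a genuine obstacle — is the treatment of distinct codewords sharing a support, such as the two type-$1^12^1$ words $(1,2)$ and $(2,1)$ on a common pair $\{i,j\}$: here condition (2) holds yet the distance is only $2$. This case must be excluded by condition (1), which forces the collection of supports to be read as a \emph{multiset}, so that a repeated block already places a $2$-subset in two blocks and violates the packing. I would state this multiset reading explicitly at the outset, so that the packing property genuinely yields both $supp(\textbf{u})\neq supp(\textbf{v})$ and $|supp(\textbf{u})\cap supp(\textbf{v})|\leq 1$ for distinct codewords of these two types, closing the gap in the converse.
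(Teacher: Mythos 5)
Your proof is correct and takes essentially the same route as the paper's: necessity of (1) via the support-intersection bound (the \textbf{UNC} condition), necessity of (2) via Eq.~(\ref{eqdis}), and sufficiency by casing on $|supp(\textbf{u})\cap supp(\textbf{v})|$, using the identity $d(\textbf{u},\textbf{v})=6-2S_{\textbf{u},\textbf{v}}$. Your explicit treatment of the repeated-support edge case (two distinct type-$1^12^1$ codewords on the same pair, where (2) holds but the distance is $2$) by insisting that the collection in (1) be read as a multiset is a genuine point of care that the paper's terse verification leaves implicit, and it is needed for the converse to be airtight.
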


\begin{proof} The necessity of (1) is from the \textbf{UNC} condition, and of (2) is by the the distance formula in Eq. (\ref{eqdis}). The sufficiency is from direct verification. In fact, by (1), any two codewords have at most one common position in their supports. If two codewords have disjoint supports, then the distance is six. If they have one common position in supports, then by Eq. (\ref{eqdis}) and condition (2), the distance is four since the minimum value is $1$.
\end{proof}

By Lemma~\ref{lem3_1}, we can deduce an upper bound of $A(n,4,3)$ and construct an $(n,4,3)$ code achieving this bound by packings.

\begin{theorem}
  $A(n,4,3)=D(n,3,2)+n.$
\end{theorem}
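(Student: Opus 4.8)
The plan is to prove the equality by matching lower and upper bounds, both routed through Lemma~\ref{lem3_1}, which characterizes $(n,4,3)$ codes by its two conditions on the three admissible types $1^3$, $1^12^1$, $3^1$. Throughout I write $a$, $b$, $c$ for the numbers of codewords of types $1^3$, $1^12^1$, $3^1$ respectively in a given code.

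For the lower bound, I would start from an optimal $2$-$(n,3,1)$ packing $(\bbZ_n,\B)$ with $|\B|=D(n,3,2)$ blocks, turn each triple $B\in\B$ into a type-$1^3$ codeword with support $B$, and then adjoin the $n$ codewords of type $3^1$ placing the entry $3$ at each of the $n$ positions in turn. Condition (1) of Lemma~\ref{lem3_1} holds because the only size-$2$ or $3$ supports are the triples of $\B$, which form a packing by hypothesis; condition (2) holds because every shared position between two codewords involves at least one entry equal to $1$ (a triple carries only $1$'s, and two distinct singletons never share a position), so the minimum there is $1$. This yields a valid code of size $D(n,3,2)+n$.

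The heart of the argument is the matching upper bound. Given any $(n,4,3)$ code, the triples alone pairwise meet in at most one point (a consequence of condition (1)), so they form a $2$-$(n,3,1)$ packing and hence $a\le D(n,3,2)$. The new idea is to bound $b+c$ by $n$. For each type-$1^12^1$ codeword record the unique position carrying its entry $2$, its \emph{$2$-position}. Two distinct such codewords cannot share a $2$-position, for then they would agree with value $2$ there, giving minimum $2$ and violating condition (2); thus the $b$ codewords of type $1^12^1$ occupy $b$ \emph{distinct} $2$-positions. Moreover a type-$3^1$ codeword placed at a $2$-position would meet a type-$1^12^1$ codeword there with minimum $\min\{3,2\}=2$, again violating condition (2), so the $c$ singletons must sit among the remaining $n-b$ positions, forcing $c\le n-b$. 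Combining, $a+b+c\le D(n,3,2)+b+(n-b)=D(n,3,2)+n$.

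I expect the only subtle step to be the trade-off $b+c\le n$: one must notice that each type-$1^12^1$ codeword consumes exactly one position (its $2$-position) that is then unavailable to the singletons, so pairs and singletons compete for the $n$ positions on equal footing, while the triples are governed separately by the packing bound. Everything else is a direct verification against the two conditions of Lemma~\ref{lem3_1}.
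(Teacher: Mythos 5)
Your proposal is correct and follows essentially the same route as the paper: the upper bound via $a\le D(n,3,2)$ from the packing condition and $b+c\le n$ from the fact that no position can carry an entry $\ge 2$ in two different codewords, and the lower bound via an optimal $2$-$(n,3,1)$ packing together with $n$ singleton codewords of type $3^1$. Your account of the $b+c\le n$ step is just a more explicit rendering of the paper's one-line counting of occurrences of the symbols $2$ and $3$.
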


\begin{proof}
  Let $x,y$ and $z$ be the number of codewords  of types $1^3,1^12^1$ and $3^1$, respectively. By condition (1) of Lemma~\ref{lem3_1}, we have \[x\leq D(n,3,2),\] since codewords of type $1^3$ form a $2$-$(n,3,1)$ packing. By  condition (2) of Lemma~\ref{lem3_1}, there does not exist a pair of codewords $\textbf{u},\textbf{v}$ with  $\textbf{u}_i\geq 2$ and $\textbf{v}_i\geq 2$ simultaneously for some position $i$. Thus we have \[y+z\leq n\]by counting the occurrences of symbols $2$ and $3$ in all codewords. So the upper bound $A(n,4,3)=x+y+z\leq D(n,3,2)+n$ follows.
The lower bound is achieved by the code consisting of all binary codewords of type $1^3$ obtained from an optimal $2$-$(n,3,1)$ packing, and $n$ codewords of type $3^1$ with disjoint supports.
\end{proof}

For weight four, there are codewords of five different types  $1^4$, $1^22^1$, $1^13^1$, $2^2$, and $4^1$. We first consider distance $d=4$, that is $t=2$ and $\tau=3$. Similar to Lemma~\ref{lem3_1}, a code $\C\subset \bbZ_n\times \bbZ_{\geq 0}$ consisting of codewords of constant weight four is an $(n,4,4)$ code if and only if the following conditions are satisfied:
\begin{itemize}
  \item[(i)] The collection of subsets $supp(\textbf{u})\subset \bbZ_n$, for all codewords $\textbf{u}$  of types $1^4$ or $1^22^1$, forms a $3$-$(n,\{3,4\},1)$ packing.
  \item[(ii)] For any two codewords $\textbf{u},\textbf{v}$, if $i\in supp(\textbf{u})\cap supp(\textbf{v})$, then $\min\{\textbf{u}_i,\textbf{v}_i\}\leq 2$.
       \item[(iii)] For any two codewords $\textbf{u},\textbf{v}$, if  $\{i, i'\} \subset supp(\textbf{u})\cap supp(\textbf{v})$, then $\min\{\textbf{u}_i,\textbf{v}_i\}=\min\{\textbf{u}_{i'},\textbf{v}_{i'}\}=1$.
\end{itemize}
The necessary condition (i) is from the \textbf{UNC} condition, and (ii) and (iii) are from Eq.~(\ref{eqdis}). That is, if $|supp(\textbf{u})\cap supp(\textbf{v})|=1$, then the minimum value could be at most $2$. But if $|supp(\textbf{u})\cap supp(\textbf{v})|=2$, then both minimum values must be $1$. The intersection of supports for all pairs of codewords can not exceed two by Eq.~(\ref{eqdis}), and equivalently by condition (i).

By the above necessary and sufficient conditions, we can determine the size of the largest $(n,4,4)$ code.


\begin{theorem}\label{thmz1}
  $A(n,4,4)=D(n,4,3)+\frac{n(n-1)}{2}+n.$
\end{theorem}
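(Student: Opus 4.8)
The plan is to mimic the structure of the weight-three argument, deriving matching upper and lower bounds from the necessary and sufficient conditions (i), (ii), (iii). Let me denote by $x,y,z,p,r$ the number of codewords of types $1^4$, $1^22^1$, $1^13^1$, $2^2$, and $4^1$ respectively, so that $A(n,4,4)=x+y+z+p+r$. First I would handle the support-packing constraint. Condition (i) says the supports of codewords of types $1^4$ and $1^22^1$ form a $3$-$(n,\{3,4\},1)$ packing. Codewords of type $1^4$ have a $4$-element support and these supports pairwise intersect in at most two points, so they form a $3$-$(n,4,1)$ packing, giving $x\leq D(n,4,3)$. The codewords of type $1^22^1$ have a $3$-element support; each such support is a triple, and since every triple occurs in at most one block of the $3$-$(n,\{3,4\},1)$ packing, these triples are all distinct and also avoid every triple contained in a type-$1^4$ support.

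Next I would bound the remaining contributions. The key is to account for where the ``large'' entries (values $\geq 2$) sit. By condition (iii), if two codewords share two support positions then both shared minima equal $1$, which forces the entries at those positions to be manageable; by condition (ii) no position can carry an entry $\geq 2$ in two codewords whose shared minimum would exceed $2$. The cleanest route, paralleling the weight-three proof where $y+z\le n$ came from counting occurrences of the symbols $2$ and $3$, is to count occurrences of entries that are $\geq 2$. A codeword of type $1^22^1$ contributes one entry equal to $2$ (at one position), type $2^2$ contributes two entries equal to $2$, type $1^13^1$ contributes one entry equal to $3$, and type $4^1$ contributes one entry equal to $4$. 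I expect the relevant bookkeeping to yield $y+2p+z+r$ bounded in terms of $n$, while the type-$1^22^1$ supports being distinct triples contributes to a $\binom{n}{2}$-type count. The target bound $\frac{n(n-1)}{2}+n$ strongly suggests that the non-$1^4$ codewords should be organized so that each of the $\binom{n}{2}$ pairs and each of the $n$ singletons of $[n]$ is ``used'' at most once in a suitable sense: a pair $\{i,i'\}$ records codewords whose two large-entry positions or whose doubled structure sits on $\{i,i'\}$, and a singleton records codewords concentrated at one position. I would make this precise by assigning to each codeword not of type $1^4$ a canonical subset of $supp(\textbf u)$ of size at most two (the positions carrying its large entries) and arguing via (ii),(iii) that distinct such codewords receive distinct assigned subsets, yielding $y+z+p+r\leq \binom{n}{2}+n$.

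For the lower bound I would exhibit an explicit optimal code achieving $D(n,4,3)+\binom{n}{2}+n$. Take all type-$1^4$ codewords coming from an optimal $3$-$(n,4,1)$ packing, which contributes $D(n,4,3)$ codewords. Add one codeword of type $2^2$ for each of the $\binom{n}{2}$ pairs $\{i,i'\}\subset[n]$, namely the codeword with entry $2$ at positions $i$ and $i'$; add $n$ codewords of type $4^1$, one with entry $4$ at each position $i\in[n]$. I would then verify directly that this code satisfies (i), (ii), (iii): the type-$1^4$ supports already form the packing; any two distinct $2^2$-codewords share at most one position and the shared minimum there is $2$, consistent with (ii); a $2^2$-codeword and a $1^4$-codeword share at most two positions with minima equal to $1$ (since the $1^4$ entries are all $1$), consistent with (iii); and the $4^1$-codewords have singleton supports so pairwise distances are automatically large enough. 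This confirms the lower bound and hence equality.

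The main obstacle, and the step I would spend the most care on, is the upper-bound counting for the non-$1^4$ codewords: I must show rigorously that $y+z+p+r\leq\binom{n}{2}+n$ cannot be exceeded by mixing types in a clever way, and in particular that a codeword of type $1^22^1$ (which occupies a full triple in the packing and also a distinguished entry-$2$ position) does not let one ``double count'' against the pair-and-singleton budget. The delicate point is reconciling the triple-packing constraint (i) on the $1^22^1$ supports with the pair-counting argument; I expect to resolve this by showing that replacing any $1^22^1$ codeword by a $2^2$ or $4^1$ codeword never decreases the count while keeping conditions (i)--(iii), so an optimal code may be assumed to contain no $1^22^1$ or $1^13^1$ codewords, after which the bound $p\leq\binom{n}{2}$ and $r\leq n$ follows cleanly from condition (ii) applied to the entry-$2$ and entry-$4$ positions.
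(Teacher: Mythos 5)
Your lower-bound construction and the bound $x\leq D(n,4,3)$ coincide with the paper's and are fine. The gap is in your upper bound for the non-$1^4$ codewords. Your central claim --- that assigning to each such codeword the set of positions carrying its entries $\geq 2$ (a singleton for types $1^22^1$, $1^13^1$, $4^1$ and a pair for type $2^2$) yields \emph{distinct} subsets --- is false. For example, $\textbf{u}=(2,1,1,0,0,0,\ldots)$ and $\textbf{v}=(2,0,0,1,1,0,\ldots)$ satisfy $d(\textbf{u},\textbf{v})=8-2\min\{2,2\}=4$, so both may lie in an $(n,4,4)$ code, yet both are assigned the singleton $\{1\}$; likewise a type-$1^22^1$ codeword and a type-$4^1$ codeword may share their large-entry position (distance $8-2\cdot 2=4$). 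Condition (ii) only forbids two codewords from sharing a position at which \emph{both} entries are $\geq 3$, so it gives $z+r\leq n$ (in your notation, types $1^13^1$ and $4^1$), not $y+z+r\leq n$. Your fallback --- that any $1^22^1$ or $1^13^1$ codeword can be replaced by a $2^2$ or $4^1$ codeword without decreasing the size --- is exactly the nontrivial step and is asserted, not proved: the natural replacement candidates may already be present in the code or may conflict with other codewords (e.g.\ the $2^2$ codeword on $\{i,j\}$ conflicts with any codeword whose support contains both $i$ and $j$ with an entry $\geq 2$ at one of them), so an exchange argument would require a genuine case analysis that you have not supplied.

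The paper avoids this entirely with a different count. From condition (iii), the \emph{ordered} pairs $(i,i')$ with $\textbf{u}_i\geq 1$ and $\textbf{u}_{i'}\geq 2$ for some codeword $\textbf{u}$ must all be distinct: if two codewords produced the same ordered pair they would share the two positions $i,i'$ with minimum $\geq 2$ at $i'$. A type-$1^22^1$ codeword contributes two such ordered pairs, a type-$2^2$ codeword two, and a type-$1^13^1$ codeword one, giving $2y+2p+z\leq n(n-1)$ in your notation; combined with $z+r\leq n$ from condition (ii) this yields $y+p+z+r\leq \frac{n(n-1)}{2}+n$. Note how this correctly handles the configuration that breaks your injectivity claim: two $1^22^1$ codewords sharing their entry-$2$ position have disjoint entry-$1$ position sets, so their ordered pairs remain distinct. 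You would need to adopt this (or an equivalent) counting to close the argument.
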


\begin{proof}  Let $x,y,z,a$ and $b$ be the number of codewords  of types $1^4$, $1^22^1$, $2^2$, $1^13^1$ and $4^1$,  respectively.
  By conditions (i) and (ii), we have
  \[ x\leq D(n,4,3), \text{ and }a+b\leq n.\]
From condition (iii),  all ordered pairs $(i,i')$ satisfying $\textbf{u}_i\geq 1$ and $\textbf{u}_{i'}\geq 2$ for some codeword $\textbf{u}\in \C$  should be different. By counting such pairs, we have
  \begin{equation*}
    2y+2z+a\leq n(n-1).
  \end{equation*}
  Combining the above inequalities, we have that $A(n,4,4)=x+y+z+a+b\leq D(n,4,3)+\frac{n(n-1)}{2}+n$. Note that  every codeword of type $4^1$ or $2^2$ has distance at least four with any other codewords whose type are neither $1^22^1$ nor $1^13^1$, and this gives $\binom{n}{2}+n$ codewords. So the lower bound is achieved by the code consisting of all binary codewords of type $1^4$ obtained from an optimal 3-$(n,4,1)$ packing over $\bbZ_n$, and all different codewords of types $2^2$ and $4^1$.
%
\end{proof}

For  distance $d=6$, we have $t=3$ and $\tau=2$. By the \textbf{UNC} condition  and the the distance formula in Eq. (\ref{eqdis}),  a code $\C\subset \bbZ_n\times \bbZ_{\geq 0}$ of constant weight four is an $(n,6,4)$ code if and only if the following conditions are satisfied:
\begin{itemize}
  \item[(a)] The collection of subsets $supp(\textbf{u})\subset \bbZ_n$, for all codewords $\textbf{u}$  of types $1^4$, $1^22^1$,  $1^13^1$, or $2^2$, forms a $2$-$(n,\{2,3,4\},1)$ packing.
   \item[(b)]For any two codewords $\textbf{u},\textbf{v}$, if $i\in supp(\textbf{u})\cap supp(\textbf{v})$, then $\min\{\textbf{u}_i,\textbf{v}_i\}=1$.
\end{itemize}


\begin{theorem}\label{thmz2}
  $A(n,6,4)=D(n,4,2)+n.$
\end{theorem}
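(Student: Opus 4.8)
```latex
The plan is to mirror the structure of the proofs of the preceding two theorems: establish an upper bound by a counting argument keyed to the necessary conditions (a) and (b), and then match it with an explicit construction built from an optimal packing. Let me introduce variables $x,y,a,z,b$ for the number of codewords of types $1^4$, $1^22^1$, $1^13^1$, $2^2$, and $4^1$ respectively, and write $A(n,6,4)=x+y+a+z+b$.

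For the upper bound, the key is condition (a): the supports of codewords of types $1^4$, $1^22^1$, $1^13^1$, and $2^2$ together form a $2$-$(n,\{2,3,4\},1)$ packing, so every pair of points of $\bbZ_n$ is covered at most once by these supports. To squeeze this into $D(n,4,2)$ I would count edges (covered pairs): a type-$1^4$ or $1^22^1$ word contributes $\binom{4}{2}=6$ pairs, a type-$1^13^1$ or $2^2$ word contributes $\binom{2}{2}=1$ pair. The cleanest route, paralleling the previous proofs, is to observe that the type-$1^4$ supports alone form a $2$-$(n,4,1)$ packing, giving $x\le D(n,4,2)$, and then to bound the remaining types. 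Condition (b) forces that at any single position at most one codeword can have a nonzero entry exceeding $1$; counting occurrences of symbols $\ge 2$ across all codewords (each type-$4^1$ word uses one such position with value $4$, each type-$2^2$ word uses two positions with value $2$, each type-$1^13^1$ word uses one position with value $3$, each type-$1^22^1$ word uses one position with value $2$) yields a constraint of the form $y+2z+a+b\le n$. The main obstacle is combining these two inequalities so that the $y,z,a$ terms — which are double-counted between the packing bound and the symbol-count bound — are absorbed correctly, and I expect one must argue that any type other than $1^4$ can be traded for a type-$4^1$ or $2^2$ word at no loss, so that the extremal code uses only $1^4$, $4^1$, and $2^2$ words.

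For the lower bound, I would construct the code directly. Take an optimal $2$-$(n,4,1)$ packing on $\bbZ_n$ and turn each of its $D(n,4,2)$ blocks into a type-$1^4$ codeword; by Lemma-type condition (a) these have pairwise intersection at most one point, and condition (b) holds vacuously since all their entries are $1$, so any two of them are at distance $\ge 6$. Then adjoin $n$ codewords of type $4^1$ with disjoint singleton supports (one $4$ at each position). A type-$4^1$ word and any type-$1^4$ word intersect in at most one position where the minimum entry is $1$, giving distance $2\cdot 4 - 2\cdot 1 = 6$ by Eq.~(\ref{eqdis}); two type-$4^1$ words have disjoint supports and distance $8$. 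This gives $D(n,4,2)+n$ codewords, matching the bound.

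The step I expect to be genuinely delicate is verifying that the upper-bound counting is tight, i.e. that the two inequalities $x\le D(n,4,2)$ and $y+2z+a+b\le n$ cannot be simultaneously improved upon by a cleverer mixture of types. In the previous theorems the analogous ``extra'' codewords (types $2^2$, $4^1$) were free in the sense that they contributed to a separate counting budget; here the type-$1^22^1$, $1^13^1$, $2^2$ words contribute to both the packing budget (via their supports) and the high-symbol budget (via their entries $\ge 2$), so I would need a short exchange argument showing that replacing such a word by a pure type-$4^1$ or $2^2$ word never decreases the count while preserving both budgets. Once that reduction to the three ``cheap'' types is in hand, the bound $A(n,6,4)=x+b\le D(n,4,2)+n$ follows immediately, and the construction above shows equality.
```
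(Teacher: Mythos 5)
Your approach is the same as the paper's, and it is in fact complete as stated -- the ``genuinely delicate'' step you defer does not exist. The two inequalities you derive, $x\le D(n,4,2)$ and $y+2z+a+b\le n$, live on disjoint sets of variables: the packing bound constrains only the type-$1^4$ count $x$ (the $1^4$ supports alone form a $2$-$(n,4,1)$ packing, regardless of what other codewords are present), while the symbol-count bound constrains only $y,z,a,b$. Since $y+z+a+b\le y+2z+a+b\le n$, simply adding the two gives $x+y+z+a+b\le D(n,4,2)+n$ with no exchange argument and no reduction to ``cheap'' types; this one-line combination is exactly what the paper does. Your lower-bound construction (type-$1^4$ words from an optimal $2$-$(n,4,1)$ packing plus $n$ disjoint type-$4^1$ words) and its distance verification also match the paper exactly. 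Two minor slips worth noting: a type-$1^22^1$ word has support of size $3$ and so covers $\binom{3}{2}=3$ pairs, not $6$ (though you abandon that edge-counting route anyway), and your final display $A(n,6,4)=x+b$ should read $x+y+z+a+b$.
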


\begin{proof} Using the same notation as in the proof of Theorem~\ref{thmz1}, we have
  \begin{align*}
    & x\leq D(n,4,2), \text{ and }
    y+2z+a+b\leq n.
  \end{align*}  by conditions (a) and (b). Then $A(n,6,4)=x+y+z+a+b\leq D(n,4,2)+n$. The lower bound is achieved by the code consisting of all binary codewords of type $1^4$ obtained from an optimal 2-$(n,4,1)$ packing over $\bbZ_n$ and $n$ codewords of type $4^1$.\end{proof}

In the sequel of this paper, we only consider ternary constant-weight codes, and the basic idea is also using the optimal packings to construct optimal codes.

\section{Ternary CWCs of Weight Three}\label{cwc3}

In this section, we consider ternary codes with constant weight  three in $I_3^n$ and determine the exact value of $A_3(n,4,3)$.
Since the code is ternary, there are codewords of two different types $1^3$ and $1^12^1$.  For  distance $d=4$, we have $t=2$ and $\tau=2$. By the \textbf{UNC} condition  and the the distance formula in Eq. (\ref{eqdis}), a code $\C\subset \bbZ_n\times [2]$ of constant weight three  is an $(n,4,3)_3$ code if and only if the following conditions are satisfied:
\begin{itemize}
  \item[$(1')$] The collection of subsets $supp(\textbf{u})\subset \bbZ_n$, for all codewords $\textbf{u}\in\C$, forms a $2$-$(n,\{2,3\},1)$ packing.
  \item[$(2')$]  For any two codewords $\textbf{u},\textbf{v}$, if $i\in supp(\textbf{u})\cap supp(\textbf{v})$, then $\min\{\textbf{u}_i,\textbf{v}_i\}=1$.
\end{itemize}

\begin{lemma}\label{thm1}
  $A_3(n,4,3)\leq\left\lfloor\frac{n^2+3n}{6}\right\rfloor$.
\end{lemma}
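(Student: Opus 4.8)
The plan is to reduce the bound to a two-variable linear program fed by two independent counting arguments, one for each of the conditions $(1')$ and $(2')$. Let $x$ denote the number of codewords of type $1^3$ and $y$ the number of codewords of type $1^12^1$, so that $A_3(n,4,3)=x+y$ for an optimal code. I want two upper bounds relating $x$ and $y$, and then I maximize $x+y$.

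First I would exploit condition $(1')$. The supports of the $1^3$-codewords are triples and those of the $1^12^1$-codewords are pairs, and together they form a $2$-$(n,\{2,3\},1)$ packing, so every $2$-subset of $\bbZ_n$ lies in at most one support. Counting the pairs contained in the supports, each triple contributes $\binom{3}{2}=3$ distinct pairs and each pair-support contributes exactly one, and by the packing property all of these pairs are distinct; hence
\begin{equation*}
3x+y\leq \binom{n}{2}=\frac{n(n-1)}{2}.
\end{equation*}

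Next I would use condition $(2')$. Since the alphabet is ternary, the only nonzero entries are $1$ and $2$, so $\min\{\textbf{u}_i,\textbf{v}_i\}=1$ fails precisely when two distinct codewords both carry the entry $2$ at a common position $i$. Thus no coordinate of $\bbZ_n$ can serve as the $2$-position of two distinct codewords. Every codeword of type $1^12^1$ has exactly one such $2$-position while the $1^3$-codewords have none, which gives $y\leq n$. Finally I would combine the two inequalities: eliminating $x$ via the first constraint gives $x+y\leq \frac{n(n-1)}{6}+\frac{2y}{3}$, an increasing function of $y$, so the maximum over the feasible region is attained at $y=n$, yielding $x+y\leq \frac{n^2+3n}{6}$; since $A_3(n,4,3)$ is an integer the floor may be inserted.

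I do not expect a genuine obstacle here: both counting steps are routine double counts and the optimization is elementary. The one point requiring care is realizing that the two constraints must be combined rather than applied separately—the packing bound alone gives only $x+y\lesssim n^2/6$ with the wrong lower-order behaviour, and $y\leq n$ alone is far too weak—together with checking that the optimum of this linear program indeed sits at the vertex where both constraints are simultaneously tight.
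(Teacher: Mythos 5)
Your proposal is correct and follows exactly the paper's argument: the same two counting inequalities $3x+y\leq\binom{n}{2}$ (from the packing condition $(1')$) and $y\leq n$ (from condition $(2')$ on the symbol $2$), combined to give $x+y\leq\frac{n^2+3n}{6}$. The only difference is that you spell out the linear-programming step of eliminating $x$ and optimizing in $y$, which the paper leaves implicit.
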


\begin{proof}
  Let $x$ and $y$ be the number of codewords of types $1^3$ and $1^12^1$ respectively. By conditions $(1')$ and $(2')$, we have
  \begin{align*}
    & 3x+y\leq \binom{n}{2},\text{ and } y\leq n.
  \end{align*}
  The former inequality is by the definition of packings, that is, the number of pairs contained in all blocks can not exceed $\binom{n}{2}$. The latter one is by counting the occurrences of symbol $2$. Then $A_3(n,4,3)=x+y\leq\left\lfloor\frac{n^2+3n}{6}\right\rfloor$.\end{proof}



By the proof of Lemma~\ref{thm1}, it is possible for a code to achieve the upper bound when $y=n$ or $n-1$. Assume we have already found $n$ or $n-1$ codewords of type $1^12^1$ satisfying $(2')$,  then we need to find a $2$-$(n,3,1)$ packing such that the condition $(1')$ is satisfied. If the size of this $2$-$(n,3,1)$ packing is $\left\lfloor\frac{n^2+3n}{6}\right\rfloor-n$ or $\left\lfloor\frac{n^2+3n}{6}\right\rfloor-n+1$, respectively, then the upper bound in Lemma~\ref{thm1} can be achieved.



  \begin{theorem}\label{th1}
    $A_3(n,4,3)=\left\lfloor\frac{n^2+3n}{6}\right\rfloor$.
  \end{theorem}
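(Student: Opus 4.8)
The upper bound is already established in Lemma~\ref{thm1}, so the task is purely constructive: for every $n$, I must exhibit an $(n,4,3)_3$ code attaining $\lfloor\frac{n^2+3n}{6}\rfloor$ codewords. Guided by the remark following Lemma~\ref{thm1}, the strategy is to use the maximum feasible number of type-$1^12^1$ codewords (namely $y=n$ or $y=n-1$, one for each position carrying the symbol $2$) together with a $2$-$(n,3,1)$ packing of type-$1^3$ codewords, subject to the compatibility requirement $(2')$. Writing $x$ for the number of type-$1^3$ codewords, attaining the bound means I need $x+y=\lfloor\frac{n^2+3n}{6}\rfloor$, so I will pick $y$ as large as possible and then build a packing of size $x=\lfloor\frac{n^2+3n}{6}\rfloor-y$ whose blocks avoid the forbidden configuration dictated by $(2')$.

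The first step is to unpack $(2')$ into a concrete structural constraint on the packing. Each type-$1^12^1$ codeword is a pair $\{p_2,q_1\}$ for some positions $p,q$; condition $(2')$ says no type-$1^3$ codeword may use position $p$ with entry $2$ (automatic, since type $1^3$ has all entries $1$) and, more importantly, that the symbol-$2$ positions of distinct $1^12^1$ codewords must be distinct, which forces $y\le n$. The subtle point is the interaction between the $1^12^1$ codewords and the support-$2$ ``blocks'' $\{p,q\}$ they induce in the $2$-$(n,\{2,3\},1)$ packing of $(1')$: each $1^12^1$ codeword contributes one size-$2$ block, and these must be edge-disjoint from each other and from the triples. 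So I plan to reduce to finding a maximum $2$-$(n,3,1)$ packing (of size $D(n,3,2)$ when possible) and then ``attaching'' as many disjoint size-$2$ blocks as the leave graph permits, converting them into $1^12^1$ codewords by orienting each edge (choosing which endpoint gets the symbol $2$) so that the symbol-$2$ endpoints stay distinct across all chosen edges.

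The main work is therefore a case analysis on $n \bmod 6$, matching $\lfloor\frac{n^2+3n}{6}\rfloor - n$ (or $-n+1$) against $D(n,3,2)$ from Lemma~\ref{3packing}. Concretely, I would compute $\lfloor\frac{n^2+3n}{6}\rfloor$ against $D(n,3,2)+n$ in each residue class: for most classes the leave graph of an optimal Steiner-type packing is small enough that its edges can be oriented to supply the needed $1^12^1$ codewords, while for $n\equiv 5\pmod 6$ the leave is a $4$-cycle plus isolated vertices (Lemma~\ref{3packing}), whose edges I orient around the cycle so that the four symbol-$2$ endpoints are distinct. The residues where the two quantities differ by the floor correction are exactly where I expect to fall back on $y=n-1$ rather than $y=n$, so I must verify arithmetically that $x+y$ hits the floor in every class.

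\textbf{The main obstacle.} The hard part will be the orientation/compatibility bookkeeping rather than the counting: I must simultaneously ensure the size-$2$ blocks coming from $1^12^1$ codewords are edge-disjoint from the triples of the packing \emph{and} that their chosen symbol-$2$ endpoints form an injective assignment into the $n$ positions, all while the total count matches the floor exactly in each residue class. I anticipate that the cleanest route is an explicit cyclic (base-block) construction over $\bbZ_n$ in the style of the $(4,4,3)_3$ example in Section~\ref{sec:iic}, developing a small set of base codewords under the $\bbZ_n$-action so that both $(1')$ and $(2')$ hold by design, with a handful of small or exceptional lengths (likely the $n\equiv 5\pmod 6$ case and possibly $n\equiv 0,2\pmod 6$) handled by direct ad hoc constructions.
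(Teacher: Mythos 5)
There is a genuine gap, and it sits at the center of your plan: you propose to take a \emph{maximum} $2$-$(n,3,1)$ packing of size $D(n,3,2)$ and then harvest type-$1^12^1$ codewords from its leave graph, but the arithmetic shows this can never reach the bound. For $n\equiv 1,3\pmod 6$ the maximum packing is a Steiner triple system with an \emph{empty} leave, so you get no pair codewords at all and only $\frac{n(n-1)}{6}<\lfloor\frac{n^2+3n}{6}\rfloor$ codewords. For $n\equiv 5\pmod 6$ the leave is a single $4$-cycle (Lemma~\ref{3packing}), i.e.\ four edges, while the deficit $\lfloor\frac{n^2+3n}{6}\rfloor-D(n,3,2)$ grows linearly in $n$. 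Even in the even residues, where the leave of a maximum packing is a (near-)perfect matching with about $n/2$ edges, the deficit is about $5n/6$, so the leave is still too small. Your own first formulation ("pick $y$ as large as possible and then build a packing of size $x=\lfloor\frac{n^2+3n}{6}\rfloor-y$") is the right one, and had you carried out the comparison you announce against $D(n,3,2)$ you would have found $x<D(n,3,2)$ in every class: the construction must \emph{sacrifice} triples, not maximize them.

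The missing idea is therefore to use a deliberately non-maximum packing whose leave graph is a single long cycle (of length $n$ or $n-1$), so that its $\approx n$ edges can be oriented around the cycle to give $\approx n$ type-$1^12^1$ codewords with pairwise distinct symbol-$2$ positions. The existence of $2$-$(n,3,1)$ packings with these prescribed leaves is a nontrivial result of Colbourn--Rosa and Colbourn--Ling, which Lemma~\ref{3packing} (packing numbers only) does not supply; alternatively, for $n\equiv 2,4\pmod 6$ one takes an STS$(n-1)$ plus an infinity point and uses the $n-1$ pairs $\{i,\infty\}$ with the symbol $2$ placed at $i$. The residue $n\equiv 0\pmod 6$ needs a further ad hoc patch (a packing on $n-1$ points with a $4$-cycle leave, one extra triple through $\infty$, and a hand-built set of five pair codewords). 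Your instincts about orientation bookkeeping and cyclic base-block constructions are fine, but they address a secondary difficulty; without replacing the maximum packing by one with a prescribed long-cycle leave, the construction falls short of the upper bound for every $n$.
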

  \begin{proof}The upper bound is from Lemma~\ref{thm1}. For the lower bound, the case $n\leq 3$ is easy to check; for all other integers $n$, we construct an $(n,4,3)_3$ code $\C$ of size achieving the upper bound as follows.

    For each $n\equiv 1,5\pmod{6}$, there exists a $2$-$(n,3,1)$ packing $(X,\B)$ of size $\frac{n^2-3n+2}{6}$  whose leave graph consists of a cycle  of length $n-1$ and one isolated vertex \cite{colbourn1986quadratic}. We assume that $X=\bbZ_{n-1}\cup\{\infty\}$, and  the cycle is $(0,1,2,\ldots,n-2)$. Then the desired $\C$ consists of all codewords $\{a_1,b_1,c_1\}$ of type $1^3$ for each $\{a,b,c\}\in \B$, and $(n-1)$ codewords of type $1^12^1$: $\{0_1,1_2\},\{1_1,2_2\},\ldots,\{(n-3)_1,(n-2)_2\},\{(n-2)_1,0_2\}$.

    For each $n\equiv 2,4\pmod{6}$, there exists a $2$-$(n-1,3,1)$ design $(\bbZ_{n-1},\B)$ of size $\frac{(n-1)(n-2)}{6}$, see Lemma~\ref{3packing}, which is indeed an  STS$(n-1)$. The desired code is constructed over $\bbZ_{n-1}\cup \{\infty\}$, which consists of  all codewords $\{a_1,b_1,c_1\}$ of type $1^3$ for each $\{a,b,c\}\in \B$, and $(n-1)$  codewords of type $1^12^1$: $\{0_2,\infty_1\},\{1_2,\infty_1\},\ldots,\{(n-3)_2,\infty_1\},\{(n-2)_2,\infty_1\}$.

      When $n\equiv 3\pmod{6}$, Colbourn and Rosa\cite{colbourn1986quadratic} (and Colbourn and Ling\cite{colbourn1998class}) showed that there exists a $2$-$(n,3,1)$ packing $(\bbZ_{n},\B)$ of size $\frac{n^2-3n}{6}$  whose leave graph consists of all pairs $\{i,j\}$ with $i-j\equiv \pm 1\pmod{n}$. Then the desired $\C$ consists of all codewords $\{a_1,b_1,c_1\}$ of type $1^3$ for each $\{a,b,c\}\in \B$, and $n$  codewords of type $1^12^1$:  $\{0_1,1_2\},\{1_1,2_2\},\ldots,\{(n-2)_1,(n-1)_2\},\{(n-1)_1,0_2\}$.

      When $n\equiv 0\pmod{6}$, there is a $2$-$(n-1,3,1)$ packing $(\bbZ_{n-1},\B)$  of size $\frac{n^2-3n-6}{6}$ whose leave graph consists of a cycle of length four and $n-5$ isolated vertices by Lemma~\ref{3packing}. Assume the cycle is $(0,1,2,3)$.  The desired code $\C$ is constructed over $\bbZ_{n-1}\cup \{\infty\}$,  which consists of all codewords $\{a_1,b_1,c_1\}$ of type $1^3$ for each $\{a,b,c\}\in \B$, an additional codeword $\{1_1,2_1,\infty_1\}$ of  type $1^3$, five codewords of type $1^12^1$ constructed from the cycle: $\{0_2,\infty_1\}$, $\{0_1,1_2\}$, $\{2_2,3_1\}$, $\{0_1,3_2\}$, $\{3_1,\infty_2\}$, and $(n-5)$  codewords of type $1^12^1$:  $\{4_2,\infty_1\},\{5_2,\infty_1\},\ldots,\{(n-3)_2,\infty_1\},\{(n-2)_2,\infty_1\}$.

  It is routine to check that all codes constructed above are $(n,4,3)_3$ codes of the required sizes.
  \end{proof}
%
%

Next, we give examples to illustrate constructions in Theorem~\ref{th1}.
  \begin{example} For $n=6$, $A_3(6,4,3)=9.$ There exists a $2$-$(5,3,1)$ packing over $\bbZ_5$ with blocks $024,134$ whose leave graph is a cycle $(0,1,2,3)$. We adjoin an infinity point $\infty$, and construct an optimal code with the following codewords:
    \[\begin{array}{lll}
    \{0_1,2_1,4_1\}&\{1_1,3_1,4_1\}&\{1_1,2_1,\infty_1\}\\
    \{0_2,\infty_1\}&\{0_1,1_2\}&\{2_2,3_1\}\\
    \{0_1,3_2\}&\{3_1,\infty_2\}&\{4_2,\infty_1\}.
      \end{array}
    \]

For $n=7$, we construct an optimal $(7,4,3)_3$ code of size eleven as follows.
     Since there is a $2$-$(7,3,1)$ packing with five blocks $14\infty,25\infty,03\infty,135,024$ over $\bbZ_6\cup\{\infty\}$ whose leave graph consists of a cycle $(0,1,2,3,4,5)$, then the codewords are as follows:
    \[\begin{array}{llllllll}
    \{1_1,4_1,\infty_1\}&\{2_1,5_1,\infty_1\}&\{0_1,3_1,\infty_1\}&\{1_1,3_1,5_1\}\\
    \{0_1,2_1,4_1\}&\{0_1,1_2\}&\{1_1,2_2\}&\{2_1,3_2\}\\
    \{3_1,4_2\}&\{4_1,5_2\}&\{5_1,0_2\}.
      \end{array}
    \]

  For $n=8$, we construct an optimal $(8,4,3)_3$ code of size fourteen as follows.
     From an STS$(7)$ over $\bbZ_7$ with blocks $124,235,346,045,156,026,013$, adjoining an infinite point $\infty$, we obtain the codewords as follows:
    \[\begin{array}{lllllll}
    \{1_1,2_1,4_1\}&\{2_1,3_1,5_1\}&\{3_1,4_1,6_1\}&\{0_1,4_1,5_1\}\\
    \{1_1,5_1,6_1\}&\{0_1,2_1,6_1\}&\{0_1,1_1,3_1\}&\{0_2,\infty_1\}\\
    \{1_2,\infty_1\}&\{2_2,\infty_1\}&\{3_2,\infty_1\}&\{4_2,\infty_1\}\\
    \{5_2,\infty_1\}&\{6_2,\infty_1\}.
  \end{array}
\]

  For $n=9$,  $A_3(9,4,3)=18$. The blocks generated by $035$ under $\bbZ_9$ is a $2$-$(9,3,1)$ packing, whose leave graph is a cycle of length nine. Then the codewords are as follows:
    \[\begin{array}{llll}
    \{0_1,3_1,5_1\}&\{1_1,4_1,6_1\}&\{2_1,5_1,7_1\}&\{3_1,6_1,8_1\}\\
    \{4_1,7_1,0_1\}&\{5_1,8_1,1_1\}&\{6_1,0_1,2_1\}&\{7_1,1_1,3_1\}\\
    \{8_1,2_1,4_1\}&\{0_1,1_2\}&\{1_1,2_2\}&\{2_1,3_2\}\\
    \{3_1,4_2\}&\{4_1,5_2\}&\{5_1,6_2\}&\{6_1,7_2\}\\
    \{7_1,8_2\}&\{8_1,0_2\}.
      \end{array}
    \]

  \end{example}

\begin{remark} \label{gddconst} For $n\equiv 3\pmod{6}$, we give another construction of optimal codes as follows. Let $u=n/3\geq 3$, then there exists a  $3$-GDD of type $3^u$ by Lemma~\ref{3-GDD}, say $(X,\G, \B)$.  For each group $G=\{a,b,c\}\in \G$, we obtain three codewords of type $1^12^1$: $\{a_1,b_2\}$, $\{b_1,c_2\}$ and $\{c_1,a_2\}$. It is easy to check that all these $n$ codewords of type  $1^12^1$, combining all codewords of type $1^3$ obtained from $\B$, form an optimal $(n,4,3)_3$ code.

In fact, the above $3$-GDD of type $3^u$ is also a $2$-$(n,3,1)$ packing of size $\frac{n^2-3n}{6}$, whose leave graph is a union of $u$ cycles $(a,b,c)$ of length three, where $\{a,b,c\}$ is a group in $\G$. Together with the $2$-$(n,3,1)$ packing given in Theorem~\ref{th1} for $n\equiv 3\pmod{6}$, we simply have two non-isomorphic packings of the same size, but with different leave graphs, and we can use either to construct optimal codes.

However, the leave graph of a GDD is a union of disjoint cliques,  which is convenient for us since we can input an optimal short code over each clique and do not destroy the packing property. Note that in the above example, we input an optimal $(3,4,3)_3$ code over each group of size three.  This observation is very helpful, since there are plenty results of existence of GDDs as in Lemma~\ref{GDD}, and we only need to find optimal short codes over each group. This method will be used to construct optimal $(n,6,4)_3$ codes by using $4$-GDDs in the next section.

\end{remark}

\section{Ternary CWCs of Weight Four}\label{conca}

In this section, we consider ternary codes with constant weight  four and determine the exact value of $A_3(n,d,4)$ for $d=4$ and $6$.  The ternary codes with constant weight four have codewords of three different types $1^4$, $1^22^1$, and $2^2$.

When $d=4$, the only difference between the CWC of weight four over $\bbZ_{\geq 0}$ and $I_3$ is that the latter one does not have codewords of types $1^13^1$ and $4^1$. Then by similar arguments as in the proof of Theorem~\ref{thmz1}, we can obtain the following result.

\begin{theorem}\label{thm_A_3(n,4,4)}
 $A_3(n,4,4)=D(n,4,3)+\frac{n(n-1)}{2}$.
 \end{theorem}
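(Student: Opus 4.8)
The plan is to mirror the proof of Theorem~\ref{thmz1}, simplified by the fact that a ternary code of weight four admits only three codeword types, namely $1^4$, $1^22^1$ and $2^2$; the types $1^13^1$ and $4^1$ that appeared over $\bbZ_{\geq 0}$ are forbidden since every entry lies in $\{0,1,2\}$. First I would record, by the same reasoning used to obtain conditions (i)--(iii) before Theorem~\ref{thmz1}, that a set of weight-four ternary codewords is an $(n,4,4)_3$ code if and only if (i) the supports of the codewords of types $1^4$ and $1^22^1$ form a $3$-$(n,\{3,4\},1)$ packing, and (iii) whenever two codewords share two common support positions, both minimum entries at those positions equal $1$. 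Condition (ii) becomes vacuous here, because $\min\{\textbf{u}_i,\textbf{v}_i\}\leq 2$ holds automatically over $I_3$.

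For the upper bound, let $x,y,z$ denote the numbers of codewords of types $1^4$, $1^22^1$ and $2^2$. Condition (i) restricts the type-$1^4$ codewords to a $3$-$(n,4,1)$ packing, so $x\leq D(n,4,3)$. For the remaining two types I would reuse the ordered-pair count from Theorem~\ref{thmz1}: by condition (iii), the ordered pairs $(i,i')$ with $i\neq i'$, $\textbf{u}_i\geq 1$ and $\textbf{u}_{i'}\geq 2$ must be pairwise distinct across all codewords $\textbf{u}$. Each codeword of type $1^22^1$ and each of type $2^2$ contributes exactly two such pairs, while there are at most $n(n-1)$ pairs in total, giving $2y+2z\leq n(n-1)$, that is $y+z\leq \frac{n(n-1)}{2}$. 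Since the two constraints involve disjoint sets of codewords, adding them yields $A_3(n,4,4)=x+y+z\leq D(n,4,3)+\frac{n(n-1)}{2}$.

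For the lower bound I would exhibit a code meeting both inequalities with equality simultaneously: take all $D(n,4,3)$ codewords of type $1^4$ obtained from an optimal $3$-$(n,4,1)$ packing over $\bbZ_n$, together with all $\binom{n}{2}$ codewords of type $2^2$ (one for each $2$-subset of positions, both nonzero entries equal to $2$). This sets $y=0$ and $z=\binom{n}{2}$, so $y+z=\frac{n(n-1)}{2}$. The step that needs genuine care is verifying the minimum distance of this mixed family: two type-$1^4$ words inherit distance $\geq 4$ from the packing; two distinct type-$2^2$ words share at most one position (else their supports coincide), so by Eq.~(\ref{eqdis}) their distance is $8-2\cdot 2=4$; and a type-$1^4$ word meets a type-$2^2$ word in at most two positions, where every shared minimum equals $\min\{1,2\}=1$, again giving distance $\geq 4$ via Eq.~(\ref{eqdis}).

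The main obstacle is this last verification: one must confirm that introducing the full set of $2^2$ codewords does not violate condition (iii) against the packing words, and that two $2^2$ words never collide in two positions. Both reduce to the observation that a size-two support can intersect any other support in at most two points, with equality forcing all shared minima to be $1$, since a $2$-entry only ever meets the entry $1$ of a type-$1^4$ word. Once this is checked, the construction is valid and optimal, and the proof is complete.
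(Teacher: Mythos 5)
Your proposal is correct and follows essentially the same route as the paper, which simply adapts the proof of Theorem~\ref{thmz1} by dropping the types $1^13^1$ and $4^1$: the bound $x\leq D(n,4,3)$ plus the ordered-pair count $2y+2z\leq n(n-1)$ gives the upper bound, and an optimal $3$-$(n,4,1)$ packing together with all $\binom{n}{2}$ type-$2^2$ codewords attains it. Your explicit distance verification for the mixed family is exactly the check the paper leaves implicit, and it is carried out correctly.
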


 Now we consider $d=6$. By the \textbf{UNC} condition and the  distance formula in Eq. (\ref{eqdis}),  a ternary code $\C$ of constant weight four is an $(n,6,4)_3$ code if and only if $\C$ satisfies the following properties.
\begin{itemize}
  \item[$(a')$] The collection of subsets $supp(\textbf{u})\subset \bbZ_n$, for all codewords $\textbf{u}\in\C$, forms a $2$-$(n,\{2,3,4\},1)$ packing.
  \item[$(b')$]  For any two codewords $\textbf{u},\textbf{v}\in \C$, if $i\in supp(\textbf{u})\cap supp(\textbf{v})$, then $\min\{\textbf{u}_i,\textbf{v}_i\}=1$.
\end{itemize}

From these properties, we have the following upper bound for $A_3(n,6,4)$.
\begin{lemma}\label{A3(n,6,4)}
  $A_3(n,6,4)\leq \left\lfloor\frac{n(n+5)}{12}\right\rfloor=:U(n)$.
\end{lemma}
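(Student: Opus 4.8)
The plan is to bound the number of codewords of each type separately using the two defining conditions $(a')$ and $(b')$, and then combine them. Let $x$, $y$, and $z$ denote the number of codewords of types $1^4$, $1^22^1$, and $2^2$ respectively, so that $A_3(n,6,4)=x+y+z$ for an optimal code. First I would count pairs using condition $(a')$: since the supports form a $2$-$(n,\{2,3,4\},1)$ packing, each pair of points in $\bbZ_n$ lies in at most one support. A codeword of type $1^4$ has a support of size four and thus contributes $\binom{4}{2}=6$ pairs; a codeword of type $1^22^1$ has support of size three contributing $\binom{3}{2}=3$ pairs; a codeword of type $2^2$ has support of size two contributing $\binom{2}{2}=1$ pair. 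Since all these pairs are distinct and there are only $\binom{n}{2}$ pairs available, this yields the inequality
\begin{equation*}
6x+3y+z\leq \binom{n}{2}=\frac{n(n-1)}{2}.
\end{equation*}

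Next I would extract a second inequality from condition $(b')$, which forbids two codewords from both having entry $2$ at the same position. Counting the occurrences of the symbol $2$ across all codewords, a codeword of type $1^22^1$ contributes one such occurrence while a codeword of type $2^2$ contributes two, and by $(b')$ these occurrences must sit at distinct positions of $\bbZ_n$. This gives
\begin{equation*}
y+2z\leq n.
\end{equation*}
The goal is to combine these two inequalities with appropriate nonnegative weights so as to bound $x+y+z$ from above. I would look for constants $\lambda,\mu\geq 0$ such that the coefficient of each variable in $\lambda(6x+3y+z)+\mu(y+2z)$ is at least $1$, i.e. $6\lambda\geq 1$, $3\lambda+\mu\geq 1$, and $\lambda+2\mu\geq 1$, while minimizing the resulting bound $\lambda\binom{n}{2}+\mu n$. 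Taking $\lambda=\tfrac{1}{6}$ and $\mu=\tfrac{1}{2}$ satisfies all three constraints (with equality in the first and third), so that
\begin{equation*}
x+y+z\leq \tfrac{1}{6}\cdot\frac{n(n-1)}{2}+\tfrac{1}{2}\cdot n=\frac{n(n-1)}{12}+\frac{n}{2}=\frac{n(n+5)}{12}.
\end{equation*}

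Since $A_3(n,6,4)$ is an integer, the bound can be sharpened to $\left\lfloor\frac{n(n+5)}{12}\right\rfloor$, which is exactly $U(n)$ as claimed. I do not expect any serious obstacle here: the argument is a straightforward double-counting followed by a linear-programming-style combination of two inequalities, and the main point is simply to choose the correct multipliers $\lambda$ and $\mu$ so that every variable receives coefficient at least one while the right-hand side collapses to the stated expression. The only mild subtlety is verifying that the floor is legitimate, i.e. that the real-valued bound $\frac{n(n+5)}{12}$ indeed dominates the integer $x+y+z$; this is immediate since any integer not exceeding a real number also does not exceed its floor.
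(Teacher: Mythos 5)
Your proof is correct and follows essentially the same route as the paper: the same two inequalities $6x+3y+z\leq\binom{n}{2}$ (from the packing condition $(a')$) and $y+2z\leq n$ (from counting occurrences of the symbol $2$ via $(b')$), combined to bound $x+y+z$. The paper states the combination without exhibiting the multipliers; your explicit choice $\lambda=\tfrac16$, $\mu=\tfrac12$ (giving coefficient $\tfrac76\geq 1$ on $z$) just makes the same step fully explicit.
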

\begin{proof}
Let $x, y, z$ be the number of codewords of types $1^4$, $1^22^1$ and $2^2$,  respectively. Property $(a')$ implies that the number of $2$-subsets of $\bbZ_n$ contained in the support of all codewords cannot exceed $\binom{n}{2}$, by the definition of a $2$-$(n,\{2,3,4\},1)$ packing. So
  \begin{align*}
    & 6x+3y+z\leq \binom{n}{2}.
  \end{align*}
   Property $(b')$ indicates that in any position, the symbol $2$ can only appear in at most one codeword, that is,
    \begin{align*}
   y+2z\leq n.
  \end{align*}
Then we get  $A_3(n,6,4)=x+y+z\leq \left\lfloor\frac{n(n+5)}{12}\right\rfloor$.
\end{proof}

 \begin{remark}\label{case_z}
  If $A_3(n,6,4)=U(n)$, then the number $z$ of codewords of type $2^2$ satisfies  $z=0$ when $n\equiv 0,3,4,7\pmod{12}$, $z\leq 1$ when $n\equiv 2,5\pmod{12}$, $z\leq 3$ when $n\equiv 1,6,9,10\pmod{12}$ and $z\leq 4$ when $n\equiv 8,11\pmod{12}$.

  In fact, in the proof of Lemma~\ref{A3(n,6,4)}, we indeed obtain the following inequality\[x+y+z\leq\left\lfloor\frac{n(n+5)-2z}{12}\right\rfloor.\]
When $n\equiv 1,6,9,10\pmod{12}$, we have $U(n)=\frac{n(n+5)-6}{12}$. If $z>3$, then $x+y+z\leq U(n)-1$, a contradiction. So we have $z\leq 3$ when $n\equiv 1,6,9,10\pmod{12}$. The other cases are similar.
\end{remark}

The rest of this section serves to construct $(n,6,4)_3$ codes that achieve the upper bound. From the insight described in  Remark~\ref{gddconst}, we will make use of the known existence of GDDs, and input optimal codes of short length to construct optimal codes of long length. We first handle the existence of codes with short length.

\subsection{Direct constructions for optimal codes of short length}
It is easy to see that  $A_3(1,6,4)=0=U(1)$, $A_3(2,6,4)=1=U(2)$, $A_3(3,6,4)=1=U(3)-1$ and $A_3(4,6,4)=2=U(4)-1$. For $n=5$, there is no $(5,6,4)_3$ code of size four by exhaustive search. Then $A_3(5,6,4)=3=U(5)-1$ since $21100$, $10012$, $02020$ form a $(5,6,4)_3$ code. For $n\in [6,11]$, we give direct constructions of codes of size $U(n)$.


\begin{lemma}\label{small_base}
For all $n\in [6,11]$, we have $A_3(n,6,4)=U(n)$.
\end{lemma}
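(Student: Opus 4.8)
The plan is to prove Lemma~\ref{small_base} by explicit construction for each of the six values $n\in[6,11]$, exhibiting in every case an $(n,6,4)_3$ code whose size equals the upper bound $U(n)=\lfloor n(n+5)/12\rfloor$ from Lemma~\ref{A3(n,6,4)}. Since the upper bound is already established, the entire task is to produce matching lower bounds, i.e.\ concrete optimal codes. First I would tabulate the target sizes: $U(6)=5$, $U(7)=7$, $U(8)=\lfloor 104/12\rfloor=8$, $U(9)=\lfloor 126/12\rfloor=10$, $U(10)=\lfloor 150/12\rfloor=12$, $U(11)=\lfloor 176/12\rfloor=14$. Then, guided by Remark~\ref{case_z}, I would fix the intended number $z$ of type-$2^2$ codewords for each $n$ (for instance $z\le 3$ when $n\equiv 6,9,10\pmod{12}$, and $z\le 4$ when $n\equiv 8,11\pmod{12}$), since this pins down how the weight budget splits among the three types $1^4$, $1^22^1$, $2^2$.

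The natural construction method, following Remark~\ref{gddconst}, is to start from an optimal or near-optimal $2$-$(n,3,1)$ or $2$-$(n,\{3,4\},1)$ packing on $\bbZ_n$ (or on $\bbZ_{n-1}\cup\{\infty\}$ when adjoining an infinity point is convenient), promote its blocks to type-$1^4$ and type-$1^22^1$ codewords, and then append a small number of type-$2^2$ codewords that soak up the residual pairs left uncovered by the packing. Concretely, for each $n$ I would present a base codeword or a short list of base codewords together with an explicit group action (typically a cyclic shift by $\bbZ_n$ or $\bbZ_{n-1}$ fixing the nonzero entries), so that developing the base codewords yields the full code compactly, exactly as in the worked examples after Theorem~\ref{th1}. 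For each listed code I would then state that it is routine to verify the two defining properties $(a')$ and $(b')$: property $(a')$ amounts to checking that the supports form a $2$-$(n,\{2,3,4\},1)$ packing (no pair of positions lies in two supports), and property $(b')$ amounts to checking that the symbol $2$ occupies each position in at most one codeword.

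The main obstacle is simply finding, for each of these small lengths, an arrangement that simultaneously saturates both inequalities in the proof of Lemma~\ref{A3(n,6,4)} (namely $6x+3y+z\le\binom{n}{2}$ and $y+2z\le n$) so that $x+y+z$ actually attains $U(n)$; this is a finite search rather than a structural difficulty, and for the smallest cases it may require ad hoc codewords or a computer search, as was already needed to rule out a size-four $(5,6,4)_3$ code. The delicate point is placing the type-$2^2$ codewords: each such codeword consumes the symbol $2$ in two positions (tightening $y+2z\le n$) while contributing only one uncovered pair toward $\binom{n}{2}$, so the permitted count of $2^2$ codewords recorded in Remark~\ref{case_z} must be respected exactly, and the two positions holding the symbol $2$ in a $2^2$ codeword must avoid every position already carrying a $2$ in a $1^22^1$ codeword. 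Once the explicit codes are displayed, the verification is mechanical, so I would conclude by asserting that direct inspection confirms each construction is an $(n,6,4)_3$ code of size $U(n)$, which together with Lemma~\ref{A3(n,6,4)} gives $A_3(n,6,4)=U(n)$ for all $n\in[6,11]$.
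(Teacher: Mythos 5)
Your proposal follows essentially the same route as the paper: for each $n\in[6,11]$ the paper exhibits an explicit $(n,6,4)_3$ code of size $U(n)$ — built from an STS$(7)$ with a point deleted ($n=6$), a single base codeword developed under $\bbZ_7$ or $\bbZ_8$ ($n=7,8$), a $(10,\{3,4\})$-PBD with symbol $2$ assigned appropriately ($n=10$), and ad hoc or computer-found codes ($n=9,11$) — verifying properties $(a')$ and $(b')$ directly, exactly as you outline. The only thing your proposal defers is the explicit data itself, which for an existence statement of this kind is the actual content of the proof; but since you correctly compute the target sizes, identify the constraints from Remark~\ref{case_z} on the number of type-$2^2$ codewords, and describe the finite search that produces the codes, the plan is sound and matches the paper's.
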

\begin{proof} For each $n\in [6,11]$, we construct an optimal code as follows.
%

  For $n=6$, $A_3(6,6,4)= 5$. From an STS$(7)$ over $\bbZ_7$ with blocks $124,235,346,450,561,602,013$, delete the point $6$ and all blocks containing it. Use the remaining blocks to get four codewords of type $1^22^1$ by choosing different positions $\{0,1,2,5\}$ for the symbol $2$. Then add a codeword of type $2^2$ by using the pair $\{3,4\}$. The final code is listed as follows.
  \[\begin{array}{lll}
    \{0_2,1_1,3_1\}&\{1_2,2_1,4_1\}&\{2_2,3_1,5_1\}\\
    \{4_1,5_2,0_1\}&\{3_2,4_2\}.
      \end{array}
  \]

  For $n=7$, $A_3(7,6,4)=7$.  We use the same STS$(7)$ which indeed has a base block $013$ under the group action $\bbZ_7$. Then the codewords of our code $\C$  are generated by $\{0_2,1_1,3_1\}$ under the group action $\bbZ_7$ on the set of coordinates.

  For $n=8$, $A_3(8,6,4)= 8$. The optimal code $\C$ is obtained by developing the codeword $\{0_2,1_1,3_1\}$ under the group action $\bbZ_8$.

  For $n=9$, $A_3(9,6,4)=10$. According to Remark~\ref{case_z} and Property $(b')$, the number of codewords of type $2^2$ is at most $3$, and symbol $2$ can only appear in one codeword at the same position. Then by controlling the number of type $2^2$ codewords,  we have the following code $\C$.
  \[\begin{array}{llll}
    \{0_1,1_1,2_1,3_1\}&\{0_1,4_1,5_1,6_1\}&\{7_1,8_1,0_2\}&\{3_1,6_1,7_2\}\\
    \{5_1,7_1,2_2\}&\{4_1,7_1,1_2\}&\{1_1,8_1,6_2\}&\{3_1,8_1,5_2\}\\
    \{2_1,4_1,8_2\}&\{3_2,4_2\}.
    \end{array}
  \]

  For $n=10$, $A_3(10,6,4)=12$. This code $\C$ is given by a $(10,\{3,4\})$-PBD  by assigning the symbol $2$ appropriately.
  \[\begin{array}{llll}
    \{0_1,1_1,2_1,3_1\}&\{0_1,4_1,5_1,6_1\}&\{0_1,7_1,8_1,9_1\}&\{2_1,4_2,9_1\}\\
    \{3_1,5_1,9_2\}&\{1_1,6_2,9_1\}&\{1_2,4_1,7_1\}&\{2_2,5_1,7_1\}\\
    \{2_1,6_1,8_2\}&\{3_2,4_1,8_1\}&\{1_1,5_2,8_1\}&\{3_1,6_1,7_2\}.
    \end{array}
  \]

  For $n=11$, $A_3(11,6,4)=14$. The code $\C$ is obtained by computer search.
  \[\begin{array}{llll}
    \{0_1,2_1,4_1,6_1\}&\{2_1,3_1,7_1,9_1\}&\{3_1,6_1,8_1,10_1\}&\{0_2,8_1,9_1\}\\
    \{1_1,9_2,10_1\}&\{2_2,5_1,10_1\}&\{4_1,7_1,10_2\}&\{4_2,5_1,9_1\}\\
    \{0_1,3_1,5_2\}&\{1_1,5_1,6_2\}&\{0_1,1_1,7_2\}&\{5_1,7_1,8_2\}\\
    \{1_2,2_1,8_1\}&\{1_1,3_2,4_1\}.
    \end{array}
  \]

\end{proof}

The case of $n=12$ is a little complicated, since the upper bound $U(12)$ cannot be achieved.

\begin{lemma}\label{12}
  $A_3(12,6,4)=U(12)-1=16.$
\end{lemma}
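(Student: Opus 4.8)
The goal is to prove $A_3(12,6,4) = 16$, i.e., that the general upper bound $U(12) = \lfloor 12\cdot 17/12\rfloor = 17$ is \emph{not} attained, and that a code of size $16$ does exist. The plan splits naturally into a lower bound (a construction) and a strengthened upper bound (an impossibility argument ruling out size $17$).

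\medskip

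\noindent\textbf{The lower bound.} First I would exhibit an explicit $(12,6,4)_3$ code with $16$ codewords, most conveniently over $\bbZ_{11}\cup\{\infty\}$ or over $\bbZ_{12}$. The natural raw material is a $2$-$(12,\{2,3,4\},1)$ packing whose blocks of size four give type $1^4$ codewords and whose short blocks absorb the symbol $2$ to give type $1^22^1$ or $2^2$ codewords, respecting properties $(a')$ and $(b')$. Since $D(12,4,2) = \lfloor \frac{12}{4}\lfloor\frac{11}{3}\rfloor\rfloor = 9$ by Lemma~\ref{4packing}, one can aim for roughly $9$ codewords of type $1^4$ plus enough short codewords (each placing the symbol $2$ in a distinct position, so at most $12$ symbols $2$ total, hence $y+2z\le 12$) to reach $16$. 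I expect a balanced split such as $x=9$, $y+z = 7$ to work, and I would simply list the $16$ codewords and note that verification of $(a')$--$(b')$ is routine, in the same style as Lemmas~\ref{small_base} and the constructions in Theorem~\ref{th1}.

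\medskip

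\noindent\textbf{The strengthened upper bound.} The crux is showing $A_3(12,6,4)\le 16$, i.e.\ that no code of size $17$ exists. Suppose for contradiction a $(12,6,4)_3$ code $\C$ of size $17$ exists, and let $x,y,z$ count codewords of types $1^4,1^22^1,2^2$. From Lemma~\ref{A3(n,6,4)} the two governing inequalities are
\begin{align*}
6x+3y+z &\le \binom{12}{2}=66,\\
y+2z &\le 12.
\end{align*}
With $x+y+z=17$, I would combine these to pin down $x,y,z$ very tightly. Adding $x+y+z=17$ appropriately: from $6x+3y+z\le 66$ and $y+2z\le 12$ one gets $6(x+y+z)=102\le 66 + 3y+5z + \text{(slack)}$, forcing $3y+5z\ge 36$; together with $y+2z\le 12$ this squeezes the feasible $(y,z)$ into a tiny window (for instance it forces $z$ small and $y$ near its maximum). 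In particular the packing bound must be met with near-equality, so the $x$ blocks of size four together with the supports of the short codewords must cover almost every pair of $\bbZ_{12}$, i.e.\ the associated $2$-$(12,\{2,3,4\},1)$ packing has a leave that is extremely small.

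\medskip

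\noindent\textbf{Where the real work is.} The counting above will not by itself contradict $17$; the extra obstruction comes from the known \emph{non-existence} geometry of near-optimal quadruple packings on $12$ points, exactly the phenomenon recorded in Lemma~\ref{4packing} (the $-1$ correction for $n\equiv 10\pmod{12}$, and the excluded small orders). The hard part will be to show that the forced configuration—$x$ disjoint-ish $4$-blocks covering nearly all of $\binom{12}{2}$ with only a handful of pairs left for the type $1^22^1$/$2^2$ codewords, each of the latter confined by $(b')$ to distinct positions for the symbol $2$—is combinatorially impossible. Concretely I would argue that achieving $x=9$ (the maximum by $D(12,4,2)=9$) forces a leave graph on $12$ vertices with a prescribed small number of edges, and that the short codewords needed to reach $17$ cannot be packed into this leave while keeping all symbol-$2$ positions distinct; any attempt violates either $(a')$ (a pair repeats) or $(b')$ (two symbol-$2$'s collide). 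A short case analysis on the value of $z\in\{0,1,2,3\}$ (bounded as in Remark~\ref{case_z}) and the corresponding leave structure should close every branch, yielding $\C\le 16$ and hence, with the construction, $A_3(12,6,4)=16$. This leave-graph impossibility is the step I expect to demand the most care.
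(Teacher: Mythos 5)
Your overall architecture (explicit construction for the lower bound, impossibility of size $17$ for the upper bound) matches the paper, and your linear counting is the right first step --- indeed, carried to completion it gives much more than ``a tiny window'': from $6x+3y+z\le 66$, $y+2z\le 12$ and $x+y+z=17$ one gets $3y+5z\ge 36$ and $3y+5z\le 36-z$, forcing exactly $z=0$, $y=12$, $x=5$, with the associated packing covering all $66$ pairs (empty leave). But from there your plan goes after the wrong configuration. The obstruction has nothing to do with near-optimal quadruple packings, the $-1$ correction in Lemma~\ref{4packing} (which applies to $n\equiv 7,10\pmod{12}$, not to $n=12$), or a case analysis over $z\in\{0,1,2,3\}$ ($z=0$ is already forced by Remark~\ref{case_z} since $12\equiv 0\pmod{12}$). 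What must be refuted is a $2$-$(12,\{3,4\},1)$ design with exactly $5$ quadruples and $12$ triples. The paper's key idea, missing from your proposal, is a local parity count: if $x_i$ is the number of quadruples through point $i$ and $y_i$ the number of triples, then $3x_i+2y_i=11$, so every $x_i$ is odd, hence $x_i\in\{1,3\}$; since $\sum_i x_i=4\cdot 5=20$ over $12$ points, exactly four points lie in three quadruples each, and a short structural argument shows five pairwise pair-disjoint quadruples on $12$ points cannot realize this degree sequence. Without this (or an equivalent) argument your proof of $A_3(12,6,4)\le 16$ does not close.

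Your lower bound also contains a concrete misstep: the split $x=9$, $y+z=7$ cannot yield $16$ codewords. With $x=9=D(12,4,2)$ every point lies in exactly three quadruples, so the leave graph is $2$-regular with $12$ edges, i.e.\ a disjoint union of cycles. A type $1^22^1$ codeword needs a triangle of the leave and a type $2^2$ codeword needs a leave edge with both endpoints free of the symbol $2$; a cycle of length $m\ge 4$ contributes at most $\lfloor m/2\rfloor$ short codewords and a triangle contributes at most one, so $y+z\le 6$ and the total is at most $15$. The paper's optimal code instead takes $x=4$ quadruples and $y=12$ triples (with $z=0$), which is the shape you should aim for when writing down the $16$ codewords.
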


\begin{proof} When $n=12$, $U(n)=17$. By the proof of Lemma~\ref{A3(n,6,4)} and Remark~\ref{case_z}, a $(12,6,4)_3$ code achieves the size $17$ only when $z=0$, $x=5$, $y=12$, and $supp(\textbf{u})\subset \bbZ_{12}$, for all codewords $\textbf{u}$  of types $1^4$ or $1^22^1$, form a $2$-$(12,\{3,4\},1)$ packing. Such a packing has an empty leave graph, since the support of each codeword of type $1^4$ and $1^22^1$ contains six and three $2$-subsets respectively, we have $5\times 6+12\times 3=66=\binom{12}{2}$, i.e., all $2$-subsets of $\bbZ_{12}$ appear in exactly one support of a codeword. Suppose there exists such a code $\C\subset \bbZ_{12}\times [2]$. For each $i\in \bbZ_{12}$, let $x_i$ be the number of codewords of type $1^4$ with a nonzero entry in  position $i$. By property $(a')$, we have $x_i\leq\left\lfloor\frac{11}{3}\right\rfloor=3$ for each $i$ by counting the pairs containing $i$. By counting the nonzero entries in the five codewords of type $1^4$, we have
  \begin{equation*}
    x_0+x_1+\cdots+x_{11}=20.
  \end{equation*}

Since $supp(\textbf{u})\subset \bbZ_{12}$, $\textbf{u}\in \C$ form a $2$-$(12,\{3,4\},1)$ packing with an empty leave graph, the number of pairs containing $i$ is $11=3x_i+2y_i$, where $y_i$ is the number of codewords of type $1^22^1$ having a nonzero entry in position $i$. This forces $x_i$ to be an odd integer, which might be $1$ or $3$.   Let $d_j$  be the number of positions $i$ such that $x_i=j$ for $j=1,3$, then
    \begin{equation*}
    \left\{
    \begin{array}{ll}
      d_1+3d_3=20,\\
      d_1+d_3=12.
    \end{array}
    \right.
\end{equation*}
Therefore, $d_1=8$ and $d_3=4$. Without loss of generality, assume that $x_0=3$, and we have three codewords $\{0_1,1_1,2_1,3_1\}$, $\{0_1,4_1,5_1,6_1\}$, $\{0_1,7_1,8_1,9_1\}$. Since $d_1=8$ and $d_3=4$, there must be a point $i\in [9]$ such that $x_i=3$, we can further assume that $x_1=3$. If $\{10_1,11_1\}$ is contained in a codeword of type $1^4$ containing $1_1$, might set it $\{1_1,4_1,10_1,11_1\}$, in which case we cannot find one more codeword to make $x_1=3$, a contradiction. Thus we can assume we have two more codewords $\{1_1,4_1,7_1,10_1\},\{1_1,5_1,8_1,11_1\}$. To make sure $x_5$ is odd, we need to construct more codewords of type $1^4$, but this will lead to more than five codewords of type $1^4$, a contradiction. So we prove that $A_3(12,6,4)\leq 16$.

In fact, we can construct a $(12,6,4)_3$ code of size $16$ as follows.
\[\begin{array}{llll}
    \{3_1,7_1,8_1,11_1\}&\{1_1,3_1,4_1,6_1\}&\{2_1,6_1,8_1,10_1\}&\{0_1,2_1,7_1,9_1\}\\
    \{4_1,11_1,0_2\}&\{1_1,2_1,11_2\}&\{10_1,11_1,5_2\}&\{0_1,10_1,3_2\}\\
    \{7_1,10_1,4_2\}&\{6_1,11_1,9_2\}&\{1_1,9_1,10_2\}&\{5_1,9_1,8_2\}\\
    \{3_1,5_1,2_2\}&\{1_1,5_1,7_2\}&\{0_1,8_1,1_2\}&\{0_1,5_1,6_2\}.
    \end{array}
  \]
\end{proof}

\subsection{Recursive constructions  based on $4$-GDDs}

In this section, we recursively construct $(n,6,4)_3$ codes of size $U(n)$ by using $4$-GDDs and small optimal codes motivated by Remark~\ref{gddconst}. Our main idea is as follows: given a $4$-GDD of order $n$,  take all its blocks as codewords of type $1^4$ and length $n$ in a natural way; for each group of the GDD, say  of length $g$, take an optimal $(g,6,4)_3$ code and then extend it to a code of length $n$ by assigning zeros to the remaining coordinates; then collect all these codewords. In this method, it is convenient to view a code as a $2$-$(n,\{2,3,4\},1)$ packing by property $(a')$, and then carefully distribute the symbol $2$ so that each position has at most one symbol $2$ among all codewords by property $(b')$.

Although the construction looks simple, it is not easy to guarantee that the resulting long code has size achieving the upper bound $U(n)$, and the length $n$ could run over all positive integers. Our strategy can be described  as follows, which heavily depends on the existence of $4$-GDDs and good small codes.
\begin{itemize}
\item[(1)]\textbf{Good classes of $4$-GDDs}. Choose fixed integers $g$ and $m$, such that the $4$-GDD of type $g^um^1$ exists for all large $u$. So the existence of an optimal short $(g,6,4)_3$  code and an optimal $(m,6,4)_3$ code may result in optimal codes of length $n=gu+m$ for all $u$. One of the choices is that $g\equiv 0\pmod{12}$ and $m\equiv 0\pmod{3}$ by Corollary~\ref{coro_4GDD}. If this method works, then it will lead to optimal codes of length $n$ for all $n\equiv 0\pmod{3}$.
    \item[(2)]\textbf{Adding extra points}. The $4$-GDDs of type $g^um^1$ with $g\equiv 0\pmod{12}$ and $m\equiv 0\pmod{3}$ are good, but only lead to codes with length $n\equiv 0\pmod{3}$. To cover other lengths by using the same $4$-GDDs, we need to add one or two extra points when we input the short codes. That is, for $t=1$ or $2$, we input a suitable $(g+t,6,4)_3$ code or an $(m+t,6,4)_3$ code over each group joined by the same $t$ extra points, and obtain codes of length $n=gu+m+t\equiv t\pmod{3}$.
        \item[(3)]\textbf{Good short codes}. The $(g+t,6,4)_3$ code with $t\geq 0$ should be chosen very carefully since it will be used for each group combined with the same extra points if any. First, when $t=1,2$,  this code can not have symbol $2$ on the extra points by $(b')$. Second,  when $t=2$, the corresponding packing should not contain the pair of the extra two points by $(a')$. Finally, the leave graph of the packing should be empty (except for $t=2$, the leave graph has only one edge joined by two extra points), since otherwise,  the resultant code will have unbounded number of edges in its leave graph when $u$ increases, which can not have size $U(n)$ by the proof of Lemma~\ref{A3(n,6,4)}.
\end{itemize}

Based on the above analysis, we consider the construction in three cases:  $t= 0,1,2$, corresponding to $n\equiv 0,1,2 \pmod{3}$. Details of recursive constructions and definitions of good short codes are provided separately.

\subsubsection{Case when $t=0$}
\
\newline
\indent In this case, we will use a $4$-GDD of type $g^um^1$ to construct optimal $(n,6,4)$ codes for all $n\equiv 0\pmod{3}$ except for some small $n$.

\begin{theorem}\label{36m}
  Suppose there exists a $4$-GDD of type $g^um^1$, where $g\equiv 0,3,4,7\pmod{12}$. If $A_3(g,6,4)=U(g)$ and $A_3(m,6,4)=U(m)$, then $A_3(gu+m,6,4)=U(gu+m)$.
\end{theorem}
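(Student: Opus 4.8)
The plan is to get the upper bound for free from Lemma~\ref{A3(n,6,4)}, so the real work is the lower bound: exhibiting an explicit $(gu+m,6,4)_3$ code of size $U(gu+m)$. Following the idea of Remark~\ref{gddconst}, I would take the $4$-GDD $(X,\G,\B)$ of type $g^um^1$, identify $X$ with $\bbZ_n$ where $n=gu+m$, and assemble $\C$ from two ingredients. First, for every block $B\in\B$ I include the type-$1^4$ codeword whose support is $B$ (all nonzero entries equal to $1$). Second, on the point set of each group I place a copy of a fixed optimal short code: an optimal $(g,6,4)_3$ code of size $U(g)$ on each of the $u$ groups of size $g$, and an optimal $(m,6,4)_3$ code of size $U(m)$ on the group of size $m$, extending each short codeword by zeros on the remaining coordinates.

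Next I would verify that $\C$ satisfies the necessary and sufficient conditions $(a')$ and $(b')$ stated before Lemma~\ref{A3(n,6,4)}. For $(a')$, in a GDD the blocks cover every between-group pair exactly once and no within-group pair at all, so the block supports form a packing; each group's short code supplies a packing of the within-group pairs by condition $(a')$ for that short code; and since a block meets each group in at most one point, the between-group and within-group pairs are covered by disjoint families. Hence the union is a $2$-$(n,\{2,3,4\},1)$ packing, which also shows all listed codewords are distinct. For $(b')$, the symbol $2$ occurs only inside the short codes, whose point sets (the groups) are pairwise disjoint, and the block codewords carry only the symbol $1$; since each short code already obeys $(b')$ internally, no coordinate receives a $2$ in two codewords. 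Here, crucially, no \emph{redistribution} of the symbol $2$ is required (unlike the later $t=1,2$ cases), precisely because with no extra points the groups are genuinely disjoint.

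Finally I would count. Writing $b=\tfrac{1}{6}\bigl(\binom{n}{2}-u\binom{g}{2}-\binom{m}{2}\bigr)$ for the number of GDD blocks, we have $|\C|=b+uU(g)+U(m)$. The hypothesis $g\equiv 0,3,4,7\pmod{12}$ is exactly the condition under which $g(g+5)\equiv 0\pmod{12}$, so $U(g)=\tfrac{g(g+5)}{12}$ with no floor; substituting this and simplifying (the $u$-terms collapse via $ug=n-m$) yields
\[
|\C|=\frac{n(n+5)}{12}-\Bigl(\frac{m(m+5)}{12}-U(m)\Bigr).
\]
The parenthesized defect equals the fractional part $\{m(m+5)/12\}\in[0,1)$, so $|\C|$ is an integer lying in $\bigl(\tfrac{n(n+5)}{12}-1,\tfrac{n(n+5)}{12}\bigr]$, and the only integer in that half-open interval is $\lfloor n(n+5)/12\rfloor=U(n)$.

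I expect the arithmetic of this last step to be where the care is needed, and in particular the key insight is that the congruence condition on $g$ is exactly what removes the floor from $U(g)$ and forces the count to land on $U(n)$; the combinatorial checks of $(a')$ and $(b')$ are routine once the disjointness of groups and the one-point intersection property of GDD blocks are invoked.
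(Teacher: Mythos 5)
Your proposal is correct and follows essentially the same route as the paper: take the GDD blocks as type-$1^4$ codewords, fill each group with an optimal short code, check conditions $(a')$ and $(b')$ via the disjointness of groups and the one-point block--group intersection, and use that $g\equiv 0,3,4,7\pmod{12}$ makes $U(g)=g(g+5)/12$ exact so the count lands on $U(n)$. The only cosmetic difference is that you finish the arithmetic by trapping $|\C|$ in the half-open interval $\bigl(\tfrac{n(n+5)}{12}-1,\tfrac{n(n+5)}{12}\bigr]$, whereas the paper simplifies directly to the floor expression; both are valid.
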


\begin{proof} Let $n=gu+m$.
  Given a $4$-GDD $(X,\G,\B)$ of type $g^um^1$, with $|X|=n$, we construct an $(n,6,4)_3$ code $\C\subset X\times [2]$ as follows. For each group $G\in \G$, construct an optimal $(|G|,6,4)_3$ code ${\C}_G\subset G\times [2]$, which exists by assumption. Note that we can view ${\C}_G$ as a subset of $X\times [2]$, i.e., an $(n,6,4)_3$ code by assigning zeros to the remaining coordinates. Let $\C_0$ be the set of codewords of type $1^4$ obtained from all blocks of $\B$, that is $\C_0=\{\{a_1,b_1,c_1,d_1\}:\{a,b,c,d\}\in \B\}$. Then it is easy to check that $\C=\C_0\bigcup(\cup_{G\in \G}{\C}_G)$ is also an $(n,6,4)_3$ code. Further, since the formula $\frac{g(g+5)}{12}$ is an integer when $g\equiv 0,3,4,7\pmod{12}$, the size of $\C$ is computed as follows.
  \begin{align*}
    |\C|&=|\B|+u\cdot U(g)+U(m)\\
    &=\frac{(g(u-1)+m)gu+gum}{12}+u\cdot\frac{g(g+5)}{12}+\left\lfloor\frac{m(m+5)}{12}\right\rfloor\\
    &=\frac{gu(gu+2m+5)}{12}+\left\lfloor\frac{m(m+5)}{12}\right\rfloor\\
    &=\left\lfloor\frac{(gu+m)(gu+m+5)}{12}\right\rfloor=U(gu+m).
  \end{align*}This completes the proof.
\end{proof}

Note that in Theorem~\ref{36m}, the short code $(g, 6,4)_3$ only requires optimal size. In fact, when $g\equiv 0,3,4,7\pmod{12}$,  a $(g, 6,4)_3$ code of size achieving the upper bound $U(g)$ has no codeword of type $2^2$ by Remark~\ref{case_z}. In particular, it has $g$ codewords of type $1^22^1$ and $U(g)-g$ codewords of  type $1^4$. So its corresponding packing has an empty leave graph by simple computation. The following example gives a specific construction of such a code.

\begin{example}\label{eg36}
   The optimal $(36,6,4)_3$ code of size $U(36)$ is generated from the following base codewords by adding $6 $ modulo $36$ on the set of coordinates. Codewords in short orbits are in bold.

 \[\begin{array}{llll}
     \bm{\{2_1,11_1,20_1,29_1\}}&\bm{\{0_1,9_1,18_1,27_1\}}&\{19_1,22_1,26_1,27_1\}&\{0_1,19_1,23_1,29_1\}\\
     \bm{\{1_1,10_1,19_1,28_1\}}&\{6_1,16_1,18_1,22_1\}&\{9_1,14_1,20_1,31_1\}&\{3_1,10_1,13_1,33_1\}\\
     \{4_1,15_1,17_1,19_1\}&\{2_1,4_1,23_1,35_1\}&\{8_1,15_1,16_1,30_1\}&\{2_1,7_1,22_1,30_1\}\\
     \{2_1,16_1,17_1,28_1\}&\{2_1,5_1,6_1,21_1\}&\{1_1,2_1,12_1,14_1\}&\{6_1,7_1,9_1,23_1\}\\
     \{3_1,35_1,27_2\}&\{21_1,24_1,8_2\}&\{9_1,35_1,28_2\}&\{0_1,31_1,7_2\}\\
     \{1_1,31_1,23_2\}&\{24_1,29_1,18_2\}.\\
    \end{array}
  \]
\end{example}

By Corollary~\ref{coro_4GDD}, a $4$-GDD of type $36^um^1$  exists for $u\geq 4$ and $m\equiv 0\pmod{3}$ with $0\leq m\leq 18(u-1)$. Applying Theorem~\ref{36m} with $g=36$ and the optimal $(36,6,4)_3$ code in Example~\ref{eg36}, we can obtain optimal codes of length $n=36u+m$ if the optimal $(m,6,4)_3$ code exists. Furthermore, when $m$ runs over all representatives of $ 0,3,6,\ldots,33\pmod{36}$, we can obtain all optimal codes of length $n\equiv 0\pmod{3}$ with only a few small cases left.

\subsubsection{Case when $t=1$}
\
\newline
\indent In this case, we deal with $n\equiv 1\pmod{3}$ by using $4$-GDDs with an extra point. By the property of good short codes, we need an optimal $(n,6,4)_3$ code $\C$  of \emph{Property (A)}:  $|\C|=U(n)$ and $\C$ contains exactly $n-1$ codewords of type $1^22^1$ and no codewords of type $2^2$, and the rest of type $1^4$. In fact, a code $\C$ has Property (A) and the leave graph of its corresponding packing is empty  only when $n\equiv 1,6,9,10\pmod{12}$. To prove this, one only needs to check that for what values of $n$, the following equality holds: $(U(n)-(n-1))\times 6+(n-1)\times 3=\binom{n}{2}$, i.e., the number of $2$-subsets contained in all codewords equals $\binom{n}{2}$. Note that the $(10,6,4)_3$ code given in Lemma~\ref{small_base} is a code with Property (A). The following example gives another explicit $(13,6,4)_3$ code with Property (A).


\begin{example}\label{eg13}
  The optimal $(13,6,4)_3$ code with Property (A) listed below  is obtained by computer search.

 \[\begin{array}{llll}
     \{9_1,10_1,11_1,12_1\}&\{6_1,7_1,8_1,12_1\}&\{2_1,3_1,8_1,11_1\}&\{0_1,1_1,2_1,12_1\}\\
     \{3_1,4_1,5_1,12_1\}&\{1_1,5_1,7_1,9_1\}&\{0_1,4_1,6_1,10_1\}&\{5_1,11_1,6_2\}\\
     \{7_1,10_1,3_2\}&\{1_1,4_1,11_2\}&\{2_1,5_1,10_2\}&\{0_1,11_1,7_2\}\\
     \{3_1,6_1,1_2\}&\{1_1,10_1,8_2\}&\{2_1,6_1,9_2\}&\{8_1,9_1,4_2\}\\
     \{0_1,8_1,5_2\}&\{4_1,7_1,2_2\}&\{3_1,9_1,0_2\}.\\
    \end{array}
  \]
\end{example}

The following theorem gives the detail of a construction using $4$-GDDs by adjoining one extra point and inputting short codes with Property (A). Corollary~\ref{coro_4GDD} provides us a good $4$-GDD of type $12^um^1$ for $u\geq 4$ and $m\equiv 0\pmod{3}$ with $0\leq m\leq 6(u-1)$. Since a $(13,6,4)_3$ code with Property (A) exists in Example~\ref{eg13}, pick $g=12$ in Theorem~\ref{12m}, we can obtain all optimal codes of length $n=12u+m+1$ if an optimal $(m+1,6,4)_3$ code exists. Furthermore, when $m+1$ runs over all representatives of $ 1,4,7,10\pmod{12}$, we can obtain all optimal codes of length $n\equiv 1\pmod{3}$ with only a few small cases left.

\begin{theorem}\label{12m}
  Suppose there exists a $4$-GDD of type $g^um^1$ where $g\equiv 0,5,8,9\pmod{12}$. If there exists an optimal $(g+1,6,4)_3$ code with Property (A) and $A_3(m+1,6,4)=U(m+1)$, then $A_3(gu+m+1,6,4)=U(gu+m+1)$.

\end{theorem}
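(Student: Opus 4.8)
The plan is to mimic the proof of Theorem~\ref{36m}, but to adjoin a single extra point to the $4$-GDD so as to raise the length from $gu+m$ to $gu+m+1$. Concretely, let $(X,\G,\B)$ be a $4$-GDD of type $g^um^1$ with $|X|=gu+m$, and let $\infty$ be a new point, so that $Y:=X\cup\{\infty\}$ has $n=gu+m+1$ points. First I would turn every block of $\B$ into a codeword of type $1^4$, giving a set $\C_0$. Next, to each group $G$ of size $g$ I would attach $\infty$ and place on $G\cup\{\infty\}$ an isomorphic copy of the given optimal $(g+1,6,4)_3$ code with Property~(A); crucially, I would choose the labelling so that the \emph{unique} position of this code carrying no symbol $2$ is mapped to $\infty$. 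Such a position exists and is unique: by Property~(A) the code has exactly $g$ codewords of type $1^22^1$, whose symbols $2$ occupy $g$ distinct positions by $(b')$, leaving precisely one of the $g+1$ positions free of a symbol $2$. Finally, on the group of size $m$ I would attach $\infty$ and place an optimal $(m+1,6,4)_3$ code, which exists by hypothesis. The code $\C$ is the union of $\C_0$ with all these short codes, each viewed inside $Y\times[2]$ by padding with zeros.

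I would then verify that $\C$ is an $(n,6,4)_3$ code by checking $(a')$ and $(b')$. For $(a')$, pairs inside a set $G\cup\{\infty\}$ are covered only by the short code placed there (distinct groups share only $\infty$, and $\{x,\infty\}$ with $x\in G$ occurs in no other short code since $x$ lies in a single group), while transversal pairs of $X$ are covered exactly once by $\B$; blocks never meet $\infty$, so no pair is covered twice and the supports form a $2$-$(n,\{2,3,4\},1)$ packing. The delicate point is $(b')$ at the shared point $\infty$, where a symbol $2$ could in principle come from any of the short codes, all of which contain $\infty$. This is exactly what Property~(A) together with the relabelling prevents: every copy on a $g$-group is free of a symbol $2$ at $\infty$, so the only possible symbol $2$ at $\infty$ comes from the single $(m+1)$-code and hence occurs in at most one codeword. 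At every other position $x\in G$ a symbol $2$ can only arise from the one short code living on $G$, so $(b')$ is inherited from that code.

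For the size, I would compute $|\C|=|\B|+u\,U(g+1)+U(m+1)$, with $|\B|=\frac16\bigl(\binom{gu+m}{2}-u\binom{g}{2}-\binom{m}{2}\bigr)$. The key arithmetic input is that for every $g\equiv 0,5,8,9\pmod{12}$ one has $(g+1)(g+6)\equiv 6\pmod{12}$, so $12\,U(g+1)=(g+1)(g+6)-6$; the resulting $-6u$ cancels a $+6u$ that appears when the block count is combined with the $u$ copies of the $g$-code. Writing $s=gu+m$ and using $(s+1)(s+6)=n(n+5)$, the whole expression collapses to $12|\C|=n(n+5)-r$, where $r=(m+1)(m+6)\bmod 12\in\{0,\dots,11\}$ comes from $12\,U(m+1)$. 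Since $|\C|$ is an integer, $r$ must equal $n(n+5)\bmod 12$, and therefore $|\C|=\lfloor n(n+5)/12\rfloor=U(n)$, matching the upper bound of Lemma~\ref{A3(n,6,4)}.

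The hard part is not any single step but the bookkeeping around the adjoined point: making the empty-leave Property~(A) do double duty (covering all pairs $\{x,\infty\}$ while never placing a symbol $2$ on $\infty$), and tracking the floor functions so that the seemingly dangerous $6u$ term cancels and the size lands exactly on $U(n)$. If instead the $g$-code had a non-empty leave, each of the $u$ copies would contribute uncovered pairs, the leave graph would grow with $u$, and by the counting in Lemma~\ref{A3(n,6,4)} the size could no longer reach $U(n)$; this is precisely why Property~(A) with an empty leave, forcing $g\equiv 0,5,8,9\pmod{12}$, is required.
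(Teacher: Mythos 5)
Your proposal is correct and follows essentially the same route as the paper: adjoin one point $\infty$, place on each $g$-group plus $\infty$ a copy of the Property~(A) code relabelled so that its single symbol-$2$-free position sits at $\infty$, place the optimal $(m+1,6,4)_3$ code on the $m$-group plus $\infty$, take all GDD blocks as type $1^4$ codewords, and count using $U(g+1)=\frac{(g+1)(g+6)-6}{12}$ for $g\equiv 0,5,8,9\pmod{12}$. Your verification of conditions $(a')$ and $(b')$ at $\infty$ and your closed-form size computation match the paper's argument (which it leaves partly implicit), so there is nothing to correct.
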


\begin{proof} Suppose that $(X',\G,\B)$ is a $4$-GDD of type $g^um^1$. Let $X=X'\cup \{\infty\}$. We will construct an optimal code $\C$ of length $gu+m+1$ in $X\times [2]$ as follows. For each $G\in \G$  of size $g$, construct an optimal $(g+1,6,4)_3$ code $\C_G\subset (G\cup \{\infty\})\times [2]$ with Property (A), such that the $g$ codewords of type  $1^22^1$ have symbol $2$ in the $g$ positions from $G$, and never in $\infty$. For the group $G$ of size $m$, let $\C_G$ be the optimal  $(m+1,6,4)_3$ code in $(G\cup \{\infty\})\times [2]$.
Next, we view $\C_G$ as a code in $X\times [2]$ in a natural way for each $G\in \G$. Finally, let $\C_0$ be the collection of codewords of type $1^4$ obtained from $\B$. Then $\C=\C_0\bigcup(\cup_{G\in\G}\C_G)$ is a $(gu+m+1,6,4)_3$ code in $X\times [2]$ of size
  \begin{align*}
    |\C|&=|\B|+u\cdot U(g+1)+U(m+1)\\
    &=\frac{(g(u-1)+m)gu+gum}{12}+u\cdot\frac{(g+1)(g+6)-6}{12}+\left\lfloor\frac{(m+1)(m+6)}{12}\right\rfloor\\
    &=\frac{gu(gu+2m+7)}{12}+\left\lfloor\frac{(m+1)(m+6)}{12}\right\rfloor\\
    &=\left\lfloor\frac{(gu+m+1)(gu+m+6)}{12}\right\rfloor=U(gu+m+1).
  \end{align*}
  In the second equality, we use the fact that $U(g+1)=\frac{(g+1)(g+6)-6}{12}$ for each $g\equiv 0,5,8,9\pmod{12}$.
\end{proof}


\subsubsection{Case when $t=2$}
\
\newline
\indent In this case, we construct the optimal $(n,6,4)_3$ codes for $n\equiv 2\pmod{3}$ by using $4$-GDDs with two extra points. Hence, we need a good short $(n,6,4)_3$ code $\C$ of \emph{Property (B)}:  $|\C|=U(n)$ and $\C$ contains exactly  $(n-2)$ codewords of type $1^22^1$, exactly one codeword of type $2^2$,  and the rest of type $1^4$. Similar to the analysis of Property (A), $\C$ has Property (B) and the leave graph of its corresponding packing is empty only when $n\equiv 2,5\pmod{12}$. Delete the codeword of type $2^2$, then the leave graph of its corresponding packing has only one edge. The following example gives an explicit $(26,6,4)_3$ code with Property (B).

\begin{example}\label{eg26}
  The base codewords for an optimal $(26,6,4)_3$ code with Property (B) are listed as follows, which are developed under the automorphism $(0$  $6$  $12$  $18)(1$ $7$  $13$  $19)(2$  $8$ $14$  $20)(3$  $9$  $15$ $21)(4$  $10$  $16$  $22)(5$ $11$  $17$  $23)(24$ $25)$ repeatedly. Codewords in short orbits are in bold. Note that the only codeword of  type $2^2$ is $\{24_2,25_2\}$.

 \[\begin{array}{llll}
     \bm{\{3_1,9_1,15_1,21_1\}}&\bm{\{4_1,10_1,16_1,22_1\}}&\bm{\{5_1,11_1,17_1,23_1\}}&\bm{\{0_1,6_1,12_1,18_1\}}\\
     \bm{\{1_1,7_1,13_1,19_1\}}&\bm{\{2_1,8_1,14_1,20_1\}}&\{4_1,12_1,19_1,25_1\}&\{0_1,10_1,19_1,20_1\}\\
     \{9_1,12_1,16_1,17_1\}&\{8_1,17_1,18_1,25_1\}&\{4_1,6_1,8_1,9_1\}&\{3_1,4_1,17_1,24_1\}\\
     \{7_1,9_1,20_1,24_1\}&\{2_1,5_1,9_1,18_1\}&\{2_1,7_1,10_1,21_1\}&\{1_1,21_1,12_2\}\\
     \{20_1,22_1,17_2\}&\{1_1,5_1,22_2\}&\{5_1,7_1,14_2\}&\{5_1,19_1,3_2\}\\
     \{0_1,17_1,1_2\}&\bm{\{24_2,25_2\}}.\\
    \end{array}
  \]
\end{example}


The good $4$-GDD of type $24^um^1$ exists for $u\geq 4$ and $m\equiv 0\pmod{3}$ with $0\leq m\leq 12(u-1)$ by Corollary~\ref{coro_4GDD}, and a construction of using $4$-GDDs by adding two extra points are given in the following theorem. Together with the good  $(24,6,4)_3$ code in Example~\ref{eg26}, and pick $g=24$ in Theorem~\ref{24m}, we can obtain all optimal codes of length $n=24u+m+2$ if an optimal $(m+2,6,4)_3$ code exists. Furthermore, when $m+2$ runs over  all representatives of $ 2,5,8,\ldots,23\pmod{24}$, we can obtain all optimal codes of length $n\equiv 2\pmod{3}$ with only a few small cases left.

\begin{theorem}\label{24m}
  Suppose there exists a 4-GDD of type $g^um^1$ where $g\equiv 0,3\pmod{12}$. If there exists an optimal $(g+2,6,4)_3$ code  with Property (B) and  $A_3(m+2,6,4)=U(m+2)$, then $A_3(gu+m+2,6,4)=U(gu+m+2)$.
\end{theorem}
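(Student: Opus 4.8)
The plan is to mimic the construction in Theorem~\ref{12m} almost verbatim, but adjoining two extra points instead of one, and inputting a short code with Property (B) over each group of size $g$. First I would take a $4$-GDD $(X',\G,\B)$ of type $g^um^1$ with $g\equiv 0,3\pmod{12}$, set $X=X'\cup\{\infty_1,\infty_2\}$, and for each group $G\in\G$ of size $g$ input an optimal $(g+2,6,4)_3$ code $\C_G\subset (G\cup\{\infty_1,\infty_2\})\times[2]$ with Property (B), arranged so that the $g$ codewords of type $1^22^1$ carry their symbol $2$ on the $g$ positions coming from $G$, and so that the single codeword of type $2^2$ is precisely $\{(\infty_1)_2,(\infty_2)_2\}$. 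For the distinguished group $G$ of size $m$, I would let $\C_G$ be an optimal $(m+2,6,4)_3$ code on $(G\cup\{\infty_1,\infty_2\})\times[2]$, which exists by hypothesis. Finally, let $\C_0=\{\{a_1,b_1,c_1,d_1\}:\{a,b,c,d\}\in\B\}$ be the type-$1^4$ codewords from the blocks, and set $\C=\C_0\cup\bigl(\cup_{G\in\G}\C_G\bigr)$.

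The verification that $\C$ is an $(n,6,4)_3$ code proceeds through properties $(a')$ and $(b')$. For $(a')$, the supports of blocks of $\B$ together with the supports of all input short codes should tile the pairs of $X$: within each group the short code covers all pairs inside $G\cup\{\infty_1,\infty_2\}$ (this is what Property (B) with an empty leave graph after deleting the $2^2$ codeword guarantees, leaving exactly the single edge $\{\infty_1,\infty_2\}$), while pairs across distinct groups not involving the infinity points are covered exactly once by $\B$. The only pair threatening a collision is $\{\infty_1,\infty_2\}$, which is covered once by the distinguished group's code and by the $2^2$ codeword in every size-$g$ group — but these all have support exactly $\{\infty_1,\infty_2\}$ and thus do not violate the packing condition for the codewords of type $1^4$, $1^22^1$, and they are handled separately via $(b')$. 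For $(b')$, I would check that symbol $2$ appears at most once per coordinate: the type-$1^22^1$ codewords place their $2$'s only on the non-infinity positions of each group (disjoint across groups), and the $2^2$ codewords all sit on $\{\infty_1,\infty_2\}$. The subtlety here is that every size-$g$ group contributes a $\{(\infty_1)_2,(\infty_2)_2\}$ codeword, so I must keep only one such codeword in $\C$ (or argue that identical codewords are not counted multiply), and verify that the minimum-entry condition $\min\{\textbf{u}_i,\textbf{v}_i\}=1$ holds for every crossing pair.

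The size computation then follows the same arithmetic as in Theorem~\ref{12m}. Summing gives
\begin{align*}
|\C|&=|\B|+u\cdot U(g+2)+U(m+2)-(u-1),
\end{align*}
where the correction $-(u-1)$ removes the duplicated $\{(\infty_1)_2,(\infty_2)_2\}$ codewords coming from the $u$ size-$g$ groups, keeping a single one. Using $U(g+2)=\frac{(g+2)(g+7)-2}{12}$ for $g\equiv 0,3\pmod{12}$, $|\B|=\frac{gu(g(u-1)+2m)}{12}$, and simplifying, I would collapse the expression to $\left\lfloor\frac{(gu+m+2)(gu+m+7)}{12}\right\rfloor=U(gu+m+2)$.

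The main obstacle I expect is the bookkeeping around the two infinity points and the type-$2^2$ codewords. Unlike the $t=1$ case where the extra point carries no symbol $2$, here the single $2^2$ codeword of each short code lands on the shared pair $\{\infty_1,\infty_2\}$, so I must be careful that (i) these coincide across all size-$g$ groups and are retained only once, producing the $-(u-1)$ adjustment, and (ii) the distinguished size-$m$ group's code and its own $2^2$ codeword are compatible with this shared structure. Checking that the resulting leave graph of the global packing is empty — so that the size truly attains $U(n)$ rather than $U(n)-1$ — is the crux, and it relies essentially on Property (B) guaranteeing that deleting the $2^2$ codeword leaves exactly the single edge $\{\infty_1,\infty_2\}$, which is then absorbed by the cross-group structure.
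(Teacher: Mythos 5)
Your construction matches the paper's up to one decisive point, and that point is where the proof breaks. The paper deletes the type-$2^2$ codeword $\{\imath_2,\jmath_2\}$ from \emph{every} one of the $u$ size-$g$ input codes (setting $\C_G=\C'_G\setminus\{\{\imath_2,\jmath_2\}\}$), whereas you retain a single copy and subtract only $u-1$. The two totals differ by exactly one codeword, and the paper's total $|\B|+u\bigl(U(g+2)-1\bigr)+U(m+2)$ already equals $U(gu+m+2)$; hence your expression $|\B|+u\cdot U(g+2)+U(m+2)-(u-1)$ equals $U(gu+m+2)+1$, not $U(gu+m+2)$ as you claim. (Concretely: $g=24$, $u=4$, $m=0$ gives $576+4\cdot 66+1=841=U(98)$ for the paper, but $842$ for your count.) Since $U(n)$ is a proven upper bound (Lemma~\ref{A3(n,6,4)}), no valid code of size $U(n)+1$ can exist, so the retained codeword must create a conflict somewhere.

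The conflict is with the optimal $(m+2,6,4)_3$ code on the distinguished group: the hypothesis only supplies \emph{some} optimal code on $(G_0\cup\{\infty_1,\infty_2\})\times[2]$, and such a code will in general cover the pair $\{\infty_1,\infty_2\}$ in a support (violating $(a')$ against your retained $\{(\infty_1)_2,(\infty_2)_2\}$) or place a symbol $2$ at $\infty_1$ or $\infty_2$ (violating $(b')$). Indeed at least one of these must occur for every optimal $(m+2,6,4)_3$ code, precisely because otherwise your construction would beat the upper bound. So the pair $\{\infty_1,\infty_2\}$ and the symbol-$2$ budget at those two coordinates must be ceded entirely to the size-$m$ group's code, which forces discarding all $u$ copies of the $2^2$ codeword. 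You correctly identified this bookkeeping as the crux but resolved it the wrong way; with the deletion of all $u$ copies (and the corresponding count $u\bigl(U(g+2)-1\bigr)$) the rest of your argument goes through as in the paper.
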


\begin{proof} Suppose that $(X',\G,\B)$ is a $4$-GDD of type $g^um^1$, where the specific group $G_0\in \G$ is of size $m$. Let $X=X'\cup \{\imath,\jmath\}$, where $\imath,\jmath\notin X'$. We will construct an optimal code $\C$ of length $gu+m+2$ in $X\times [2]$ as follows. For each $G\in \G$ of size $g$, construct an optimal $(g+2,6,4)_3$ code $\C'_G\subset (G\cup \{\imath,\jmath\})\times [2]$ with Property (B), such that the $g$ codewords of type $1^22^1$ have symbol $2$ in the $g$ positions from $G$, and the codeword of type $2^2$ is $\{\imath_2,\jmath_2\}$. Let $\C_G=\C'_G \setminus\{\{\imath_2,\jmath_2\}\}$ for each $G\in \G$  of size $g$.  For the group $G_0$ of size $m$, let $\C_{G_0}$ be an optimal  $(m+2,6,4)_3$ code in $(G_0\cup \{\imath,\jmath\})\times [2]$.
Next, we view $\C_G$ as a code in $X\times [2]$ in a natural way for each $G\in \G$. Finally, let $\C_0$ be the collection of codewords of type $1^4$ obtained from $\B$. Then $\C=\C_0\cup (\cup_{G\in \G}\C_G)$ is a $(gu+m+2,6,4)_3$ code in $X\times [2]$ of size
%
%
  \begin{align*}
    |\C|&=|\B|+u\cdot(U(g+2)-1)+U(m+2)\\
    &=\frac{(g(u-1)+m)gu+gum}{12}+u\cdot\left(\frac{(g+2)(g+7)-2}{12}-1\right)+\left\lfloor\frac{(m+2)(m+7)}{12}\right\rfloor\\
    &=\frac{gu(gu+2m+9)}{12}+\left\lfloor\frac{(m+2)(m+7)}{12}\right\rfloor\\
    &=\left\lfloor\frac{(gu+m+2)(gu+m+7)}{12}\right\rfloor=U(gu+m+2).
  \end{align*}
   In the second line, we use the fact that $U(g+2)=\frac{(g+2)(g+7)-2}{12}$ for each $g\equiv 0,3\pmod{12}$.
\end{proof}
\subsection{Tables}
In Table~\ref{t_small}, we give the general framework of constructing optimal $(n,6,4)_3$ codes by applying Theorems~\ref{36m}--\ref{24m} with classes of $4$-GDDs and good short codes. Note that the values of $m$ take all representatives modulo the corresponding $g$. For small codes of length $m+t$, all of them can be found in Lemma~\ref{small_base} and Appendix except for $54$ and $60$ in Table~\ref{t_imp}. Since the existence of $4$-GDDs requires $u\geq 4$ except when $m+t=41, 60$ ($u\geq 5$), we need to take care of those codes that can not be covered by Table~\ref{t_small}. We list the lengths of all those codes in the second column of Table~\ref{t_imp}, and deal with them one by one.  The codes whose lengths $n$ are in italic are constructed by applying Theorem~\ref{36m}--\ref{24m} with sporadic $4$-GDDs listed in the third column of Table~\ref{t_imp}. For those lengths in bold, the codes are constructed directly in Appendix. The optimal sizes of the codes with the rest $22$ lengths in normal font are not determined yet, for which the upper and lower bounds are given in Table~\ref{t7}.  The lower bounds in Table~\ref{t7} are deduced in Lemma~\ref{lowerbound} in Appendix~\ref{a4}.


\begin{table}
\center
\caption{The general framework of constructing optimal $(n,6,4)_3$ codes by applying Theorems~\ref{36m}--\ref{24m} with appropriate $4$-GDDs and short codes.  For small codes of length $m+t$, all of them can be found in Lemma~\ref{small_base} and Appendix except for $54$ and $60$ in Table~\ref{t_imp}. Note that when $m+t=41$ or $60$, the corresponding $u$ needs to satisfy that $u\geq 5$ by Corollary~\ref{coro_4GDD}.}\label{t_small}

\begin{tabularx}{14cm}{l|l|X|X|X|X}\hline

\multicolumn{1}{c|}{$n$} &\multicolumn{1}{c|}{$t$} & \multicolumn{1}{c|}{Types of $4$-GDDs}  &  \multicolumn{1}{c|}{ Small codes of length $m+t$}& \multicolumn{1}{c}{Source}\\
\hline
 $n\equiv 0\pmod{3}$ & $t=0$ & \multicolumn{1}{c|}{$36^um^1$, $u\geq 4$ }& $0$, $6$, $9$, $15$, $21$, $27$, $30$, $33$, $39$, $48$, $54$, $60$&\multicolumn{1}{c}{Theorem~\ref{36m}} \\\hline
 $n\equiv 1\pmod{3}$ & $t=1$ & \multicolumn{1}{c|}{$12^um^1$, $u\geq 4$} &$1$, $7$, $10$, $16$& \multicolumn{1}{c}{Theorem~\ref{12m}}\\\hline
 $n\equiv 2\pmod{3}$ & $t=2$ & \multicolumn{1}{c|}{$24^um^1$, $u\geq 4$ }& $8$, $11$, $20$, $23$, $26$, $29$, $38$, $41$& \multicolumn{1}{c}{Theorem~\ref{24m}} \\\hline
\end{tabularx}
\end{table}
%
%
%

\begin{table}
\center
\caption{All remaining codes due to $u\leq 3$ in general  and $u\leq 4$ when $m+t=41, 60$ in Table~\ref{t_small} are listed in the second column.  The codes whose lengths $n$ are in italic are constructed by applying Theorem~\ref{36m}--\ref{24m} with sporadic $4$-GDDs listed in the third column. For those lengths in bold, the codes are constructed directly in Appendix, while the upper and lower bounds for the rest are given in Table~\ref{t7}. }\label{t_imp}
\begin{tabularx}{17cm}{l|X|X|X}\hline
  \multicolumn{1}{c|}{$n$} &\multicolumn{1}{c|}{The remaining small codes} &\multicolumn{1}{c|}{Types of $4$-GDDs} &\multicolumn{1}{c}{Constructions}\\\hline
  $n\equiv 0\pmod{3}$&  $18$,  $24$,   $42$, $\bm{45}$,  $\bm{51}$, $\textit{54}$, $\bm {57}$, $\textit{60}$, $\bm{63}$, $\textit{66}$, $\textit{69}$, $72$, $\textit{75}$, $78$, $\textit{81}$, $84$, $\bm{87}$, $90$, $\bm{93}$, $96$, $\bm{99}$, $102$, $\textit{105}$, $\textit{108}$, $\bm{111}$, $\textit{114}$, $\textit{117}$, $\textit{120}$, $\bm{123}$, $\textit{126}$, $\textit{129}$, $\textit{132}$, $\textit{135}$, $\textit{138}$, $\textit{141}$, $\textit{147}$, $\textit{156}$, $\textit{162}$, $\textit{168}$, $\textit{204}$
	&$15^39^1$, $15^4$, $15^46^1$, $15^49^1$, $15^5$, $15^421^1$, $15^530^1$, $27^4$, $15^79^1$, $27^49^1$, $15^8$, $15^86^1$, $15^89^1$, $15^727^1$, $15^9$, $15^733^1$, $15^821^1$, $15^827^1$, $15^836^1$, $39^46^1$, $15^848^1$, $15^{11}39^1$
	& Theorem~\ref{36m}\\\hline
$n\equiv 1\pmod{3}$& $\bm{19}$, $\bm{22}$, $\bm{25}$, $\textit{28}$, $\bm{31}$, $\bm{34}$, $\textit{37}$, $\bm{40}$, $\textit{43}$, $\textit{46}$, $\textit{52}$&$7^4$, $9^4$, $9^46^1$, $9^5$,$9^56^1$& Theorem~\ref{12m} (one exception $n=28$ using Theorem~\ref{36m})\\\hline
$n\equiv 2\pmod{3}$& $14$, $17$,   $\bm{32}$, $35$,  $44$, $47$, $\bm{50}$, $\bm{53}$, $56$, $59$, $\bm{62}$, $\bm{65}$, $68$, $71$, $\bm{74}$, $\bm{77}$, $80$, $83$, $\bm{86}$, $\bm{89}$, $92$, $95$, $\bm{98}$, $\bm{101}$, $\textit{110}$, $\bm{113}$, $\textit{137}$&$27^4$, $27^5$&Theorem~\ref{24m} (a $(29,6,4)_3$ code with Property (B) is given in Table~\ref{t3_B2})\\\hline
\end{tabularx}
\end{table}

\begin{table*}[h!]
\center
\caption{The lower  and upper bounds of $A_3(n,6,4)$ for small $n$.}\label{t7}
\vspace{-0.3cm}
\[\begin{array}{c|c|c|c|c|c|c|c|c|c|c|c}
\hline
n & 14 & 17 & 18 & 24 & 35 & 42 & 44 & 47 & 56 & 59 & 68 \\
\hline
\text{lower bound} & 21 & 30 & 33 & 55 & 114 & 161 & 176 & 200 & 280 & 310 & 409 \\
\hline
\text{upper bound} & 22 & 31 & 34 & 58 & 116 & 164 & 179 & 203 & 284 & 314 & 413 \\
\hline
\hline
n & 71 & 72 & 78 & 80 & 83 & 84 & 90 & 92 & 95 & 96 & 102\\
\hline
\text{lower bound} & 445 & 461 & 538 & 562 & 603 & 616 & 705 & 738 & 786 & 803 & 901 \\
\hline
\text{upper bound} & 449 & 462 & 539 & 566 & 608 & 623 & 712 & 743 & 791 & 808 & 909 \\
\hline
\end{array}\]
\end{table*}

We summarize our main results of this section as follows.
\begin{theorem}
Let $M=\{14$, $17$, $18$, $24$, $35$, $42$, $44$, $47$, $56$, $59$, $68$, $71$, $72$, $78$, $80$, $83$, $84$, $90$, $92$, $95$, $96$, $102\}$. For any positive integer  $n$,
  \begin{equation*}
    A_3(n,6,4)=\left\{
    \begin{array}{ll}
      U(n)-1,& \text{if } n=3,4,5,12\\
      U(n),& \text{if } n\not\in M \cup\{3,4,5,12\}.
    \end{array}
    \right.
\end{equation*}
For $n\in M$, the lower  and upper bounds for $A_3(n,6,4)$ are given in the Table~\ref{t7}.

\end{theorem}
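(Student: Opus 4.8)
The plan is to assemble the final summary theorem by cataloguing which length classes have already been settled in the preceding subsections and verifying that the three recursive engines (Theorems~\ref{36m}, \ref{12m}, \ref{24m}) together with the tabulated direct constructions cover every residue class and every surviving small case. The statement is not a fresh proof but a bookkeeping synthesis, so my first step is to partition all positive integers $n$ by residue modulo $3$, matching each class to its corresponding value of $t\in\{0,1,2\}$ as recorded in Table~\ref{t_small}. For each class I would confirm that the relevant good $4$-GDD family exists for all sufficiently large $u$ (via Corollary~\ref{coro_4GDD} with $g=36,12,24$ respectively), that the required good short code of the right Property (A)/(B) is in hand (Lemma~\ref{small_base}, Examples~\ref{eg13} and \ref{eg26}), and that the small input codes $A_3(m+t,6,4)=U(m+t)$ listed in the fourth column are all available. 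Granting these, the recursive theorems immediately yield $A_3(n,6,4)=U(n)$ for all large $n$ in each residue class.

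First I would dispose of the genuinely exceptional small lengths. The values $n=3,4,5,12$ fail to reach $U(n)$ and are handled individually: $n=3,4,5$ by the direct estimates preceding Lemma~\ref{small_base}, and $n=12$ by Lemma~\ref{12}, each giving $A_3(n,6,4)=U(n)-1$. Next I would address the finitely many lengths not covered by the generic recursion because they require $u\le 3$ (or $u\le 4$ when $m+t\in\{41,60\}$); these are exactly the entries of the second column of Table~\ref{t_imp}. For the italicised lengths I would invoke Theorems~\ref{36m}--\ref{24m} again but fed with the sporadic $4$-GDDs listed in the third column of that table, checking the congruence hypotheses on $g$ in each theorem statement. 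For the bold lengths I would cite the explicit direct constructions in the Appendix. This leaves precisely the twenty-two lengths forming the set $M$, for which no construction meeting $U(n)$ is known; for those I would record that the upper bound is Lemma~\ref{A3(n,6,4)} and the lower bound is Lemma~\ref{lowerbound}, as displayed in Table~\ref{t7}.

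The main obstacle is not any single deep argument but the \emph{completeness of the covering}: one must be certain that every residue representative $m\pmod g$ is realised by an available small code, that every congruence condition on $g$ in Theorems~\ref{36m}--\ref{24m} is compatible with the chosen GDD family (e.g.\ $g=36\equiv 0\pmod{12}$ for $t=0$, $g=12\equiv 0\pmod{12}$ for $t=1$, $g=24\equiv 0\pmod{12}$ for $t=2$), and that the finite exceptional list in Table~\ref{t_imp} is exhaustive. Concretely, the delicate point is confirming that for $g=36$ the residues $m\in\{0,6,9,15,21,27,30,33,39,48,54,60\}$ cover all of $0,3,6,\dots,33\pmod{36}$ together with the input codes of length $m$, and similarly for $g=12$ and $g=24$; any gap here would leave an uncovered infinite family. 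Once this exhaustiveness is checked against Tables~\ref{t_small} and \ref{t_imp}, the stated dichotomy $A_3(n,6,4)=U(n)$ for $n\notin M\cup\{3,4,5,12\}$, with the four $U(n)-1$ exceptions and the undetermined values on $M$ bounded as in Table~\ref{t7}, follows by simply collecting the cases.
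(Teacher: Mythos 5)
Your proposal is correct and follows essentially the same route as the paper: the theorem is a summary statement whose proof is exactly the assembly you describe --- the exceptional values $n=3,4,5,12$ from the direct estimates and Lemma~\ref{12}, the residue-class recursions of Theorems~\ref{36m}--\ref{24m} with $g=36,12,24$ fed by the short codes of Table~\ref{t_small}, the leftover lengths handled by sporadic $4$-GDDs or direct Appendix constructions per Table~\ref{t_imp}, and the set $M$ bounded by Lemma~\ref{A3(n,6,4)} and Lemma~\ref{lowerbound}. The completeness checks you flag (residue coverage of $m\pmod g$ and exhaustiveness of the exceptional list) are precisely the verifications the paper delegates to its tables.
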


%
%

\section{Ternary CWCs of Distance $2w-2$}\label{dis2w}

In this section, 
we consider  ternary CWCs with weight $w$ and distance $2w-2$ for general $w$,  and determine the value of $A_3(n,2w-2,w)$ when $n$ is sufficiently large under certain conditions based on graph packings. Note that we do not give a lower bound on the length $n$ for which $A_3(n,2w-2,w)$ is determined, compared with the explicit constructions for $w=3,4$. One interesting thing is that, although there are  vectors of growing types when $w$ increases, the codes with maximum size we construct will have only two different types of codewords: $1^w$ and $1^{w-2}2^1$.


  For a graph $H$ without isolated vertices,  $\gcd(H)$ denotes the greatest common divisor of the degrees of all vertices of $H$. A graph $G$ is called {\it $d$-divisible} if  $ \gcd(G)$ is divisible by $d$, while $G$ is called {\it nowhere $d$-divisible} if no vertex of $G$ has degree divisible by  $d$. An {\it $H$-packing} of a graph $G$ is a set $\{G_1,\ldots,G_s\}$ of edge-disjoint subgraphs of $G$ where each subgraph is isomorphic to $H$. Further, if $G$ is a union of $G_i$, $i=1,2,\ldots,s$, then we call it an {\it $H$-decomposition}. The {\it $H$-packing number} of $G$, denoted by $P(H,G)$, is the maximum cardinality of an $H$-packing of $G$.   In particular, if $H=K_k$ and $G=K_n$ are  complete graphs, then $P(H,G)$ is equal to the packing number $D(n,k,2)$ introduced in Section~\ref{sec:iib}.
%
Our main tool is the following result of Alon et al. \cite{alon1998packing}.

\begin{theorem}\label{graph}
  Let $H$ be a graph with $h$ edges, and let $\gcd(H)=e$. Then there exist $N=N(H)$, and $\varepsilon=\varepsilon(H)$ such that for any $e$-divisible or nowhere $e$-divisible graph $G=(V,E)$ with $n>N(H)$ vertices and $\delta(G)>(1-\varepsilon(H))n$, \[P(H,G)=\left\lfloor\frac{\sum_{v\in V}\alpha_v}{2h}\right\rfloor,\] unless when $G$ is $e$-divisible and $0<|E|\pmod{h}\leq \frac{e^2}{2}$, in which case  \[P(H,G)=\left\lfloor\frac{\sum_{v\in V}\alpha_v}{2h}\right\rfloor-1.\]  Here, $\alpha_v$ is the degree of vertex $v$, rounded down to the closest multiple of $e$.
\end{theorem}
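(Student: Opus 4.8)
The plan is to prove matching upper and lower bounds for $P(H,G)$: the upper bound is an elementary divisibility count, while essentially all of the difficulty sits in the lower bound, which says that this count is the only obstruction once $G$ is dense.

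First I would establish the upper bound. Fix any $H$-packing $\{G_1,\dots,G_s\}$ of $G$. Since $e=\gcd(H)$ is the greatest common divisor of the vertex degrees of $H$, \emph{every} vertex degree of $H$ is a multiple of $e$; hence each copy $G_i$ that meets a vertex $v$ contributes a multiple of $e$ to the number of packed edges at $v$. Consequently the number of packed edges incident to $v$ is a multiple of $e$ and is at most $\alpha_v$, the largest multiple of $e$ not exceeding $\deg_G(v)$. Summing over $v\in V$ and halving (each edge is counted at both endpoints) gives $sh\le \tfrac12\sum_{v\in V}\alpha_v$, i.e. $s\le \lfloor \sum_{v}\alpha_v/(2h)\rfloor$. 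For the exceptional case I would note that when $G$ is $e$-divisible the leave (the set of unpacked edges) is itself $e$-divisible, and that a nonempty $e$-divisible simple graph has at least $(e^2+e)/2>e^2/2$ edges: its $k$ positive-degree vertices each have degree at least $e$ and at most $k-1$, so $k\ge e+1$ and the edge count is at least $\tfrac12 ke\ge \tfrac12(e+1)e$. Thus if $0<|E|\bmod h\le e^2/2$ no packing can leave exactly $|E|\bmod h$ edges, which forces $s\le \lfloor|E|/h\rfloor-1$.

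Next I would attack the lower bound in two stages. In the first stage I would produce an \emph{approximate} $H$-decomposition covering all but an $o(n^2)$ fraction of the edges; for dense $G$ this follows from a R\"odl-nibble / random greedy argument, using $\delta(G)>(1-\varepsilon(H))n$ to guarantee that each edge lies in many copies of $H$, so that the auxiliary hypergraph whose hyperedges are the copies of $H$ admits a near-perfect matching. In the second stage I would close the integrality gap so that the final leave has exactly the minimum number of edges permitted by the count above: first a local correction adjusting each vertex's leave-degree to the correct residue $\deg_G(v)\bmod e$, then removal of the residual edge surplus in blocks of $h$, carried out with an absorbing structure reserved in advance that can swallow any small $e$-divisible remainder.

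The hard part will be this second stage: turning an almost-perfect packing into one that is optimal up to the precise divisibility obstruction. Two points need care. First, the minimum-degree hypothesis $\delta(G)>(1-\varepsilon(H))n$ (with $\varepsilon(H)$ and $N(H)$ depending only on $H$) must be strong enough to give enough flexibility of $H$-copies both to build the absorbers and to perform local reroutings without ever creating new uncovered edges. Second, the dichotomy between the $e$-divisible and nowhere-$e$-divisible hypotheses is exactly what keeps the leave's degree sequence controllable: in the nowhere-$e$-divisible case each $\alpha_v<\deg_G(v)$ always leaves slack to reroute, whereas in the $e$-divisible case one must additionally track the $e^2/2$ threshold to decide between $\lfloor\sum_{v}\alpha_v/(2h)\rfloor$ and one less. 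Verifying that the absorbers and local corrections never interfere, and that the error terms from the nibble stage are small enough to be mopped up by the reserved absorbing structure, is the technical heart of the argument.
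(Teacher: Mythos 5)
First, a structural point: the paper does not prove this statement at all --- it is imported verbatim as a known result of Alon, Caro and Yuster \cite{alon1998packing} and used as a black box in Section~VI, so there is no internal proof to compare yours against. Judged on its own terms, your upper bound is correct and complete: since every vertex degree of $H$ is a multiple of $e$, each copy of $H$ contributes a multiple of $e$ to the packed degree at each vertex, giving $2sh\le\sum_{v}\alpha_v$; and in the $e$-divisible case the leave is again $e$-divisible, hence if nonempty has at least $e(e+1)/2>e^2/2$ edges, which rules out a packing of size $\lfloor |E|/h\rfloor$ when $0<|E|\bmod h\le e^2/2$. (Your threshold $e(e+1)/2$ looks stronger than the stated $e^2/2$, but there is no contradiction: $e$-divisibility of $G$ forces $2|E|\equiv 0\pmod e$ and $e\mid 2h$, which prevents $|E|\bmod h$ from landing strictly between $e^2/2$ and $e(e+1)/2$.)

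The genuine gap is the lower bound, which is where essentially all of the content of the theorem lives. What you give is a program --- R\"odl nibble for an approximate packing, then absorbers and local corrections to shrink the leave to the exact divisibility minimum --- but none of it is executed: you do not construct the absorbing structures, do not define the rerouting operations, and do not verify that the $o(n^2)$ leftover from the nibble can actually be driven down to at most $e^2/2$ edges. You flag this yourself as ``the technical heart,'' and it is a theorem of Gustavsson-type depth; the nibble alone yields only $P(H,G)\ge(1-o(1))|E|/h$, which is far weaker than the exact statement. For the record, Alon, Caro and Yuster do not argue via absorption: they deduce the packing result from Gustavsson's $H$-decomposition theorem for dense graphs by deleting from $G$ a sparse, carefully chosen subgraph so that the remainder is $e$-divisible with edge count divisible by $h$, and then decomposing the remainder perfectly. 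Either route requires importing or reproving a deep decomposition theorem; as written, your argument establishes only the upper bound.
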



Consider a ternary code $\C\subset I_3^{n}$  with constant weight $w$. A codeword in $\C$ has \emph{type}  $1^x2^y$ with $x+2y=w$ for some  $y\in [0,\left\lfloor\frac{w}{2}\right\rfloor]$. By the \textbf{UNC} condition and the  distance formula in Eq. (\ref{eqdis}), $\C$ is an   $(n,2w-2,w)_3$ code if and only if $\C$ satisfies the following properties:
\begin{itemize}
  \item[$(1'')$] The collection of subsets $supp(\textbf{u})\subset\bbZ_n$, for all codewords $\textbf{u}\in\C$, forms a $2$-$(n,\{\left\lceil\frac{w}{2}\right\rceil, \left\lceil\frac{w}{2}\right\rceil+1, \ldots, w\},1)$ packing.
  \item[$(2'')$] For any two codewords $\textbf{u},\textbf{v}\in \C$, if $i\in supp(\textbf{u})\cap supp(\textbf{v})$, then $\min\{\textbf{u}_i,\textbf{v}_i\}=1$.
\end{itemize}


\begin{lemma}\label{general}
  $A_3(n,2w-2,w)\leq\left\lfloor\frac{n\left(n-1-(w-1)(w-2)\right)}{w(w-1)}\right\rfloor+n$.
\end{lemma}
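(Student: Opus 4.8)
The plan is to reduce the bound to a two-constraint weighted counting (linear-programming) argument over the possible codeword types. Writing each codeword as type $1^{w-2y}2^{y}$ with $y\in[0,\lfloor w/2\rfloor]$, let $N_y$ be the number of codewords of this type; such a codeword has support of size $w-y$ and exactly $y$ positions carrying the symbol $2$. The total number of codewords is $A=\sum_{y}N_y$, and the goal is to bound this from above.

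First I would extract two linear inequalities in the $N_y$ from properties $(1'')$ and $(2'')$. For the pair constraint, I claim that no pair $\{i,j\}$ lies in the support of two distinct codewords: if it did, then $|supp(\textbf{u})\cap supp(\textbf{v})|\geq 2$, and by $(2'')$ together with Eq.~(\ref{eqdis}) one gets $d(\textbf{u},\textbf{v})=2w-2|supp(\textbf{u})\cap supp(\textbf{v})|\leq 2w-4<2w-2$, a contradiction. Summing $\binom{w-y}{2}$ over all codewords therefore counts each pair at most once, giving $\sum_y N_y\binom{w-y}{2}\leq\binom{n}{2}$. For the symbol-$2$ constraint, property $(2'')$ forbids the entry $2$ from appearing at the same position in two codewords, so counting occurrences of $2$ yields $\sum_y y\,N_y\leq n$.

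The heart of the argument is to combine these two inequalities with the right nonnegative multipliers. I would weight the pair inequality by $\frac{2}{w(w-1)}$ and the symbol-$2$ inequality by $\frac{2}{w}$, and verify the key identity
\[
\frac{2}{w(w-1)}\binom{w-y}{2}+\frac{2}{w}\,y=\frac{(w-y)(w-y-1)}{w(w-1)}+\frac{2y}{w}=1+\frac{y(y-1)}{w(w-1)}.
\]
Since $y(y-1)\geq 0$ for every integer $y\geq 0$, this quantity is at least $1$ for all admissible $y$ (with equality exactly at $y=0,1$). Hence
\[
A=\sum_y N_y\leq\sum_y\left(\frac{2}{w(w-1)}\binom{w-y}{2}+\frac{2}{w}\,y\right)N_y\leq\frac{2}{w(w-1)}\binom{n}{2}+\frac{2}{w}\,n=\frac{n(n-1)}{w(w-1)}+\frac{2n}{w}.
\]
A short rearrangement gives $\frac{n(n-1)}{w(w-1)}+\frac{2n}{w}=\frac{n(n-1-(w-1)(w-2))}{w(w-1)}+n$, and since $A$ is an integer the floor may be inserted, yielding the claim. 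As a sanity check, this specializes to $\lfloor(n^2+3n)/6\rfloor$ for $w=3$ and $\lfloor n(n+5)/12\rfloor$ for $w=4$, matching Lemma~\ref{thm1} and Lemma~\ref{A3(n,6,4)}.

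I do not expect a serious obstacle, since the proof is essentially a weighted count. The only genuinely nontrivial step is discovering the correct multipliers $\frac{2}{w(w-1)}$ and $\frac{2}{w}$, equivalently spotting the identity above that collapses the weighted sum of the two type-dependent coefficients to $1+\frac{y(y-1)}{w(w-1)}$; everything else is bookkeeping. I would be careful to justify, via the distance computation in the second paragraph, that distinct codewords contribute distinct pairs, so that the pair-counting inequality holds with exactly the stated coefficient $\binom{w-y}{2}$.
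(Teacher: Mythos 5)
Your proof is correct and is essentially the paper's own argument: the same two counting inequalities $\sum_y\binom{w-y}{2}N_y\le\binom{n}{2}$ and $\sum_y yN_y\le n$ are combined with multipliers proportional to $1$ and $w-1$, and your identity $\frac{2}{w(w-1)}\binom{w-y}{2}+\frac{2}{w}y=1+\frac{y(y-1)}{w(w-1)}$ is just the paper's identity $\binom{w-t}{2}+t(w-1)=\binom{w}{2}+\binom{t}{2}$ divided by $\binom{w}{2}$. No issues.
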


\begin{proof}
Let $\beta_y$ be the number of codewords of type $1^x2^y$,  $y=0,1,\ldots, \left\lfloor\frac{w}{2}\right\rfloor$.
  By properties $(1'')$ and $(2'')$, we have
  \begin{align}
   \binom{w}{2}\beta_0+\binom{w-1}{2}\beta_1+\cdots+\binom{w-\left\lfloor\frac{w}{2}\right\rfloor}{2}\beta_{\left\lfloor\frac{w}{2}\right\rfloor} &\leq \binom{n}{2}, \text{ and}\label{(13)}\\
    \beta_1+2\beta_2+\cdots+\left\lfloor\frac{w}{2}\right\rfloor \beta_{\left\lfloor\frac{w}{2}\right\rfloor}& \leq n.\label{(14)}
  \end{align}
 Note that $\binom{w-t}{2}+t(w-1)=\binom{w}{2}+\binom{t}{2}$.  Computing  $(\ref{(13)})+(w-1)(\ref{(14)})$, we obtain
  \begin{equation*}
    \beta_0+\beta_1+\cdots+\beta_{\left\lfloor\frac{w}{2}\right\rfloor}\leq\left\lfloor\frac{n\left(n-1-(w-1)(w-2)\right)}{w(w-1)}\right\rfloor+n,
  \end{equation*}which completes the proof.
\end{proof}

The upper bound given in Lemma~\ref{general} is consistent with the cases for  $w=3$ and $4$. For convenience, let \[B(n):=\left\lfloor\frac{n\left(n-1-(w-1)(w-2)\right)}{w(w-1)}\right\rfloor.\] We will prove that the upper bound in Lemma~\ref{general} can be achieved for certain values of $n$ by Theorem~\ref{graph}. The desired code $\C$ has only two types of codewords, $1^w$ and $1^{w-2}2^1$.  Consider the complete graph $K_n$ with vertex set $\bbZ_n$. For each $\textbf{u}\in \C$, view it as a complete subgraph on the vertex set $supp(\textbf{u})$, this is an injective mapping since we do not know the symbol is $1$ or $2$ in the position from $supp(\textbf{u})$. Then property $(1'')$ tells that all these subgraphs are pairwise edge-disjoint, thus form a packing of $K_n$. We need to be careful about codewords containing symbol $2$, for which the corresponding positions should be different by property $(2'')$. We construct such codewords by Golomb rulers  \cite{Shearer2006difference}.

An $(n,w)$ \emph{modular Golomb ruler} is a set of $w$ integers $\{a_1 , a_2 ,\ldots, a_w\}$, such that all of
the differences, $\{a_i- a_j\mid 1 \leq i \neq j \leq w\}$, are distinct and nonzero modulo $n$. Suppose that we have an  $(n,w-1)$  modular Golomb ruler $\{a_1, a_2,\ldots, a_{w-1}\}$.  Then the $n$ codewords $\{(a_1+i)_2, (a_2+i)_1,\ldots, (a_{w-1}+i)_1\}$ of type $1^{w-2}2^1$, $i\in \bbZ_n$ have pairwise distance at least $2w-2$. Associate these $n$ codewords with $n$ complete graphs  $K_{w-1}$ with vertex set $\{a_1+i, a_2+i,\ldots, a_{w-1}+i\}$, $i\in \bbZ_n$,  which are edge-disjoint  subgraphs of $K_n$  with vertex set $\bbZ_n$.   Let $S$ be the union of these $n$  subgraphs $K_{w-1}$. It is easy to show that $S$ is a {\it regular} subgraph of $K_n$ with each vertex of same degree $(w-1)(w-2)$. Denote $G=K_n\setminus S$. We will apply Theorem~\ref{graph} to obtain a $K_w$-packing of $G$, which yields the remaining  codewords of type $1^w$.

%

\begin{theorem}\label{thmlarge}
Let $w\geq 3$ be any fixed integer.  Then $A_3(n,2w-2,w)\geq B(n)+n-1$ for any
 sufficiently large integer $n\equiv 1\pmod{w-1}$. Further if $n\equiv 1,w,-w+2,-2w+3\pmod{w(w-1)}$,  then $A_3(n,2w-2,w)=B(n)+n$.
\end{theorem}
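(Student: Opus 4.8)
The plan is to exhibit codes matching the upper bound of Lemma~\ref{general}, using only codewords of the two types $1^w$ and $1^{w-2}2^1$, as forced by the slack term $\sum_y\binom{y}{2}\beta_y$ in that bound. I would first lay down the $1^{w-2}2^1$ codewords from an $(n,w-1)$ modular Golomb ruler $\{a_1,\dots,a_{w-1}\}$: the $n$ translates $\{(a_1+i)_2,(a_2+i)_1,\dots,(a_{w-1}+i)_1\}$, $i\in\bbZ_n$, place the symbol $2$ in $n$ distinct positions (so property $(2'')$ holds) and have pairwise edge-disjoint $K_{w-1}$ supports (so property $(1'')$ holds); a short check of Eq.~(\ref{eqdis}) shows all pairwise distances are at least $2w-2$. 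Writing $S$ for the union of these supports and $G=K_n\setminus S$, the remaining $1^w$ codewords will come from a $K_w$-packing of $G$ supplied by Theorem~\ref{graph}, taken with $H=K_w$, so that $e=\gcd(K_w)=w-1$ and $h=\binom{w}{2}$.

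The core computation is the degree structure of $G$. Since each vertex lies in exactly $w-1$ of the translates, $S$ is regular of degree $(w-1)(w-2)$, so $G$ is regular of degree $n-1-(w-1)(w-2)$; this is divisible by $e=w-1$ precisely when $n\equiv1\pmod{w-1}$, making $G$ an $e$-divisible graph. Then $\alpha_v=\deg_G(v)$ and $\sum_v\alpha_v=2|E(G)|$, so Theorem~\ref{graph} gives $P(K_w,G)=\lfloor|E(G)|/h\rfloor=B(n)$ unless we fall into the exceptional clause, where it gives $B(n)-1$. In either case the code has at least $B(n)-1+n$ codewords, which already yields the first assertion $A_3(n,2w-2,w)\ge B(n)+n-1$. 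To decide the clause I would reduce $|E(G)|\bmod h$ modulo $w(w-1)$: using $(w-1)(w-2)\equiv-2(w-1)\pmod{w(w-1)}$ one gets $2|E(G)|\equiv n(n+2w-3)\pmod{w(w-1)}$. Evaluating at $n\equiv w$ and at $n\equiv-2w+3$ gives $2|E(G)|\equiv0$, hence $|E(G)|\equiv0\pmod h$; we are \emph{not} in the exceptional clause, $P(K_w,G)=B(n)$, and the total is exactly $B(n)+n$, matching Lemma~\ref{general}.

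The hard part is the two residues $n\equiv1$ and $n\equiv-w+2\pmod{w(w-1)}$, where the same evaluation gives $2|E(G)|\equiv2(w-1)$, i.e.\ $|E(G)|\bmod h=w-1\le\tfrac{(w-1)^2}{2}$, so Theorem~\ref{graph} lands in its exceptional clause and the regular construction yields only $B(n)-1$ copies of $K_w$, one codeword short. Since every position already carries a symbol $2$, no further $1^{w-2}2^1$ codeword can be appended, and since $|E(S)|=n\binom{w-1}{2}$ is fixed, no reshuffling of the translates escapes the clause while keeping $G$ $e$-divisible. My fix is to trade one $1^{w-2}2^1$ codeword for a better residual graph: build the $1^{w-2}2^1$ part from an $(n-1,w-1)$ modular Golomb ruler on $\bbZ_{n-1}$ and adjoin an isolated point $\infty$, so that $S$ now leaves $\infty$ uncovered while every other vertex still lies in $w-1$ supports. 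The new $G'=K_n\setminus S$ is again $e$-divisible (the point $\infty$ has degree $n-1\equiv0$, the others $n-1-(w-1)(w-2)$), but now $2|E(G')|\equiv(n-1)(n+2w-2)\equiv0\pmod{w(w-1)}$ for both residues, so $|E(G')|\equiv0\pmod h$ and the exceptional clause no longer applies. The identity $|E(G')|/h=\frac{n(n-1-(w-1)(w-2))}{w(w-1)}+\frac{w-2}{w}=B(n)+1$, valid exactly because $|E(G)|\bmod h=w-1$, then gives $B(n)+1$ copies of $K_w$, for a total of $(n-1)+(B(n)+1)=B(n)+n$.

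What remains is routine: the existence of the modular Golomb rulers for all large $n$ \cite{Shearer2006difference}; the verification that $\delta(G),\delta(G')>(1-\varepsilon(K_w))n$ and $n>N(K_w)$ for $n$ large, so that Theorem~\ref{graph} applies; and the distance checks for the assembled codes. Combining these constructions with the upper bound of Lemma~\ref{general} yields $A_3(n,2w-2,w)=B(n)+n$ for $n\equiv1,w,-w+2,-2w+3\pmod{w(w-1)}$. I expect the only genuinely delicate point to be the bookkeeping in the exceptional cases: recognizing that $|E(G)|\bmod h=w-1$ is precisely the value for which the ``uncover one point'' trade converts the $-1$ deficiency of Theorem~\ref{graph} into a $+1$ surplus, and that the other residues $\equiv1\pmod{w-1}$ need not, in general, reach the bound.
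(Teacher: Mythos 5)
Your proposal is correct and follows essentially the same route as the paper's proof: the $(n,w-1)$ modular Golomb ruler translates give the $1^{w-2}2^1$ codewords and a $(w-1)(w-2)$-regular graph $S$, Theorem~\ref{graph} applied to $K_n\setminus S$ handles the residues $w$ and $-2w+3$ and gives the general $B(n)+n-1$ bound, and the residues $1$ and $-w+2$ are rescued exactly as in the paper by switching to an $(n-1,w-1)$ ruler with an adjoined point $\infty$ so that the exceptional clause is avoided and the packing gains one extra $K_w$. Your modular bookkeeping (e.g.\ $2|E(G)|\equiv n(n+2w-3)$ and $|E(G')|/h=B(n)+1$) matches the paper's computations.
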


\begin{proof} By \cite{Shearer2006difference}, there exists an $(n,w-1)$ modular Golomb ruler  for any $n=\Omega(w^2)$. Thus, by above discussion, we obtain $n$ codewords of type $1^{w-2}2^1$, and a regular graph $G=(V,E)$ with vertex set $V=\bbZ_n$  and degree $d=n-1-(w-1)(w-2)$.  Let $H=K_w$ in Theorem~\ref{graph}, then $e=\gcd(H)=w-1$ and $h=\frac{w(w-1)}{2}$. Since $n$ is  sufficiently large, we have
$d>(1-\varepsilon)n$, where $\varepsilon=\varepsilon(K_w)$  is defined in Theorem~\ref{graph}. Further $n\equiv 1\pmod{w-1}$ implies that $d\equiv 0\pmod{w-1}$, i.e., $G$ is $(w-1)$-divisible, so $\alpha_v=d=n-1-(w-1)(w-2)$ for each $v\in V$. By Theorem~\ref{graph}, we have a $K_w$-packing of $G$ with packing number

  \begin{align*}
    P(K_w,G)&\geq \left\lfloor\frac{\sum_{v\in V}\alpha_v}{2\binom{w}{2}}\right\rfloor-1=\left\lfloor\frac{n\left(n-1-(w-1)(w-2)\right)}{w(w-1)}\right\rfloor-1=B(n)-1.
  \end{align*}
 Each $K_w$ of this packing gives a codeword of type $1^w$ in a natural way. Thus we obtain at least $B(n)-1$ codewords. Combining the $n$ codewords of type $1^{w-2}2^1$, we have an $(n,2w-2,w)_3$ code of size at least $B(n)+n-1$.

When $n\equiv w,-2w+3\pmod{w(w-1)}$,  it is easy to check that  $|E|=nd/2\equiv 0\pmod{\frac{w(w-1)}{2}}$. In fact, one can show that these are the only two congruent classes of $n$ that satisfy $|E| \pmod{\frac{w(w-1)}{2}}\notin [1,\frac{(w-1)^2}{2}]$. By Theorem~\ref{graph},  we have $P(K_w,G)=B(n)$, hence $A_3(n,2w-2,w)=B(n)+n$.

  For $n\equiv 1,-w+2\pmod{w(w-1)}$, we consider a slightly different model. Let $S'$ be a regular graph on $\bbZ_{n-1}$ of degree $(w-1)(w-2)$, which is a union of $n-1$ edge-disjoint complete subgraphs $K_{w-1}$ obtained by an $(n-1,w-1)$ modular Golomb ruler. Note that $S'$ corresponds to $n-1$ codewords of  type $1^{w-2}2^1$. Let $G'=K_n\setminus S'$, which has vertex set
  $V'=\bbZ_{n-1}\cup\{\infty\}$ and edge set $E'$.
 Then $G'$ has degree $n-1-(w-1)(w-2)$ for each $v\in\bbZ_{n-1}$,  and degree $n-1$ for the vertex  $\infty$. So $G'$ is $(w-1)$-divisible.  Further, it is easy to check that \[|E'|=\frac{(n-1)\left(n-1-(w-1)(w-2)\right)+n-1}{2}\equiv 0\pmod{\frac{w(w-1)}{2}}.\] By Theorem~\ref{graph}.  we have
  \begin{align*}
    P(K_w,G^\prime)&=\left\lfloor\frac{\sum_{v\in V}\alpha_v}{2\binom{w}{2}}\right\rfloor=\frac{n\left(n-1-(w-1)(w-2)\right)+(w-1)(w-2)}{w(w-1)}=B(n)+1.
  \end{align*}
  The last equality is due to the fact that $B(n)=\frac{n\left(n-1-(w-1)(w-2)\right)-2(w-1)}{w(w-1)}$ in these cases.
 Hence we get $A_3(n,2w-2,w)=n-1+ P(K_w,G^\prime)=B(n)+n$ for $n\equiv 1,-w+2\pmod{w(w-1)}$.
\end{proof}

Theorem~\ref{thmlarge} tells us when $n$ is sufficiently large and $n\equiv 1,w,-w+2,-2w+3\pmod{w(w-1)}$, $A_3(n,2w-2,w)=B(n)+n$. In particular, $A_3(n,6,4)=B(n)+n=U(n)$ with $n\equiv 1\pmod{3}$ for $w=4$, and $A_3(n,4,3)=B(n)+n=\left\lfloor\frac{n^2+3n}{6}\right\rfloor$ with $n\equiv 1\pmod{2}$ for $w=3$. These results are consistent with the previous sections when $w\leq 4$ and $d=2w-2$, and are weaker than them since we have proved that $A_3(n,2w-2,w)=B(n)+n$ for all $n$ and $w\leq 4$ before, with very few exceptions for $w=4$.


\section{Conclusion}\label{s:con}
In this paper, we determine the maximum size of constant-weight codes in $\ell_1$-metric over non-negative integers or over $I_3=\{0,1,2\}$, for weight three and four. We also provide an asymptotic result for the maximum size of ternary codes with general weight $w$ and distance $2w-2$. It is plausible that we could extend the method in Section~\ref{dis2w} by looking for irregular graphs $S$, such that by deleting edges in $S$, the resultant has a clique decomposition. We leave this for future study. Further, constructing optimal constant-weight  codes over non-negative integers is a more challenging problem.


\section*{Acknowledgement}


The authors express their gratitude to the anonymous reviewers  and the associate editor for their detailed and constructive comments which
are very helpful to the improvement of the technical presentation of this paper.

\section*{Appendix}
Here, we present all small optimal codes  in Tables~\ref{t_small} and \ref{t_imp} that are constructed by computer search. Codewords in short orbits are in bold for all tables in this section. For easy verification of the codewords listed here, we upload some code files on GitHub so that interested readers can check the correctness, see \href{https://github.com/llxu2020/verification}{https://github.com/llxu2020/verification}.

\subsection{The small optimal codes of length $n\equiv 0\pmod{3}$}\label{a1}

For each $n\in\{15$, $21$, $27$, $30$, $33$, $39$, $45$, $48$, $51$, $57$, $63$, $87$, $93$, $99$, $111$, $123\}$, the base codewords of the optimal codes in $\bbZ_n\times [2]$ are listed in Table~\ref{t3_B0} but with different group actions.  For $n= 21$, $33$ and $45$, the automorphism is $(0$ $4$ $8$ $\cdots$ $n-5)(1$ $5$ $9$ $\cdots$ $n-4)(2$ $6$ $10$ $\cdots$ $n-3)(3$ $7$ $11$ $\cdots$ $n-2)(n-1)$; for $n=30$, the automorphism is $(0$ $4$ $8$ $\cdots$ $20)(1$ $5$ $9$ $\cdots$ $21)(2$ $6$ $10$ $\cdots$ $22)(3$ $7$ $11$ $\cdots$ $23)(24$  $25$  $26$  $\cdots$  $29)$; for $n=57$, $93$, the automorphism is $(0$  $2$  $4$  $\cdots$  $n-3)(1$  $3$  $5$  $\cdots$  $n-2)(n-1)$; and for the rest, the codes will be generated by adding $3\pmod{n}$ on the set of coordinates except for $48$, in which case it will be developed by  adding $6 $ modulo $48$.
\\

\setlength{\tabcolsep}{2pt}
\begin{footnotesize}
\begin{longtable}{l|llllll}
\caption{Base codewords of small CWCs of length $n\equiv 0\pmod{3}$.}\label{t3_B0}\\
\hline
\multicolumn{1}{c|}{$n$} & \multicolumn{6}{c}{\text{Base Codewords}}\\\hline
\endfirsthead
\caption{Base codewords of small CWCs of length $n\equiv 0\pmod{3}$. Codewords in short orbits are in bold, as are all other codes in this paper found by computer.}\\
\hline
\multicolumn{1}{c|}{$n$} & \multicolumn{6}{c}{\text{Base Codewords}}\\\hline
\endhead
\hline\\ \endfoot
\hline\endlastfoot
 \multirow{1}*{15} &$\{0_1,5_1,8_1,9_1\}$&$\{1_1,4_1,8_1,14_1\}$&$\{0_1,1_1,2_2\}$&$\{0_1,3_1,10_2\}$&$\{1_1,7_1,3_2\}$ \\
\hline \multirow{2}*{21} &$\{7_1,9_1,14_1,19_1\}$&$\{3_1,4_1,6_1,7_1\}$&$\{5_1,14_1,17_1,18_1\}$&$\{0_1,7_1,12_1,13_1\}$&$\{0_1,5_1,9_1,11_1\}$&$\{8_1,18_1,4_2\}$\\
 &$\{4_1,20_1,1_2\}$&$\{10_1,20_1,19_2\}$&$\{8_1,14_1,6_2\}$\\
\hline \multirow{2}*{27}&$\{3_1,7_1,19_1,21_1\}$&$\{2_1,14_1,18_1,21_1\}$&$\{5_1,8_1,16_1,22_1\}$&$\{1_1,2_1,23_1,25_1\}$&$\{2_1,3_1,15_1,20_1\}$&$\{12_1,19_1,18_2\}$\\
 & $\{7_1,15_1,25_2\}$&$\{1_1,6_1,8_2\}$\\
\hline \multirow{3}*{30} &$\{16_1,23_1,28_1,29_1\}$&\bm{$\{0_1,6_1,12_1,18_1\}$}&$\{17_1,18_1,27_1,29_1\}$&$\{1_1,11_1,12_1,21_1\}$&$\{6_1,10_1,15_1,17_1\}$&$\{4_1,12_1,14_1,29_1\}$\\
    &\bm{$\{1_1,7_1,13_1,19_1\}$}&$\{1_1,3_1,10_1,26_1\}$&$\{5_1,8_1,13_1,25_1\}$&$\{1_1,6_1,8_1,22_1\}$&$\{2_1,3_1,23_1,27_1\}$&$\{16_1,25_1,17_2\}$\\
    &$\{12_1,23_1,15_2\}$&$\{15_1,20_1,0_2\}$&$\{3_1,25_1,14_2\}$&\bm{$\{24_2,27_2\}$}\\
\hline \multirow{3}*{33} &\bm{$\{2_1,10_1,18_1,26_1\}$}&\bm{$\{7_1,15_1,23_1,31_1\}$}&$\{9_1,27_1,30_1,31_1\}$&$\{0_1,12_1,26_1,31_1\}$&$\{16_1,17_1,18_1,27_1\}$&\bm{$\{1_1,9_1,17_1,25_1\}$}\\
    &$\{0_1,10_1,23_1,30_1\}$&$\{1_1,4_1,10_1,31_1\}$&\bm{$\{0_1,8_1,16_1,24_1\}$}&$\{5_1,10_1,20_1,24_1\}$&$\{0_1,3_1,9_1,15_1\}$&$\{0_1,5_1,7_1,25_1\}$\\
    &$\{14_1,32_1,31_2\}$&$\{5_1,18_1,1_2\}$&$\{18_1,25_1,22_2\}$&$\{21_1,32_1,0_2\}$\\
\hline \multirow{2}*{39}&$\{23_1,27_1,30_1,34_1\}$&$\{15_1,28_1,30_1,36_1\}$&$\{13_1,14_1,23_1,31_1\}$&$\{0_1,22_1,26_1,38_1\}$&$\{10_1,24_1,33_1,35_1\}$&$\{8_1,11_1,13_1,33_1\}$\\
 &$\{0_1,5_1,23_1,29_1\}$&$\{18_1,19_1,30_1,38_1\}$&$\{10_1,19_1,7_2\}$&$\{19_1,34_1,24_2\}$&$\{7_1,13_1,20_2\}$\\
\hline
\multirow{3}*{45} &$\{16_1,21_1,22_1,33_1\}$&$\{8_1,19_1,37_1,43_1\}$&$\{10_1,34_1,35_1,37_1\}$&$\{15_1,23_1,28_1,37_1\}$&$\{9_1,20_1,22_1,34_1\}$&$\{6_1,34_1,39_1,40_1\}$\\
 &$\{5_1,15_1,25_1,41_1\}$&$\{7_1,22_1,30_1,37_1\}$&$\{2_1,4_1,19_1,25_1\}$&$\{0_1,16_1,20_1,23_1\}$&$\{0_1,25_1,27_1,34_1\}$&$\{0_1,8_1,26_1,30_1\}$\\
  &$\{3_1,31_1,34_1,43_1\}$&$\{17_1,44_1,22_2\}$&$\{16_1,44_1,35_2\}$&$\{25_1,28_1,21_2\}$&$\{36_1,37_1,24_2\}$\\
\hline
\multirow{6}*{48}&\bm{$\{5_1,17_1,29_1,41_1\}$}&\bm{$\{0_1,12_1,24_1,36_1\}$}&\bm{$\{1_1,13_1,25_1,37_1\}$}&\bm{$\{2_1,14_1,26_1,38_1\}$}&\bm{$\{3_1,15_1,27_1,39_1\}$}&\bm{$\{4_1,16_1,28_1,40_1\}$}\\
 &$\{13_1,17_1,23_1,42_1\}$&$\{10_1,14_1,19_1,28_1\}$&$\{26_1,28_1,30_1,35_1\}$&$\{10_1,20_1,27_1,47_1\}$&$\{19_1,27_1,35_1,45_1\}$&$\{0_1,15_1,22_1,28_1\}$\\
  &$\{19_1,25_1,42_1,44_1\}$&$\{13_1,15_1,44_1,47_1\}$&$\{4_1,15_1,25_1,26_1\}$&$\{10_1,11_1,29_1,42_1\}$&$\{11_1,12_1,15_1,38_1\}$&$\{1_1,14_1,29_1,34_1\}$\\
   &$\{8_1,16_1,19_1,24_1\}$&$\{0_1,6_1,13_1,20_1\}$&$\{0_1,27_1,30_1,41_1\}$&$\{6_1,15_1,16_1,43_1\}$&$\{1_1,23_1,31_1,45_1\}$&$\{6_1,7_1,10_1,39_1\}$\\
    &$\{8_1,18_1,26_1,39_1\}$&$\{9_1,14_1,15_2\}$&$\{10_1,41_1,43_2\}$&$\{15_1,17_1,40_2\}$&$\{34_1,39_1,0_2\}$&$\{11_1,26_1,20_2\}$\\
     &$\{10_1,26_1,23_2\}$\\
\hline
\multirow{3}*{51}&$\{23_1,27_1,48_1,50_1\}$&$\{32_1,34_1,44_1,47_1\}$&$\{6_1,39_1,44_1,50_1\}$&$\{26_1,27_1,36_1,42_1\}$&$\{9_1,26_1,31_1,48_1\}$&$\{4_1,7_1,25_1,27_1\}$\\
 &$\{2_1,25_1,34_1,39_1\}$&$\{7_1,15_1,39_1,47_1\}$&$\{0_1,1_1,7_1,48_1\}$&$\{0_1,16_1,20_1,40_1\}$&$\{8_1,22_1,29_1,37_1\}$&$\{20_1,46_1,11_2\}$\\
  &$\{5_1,23_1,22_2\}$&$\{4_1,16_1,42_2\}$\\
\hline \multirow{2}*{57} &$\{39_1,40_1,46_1,48_1\}$&$\{0_1,22_1,23_1,53_1\}$&\bm{$\{0_1,14_1,28_1,42_1\}$}&\bm{$\{1_1,15_1,29_1,43_1\}$}&$\{20_1,24_1,39_1,49_1\}$&$\{5_1,10_1,13_1,28_1\}$\\
 &$\{0_1,17_1,21_1,37_1\}$&$\{9_1,22_1,31_1,33_1\}$&$\{10_1,20_1,50_1,55_1\}$&$\{4_1,16_1,36_1,43_1\}$&$\{7_1,25_1,13_2\}$&$\{17_1,56_1,4_2\}$\\
\hline \multirow{3}*{63} &$\{14_1,22_1,48_1,59_1\}$&$\{8_1,19_1,27_1,59_1\}$&$\{1_1,13_1,44_1,60_1\}$&$\{26_1,47_1,61_1,62_1\}$&$\{18_1,37_1,42_1,44_1\}$&$\{11_1,12_1,18_1,33_1\}$\\
 &$\{18_1,45_1,52_1,58_1\}$&$\{4_1,8_1,46_1,62_1\}$&$\{8_1,37_1,47_1,57_1\}$&$\{1_1,16_1,54_1,55_1\}$&$\{21_1,26_1,43_1,56_1\}$&$\{12_1,24_1,32_1,57_1\}$\\
  &$\{0_1,9_1,59_1,61_1\}$&$\{6_1,9_1,37_1,55_1\}$&$\{10_1,29_1,32_2\}$&$\{52_1,55_1,19_2\}$&$\{26_1,32_1,9_2\}$\\
\hline \multirow{4}*{87}&$\{50_1,69_1,76_1,82_1\}$&$\{31_1,49_1,73_1,74_1\}$&$\{41_1,43_1,74_1,83_1\}$&$\{15_1,38_1,44_1,65_1\}$&$\{10_1,15_1,56_1,67_1\}$&$\{29_1,30_1,46_1,69_1\}$\\
 &$\{19_1,55_1,58_1,66_1\}$&$\{17_1,35_1,69_1,74_1\}$&$\{7_1,29_1,36_1,79_1\}$&$\{6_1,10_1,14_1,38_1\}$&$\{44_1,47_1,52_1,59_1\}$&$\{23_1,43_1,59_1,84_1\}$\\
  &$\{0_1,38_1,51_1,73_1\}$&$\{19_1,40_1,53_1,57_1\}$&$\{2_1,12_1,31_1,40_1\}$&$\{9_1,15_1,36_1,80_1\}$&$\{3_1,4_1,14_1,81_1\}$&$\{6_1,8_1,31_1,81_1\}$\\
   &$\{0_1,15_1,33_1,57_1\}$&$\{4_1,58_1,60_1,77_1\}$&$\{40_1,66_1,13_2\}$&$\{27_1,30_1,86_2\}$&$\{40_1,52_1,72_2\}$\\
\hline \multirow{3}*{93} &\bm{$\{0_1,23_1,46_1,69_1\}$}&$\{2_1,38_1,51_1,78_1\}$&$\{21_1,33_1,49_1,58_1\}$&$\{20_1,25_1,28_1,35_1\}$&$\{8_1,12_1,14_1,26_1\}$&$\{41_1,54_1,55_1,88_1\}$\\
 &$\{1_1,49_1,64_1,85_1\}$&$\{7_1,13_1,67_1,74_1\}$&$\{27_1,62_1,79_1,84_1\}$&$\{7_1,29_1,58_1,88_1\}$&$\{8_1,40_1,59_1,90_1\}$&$\{20_1,48_1,57_1,74_1\}$\\
  &$\{4_1,7_1,31_1,57_1\}$&$\{0_1,20_1,44_1,91_1\}$&$\{3_1,7_1,37_1,56_1\}$&$\{13_1,85_1,11_2\}$&$\{19_1,92_1,30_2\}$\\
 \hline\multirow{5}*{99}&$\{36_1,60_1,68_1,98_1\}$&$\{31_1,41_1,77_1,89_1\}$&$\{14_1,15_1,25_1,46_1\}$&$\{14_1,28_1,73_1,92_1\}$&$\{56_1,66_1,79_1,96_1\}$&$\{14_1,47_1,69_1,90_1\}$\\
 &$\{11_1,30_1,39_1,45_1\}$&$\{19_1,20_1,23_1,94_1\}$&$\{27_1,79_1,81_1,82_1\}$&$\{9_1,38_1,42_1,90_1\}$&$\{12_1,23_1,54_1,79_1\}$&$\{5_1,29_1,73_1,78_1\}$\\
 &$\{49_1,62_1,79_1,88_1\}$&$\{40_1,47_1,56_1,74_1\}$&$\{10_1,30_1,52_1,93_1\}$&$\{3_1,38_1,63_1,67_1\}$&$\{20_1,25_1,36_1,77_1\}$&$\{0_1,49_1,72_1,86_1\}$\\
 &$\{72_1,75_1,77_1,92_1\}$&$\{19_1,33_1,52_1,70_1\}$&$\{3_1,31_1,37_1,90_1\}$&$\{0_1,53_1,61_1,73_1\}$&$\{10_1,32_1,39_1,82_1\}$&$\{68_1,74_1,29_2\}$\\
 &$\{58_1,73_1,66_2\}$&$\{5_1,7_1,43_2\}$\\
 \hline\multirow{5}*{111}&$\{12_1,55_1,57_1,93_1\}$&$\{46_1,67_1,97_1,104_1\}$&$\{55_1,70_1,86_1,103_1\}$&$\{2_1,16_1,91_1,108_1\}$&$\{5_1,78_1,99_1,107_1\}$&$\{33_1,72_1,75_1,100_1\}$\\
&$\{17_1,20_1,63_1,79_1\}$&$\{53_1,71_1,78_1,102_1\}$&$\{11_1,48_1,95_1,99_1\}$&$\{42_1,79_1,90_1,99_1\}$&$\{42_1,76_1,82_1,95_1\}$&$\{34_1,62_1,68_1,101_1\}$\\
&$\{23_1,59_1,61_1,71_1\}$&$\{14_1,33_1,48_1,55_1\}$&$\{2_1,13_1,18_1,31_1\}$&$\{1_1,27_1,85_1,88_1\}$&$\{29_1,34_1,42_1,69_1\}$&$\{0_1,11_1,32_1,82_1\}$\\
&$\{9_1,53_1,98_1,108_1\}$&$\{11_1,68_1,96_1,98_1\}$&$\{4_1,16_1,58_1,59_1\}$&$\{6_1,7_1,16_1,39_1\}$&$\{0_1,35_1,50_1,70_1\}$&$\{0_1,6_1,20_1,52_1\}$\\
&$\{9_1,13_1,27_1,58_1\}$&$\{5_1,28_1,74_1,75_1\}$&$\{10_1,95_1,35_2\}$&$\{41_1,88_1,49_2\}$&$\{97_1,101_1,42_2\}$.\\
\hline\multirow{6}*{123}&$\{32_1,34_1,115_1,118_1\}$&$\{40_1,60_1,85_1,111_1\}$&$\{12_1,40_1,61_1,117_1\}$&$\{0_1,68_1,119_1,122_1\}$&$\{56_1,81_1,116_1,121_1\}$&$\{25_1,55_1,104_1,119_1\}$\\
 &$\{52_1,59_1,111_1,115_1\}$&$\{17_1,26_1,93_1,107_1\}$&$\{30_1,32_1,80_1,111_1\}$&$\{23_1,30_1,87_1,119_1\}$&$\{22_1,97_1,102_1,114_1\}$&$\{9_1,43_1,105_1,116_1\}$\\
 &$\{37_1,56_1,84_1,103_1\}$&$\{13_1,22_1,46_1,56_1\}$&$\{10_1,16_1,28_1,41_1\}$&$\{50_1,80_1,86_1,99_1\}$&$\{13_1,17_1,54_1,119_1\}$&$\{70_1,72_1,78_1,85_1\}$\\
 &$\{36_1,56_1,74_1,109_1\}$&$\{3_1,20_1,32_1,58_1\}$&$\{31_1,59_1,82_1,83_1\}$&$\{6_1,42_1,90_1,100_1\}$&$\{20_1,28_1,66_1,82_1\}$&$\{17_1,57_1,62_1,94_1\}$\\
 &$\{29_1,70_1,86_1,97_1\}$&$\{4_1,15_1,36_1,39_1\}$&$\{3_1,33_1,56_1,66_1\}$&$\{6_1,14_1,28_1,115_1\}$&$\{9_1,10_1,54_1,63_1\}$&$\{21_1,47_1,6_2\}$\\
 &$\{41_1,84_1,61_2\}$&$\{21_1,83_1,122_2\}$\\
\end{longtable}
\end{footnotesize}


\subsection{The small optimal codes of length $n\equiv 1\pmod{3}$}\label{a2}

For each $n\in\{16$, $19$, $22$, $25$, $31$, $34$, $40 \}$,
the base codewords of the optimal codes in $\bbZ_n\times [2]$ are listed in Table~\ref{B_t2} but with different group actions. For $n=16$, $40$, the automorphism is $(0$  $4$  $8$  $\cdots$  $n-4)(1$  $5$  $9$  $\cdots$  $n-3)(2$  $6$  $10$  $\cdots$  $n-2)(3$  $7$  $11$  $\cdots$  $n-1)$; for $n=19$, $31$, the automorphism is $(0$  $1$  $2$  $\cdots$  $n-1)$; for $n=22$, $34$, the automorphism is $(0$  $3$  $6$  $\cdots$  $n-4)(1$  $4$  $7$  $\cdots$  $n-3)(2$  $5$  $8$  $\cdots$  $n-2)(n-1)$; and for $n=25$,  the automorphism is $(0$  $6$  $12$  $18)(1$  $7$  $13$  $19)(2$  $8$  $14$  $20)(3$  $9$  $15$  $21)(4$  $10$  $16$  $22)(5$  $11$  $17$  $23)(24)$.

\setlength{\tabcolsep}{2pt}
\begin{footnotesize}
\begin{longtable}{l|llllll}
\caption{Base codewords of small CWCs of length $n\equiv 1\pmod{3}$.}\label{B_t2}\\
\hline
\multicolumn{1}{c|}{$n$} & \multicolumn{6}{c}{\text{Base Codewords}}\\\hline
\endfirsthead
\caption{Base codewords of small CWCs of length $n\equiv 1\pmod{3}$.}\\
\hline
\multicolumn{1}{c|}{$n$} & \multicolumn{6}{c}{\text{Base Codewords}}\\\hline
\endhead
\hline\\ \endfoot
\hline\endlastfoot
\multirow{2}*{16}&\bm{$\{2_1,6_1,10_1,14_1\}$}&\bm{$\{3_1,7_1,11_1,15_1\}$}&\bm{$\{0_1,4_1,8_1,12_1\}$}&\bm{$\{1_1,5_1,9_1,13_1\}$}&$\{0_1,7_1,10_1,13_1\}$&$\{0_1,9_1,14_1,15_1\}$\\
&$\{2_1,7_1,12_2\}$&$\{1_1,10_1,3_2\}$&$\{0_1,3_1,5_2\}$&$\{0_1,1_1,2_2\}$\\
\hline
 \multirow{1}*{19} &$\{0_1,2_1,5_1,15_1\}$&$\{0_1,1_1,8_2\}$\\
  \hline
  \multirow{2}*{22} &$\{1_1,3_1,9_1,10_1\}$&$\{3_1,5_1,6_1,14_1\}$&$\{1_1,11_1,18_1,21_1\}$&$\{1_1,2_1,4_1,17_1\}$&$\{0_1,16_1,10_2\}$&$\{1_1,5_1,8_2\}$\\
   &$\{3_1,8_1,12_2\}$\\
\hline \multirow{4}*{25} &\bm{$\{4_1,10_1,16_1,22_1\}$}&\bm{$\{5_1,11_1,17_1,23_1\}$}&\bm{$\{0_1,6_1,12_1,18_1\}$}&\bm{$\{1_1,7_1,13_1,19_1\}$}&\bm{$\{2_1,8_1,14_1,20_1\}$}&\bm{$\{3_1,9_1,15_1,21_1\}$}\\
 &$\{1_1,8_1,10_1,23_1\}$&$\{0_1,8_1,15_1,16_1\}$&$\{7_1,10_1,15_1,17_1\}$&$\{3_1,13_1,17_1,18_1\}$&$\{5_1,9_1,20_1,24_1\}$&$\{0_1,4_1,7_1,24_1\}$\\
  &$\{8_1,9_1,12_1,22_1\}$&$\{6_1,8_1,11_1,19_1\}$&$\{0_1,17_1,22_2\}$&$\{2_1,23_1,12_2\}$&$\{3_1,8_1,7_2\}$&$\{4_1,21_1,5_2\}$\\
   &$\{12_1,13_1,15_2\}$&$\{10_1,19_1,14_2\}$\\
 \hline 31 &$\{0_1,2_1,8_1,20_1\}$&$\{0_1,5_1,14_1,21_1\}$&$\{0_1,1_1,4_2\}$\\
 \hline\multirow{2}*{34} &$\{19_1,24_1,27_1,31_1\}$&$\{4_1,22_1,24_1,29_1\}$&$\{2_1,17_1,27_1,29_1\}$&$\{1_1,14_1,18_1,33_1\}$&$\{2_1,3_1,15_1,21_1\}$&$\{0_1,1_1,10_1,11_1\}$\\
  &$\{1_1,5_1,7_1,29_1\}$&$\{5_1,12_1,21_2\}$&$\{6_1,28_1,25_2\}$&$\{11_1,28_1,14_2\}$\\
\hline \multirow{3}*{40} &$\{4_1,12_1,15_1,27_1\}$&$\{22_1,25_1,33_1,37_1\}$&$\{3_1,11_1,12_1,17_1\}$&$\{4_1,16_1,17_1,30_1\}$&\bm{$\{0_1,10_1,20_1,30_1\}$}&\bm{$\{1_1,11_1,21_1,31_1\}$}\\
 &$\{4_1,20_1,38_1,39_1\}$&$\{2_1,6_1,18_1,39_1\}$&$\{0_1,6_1,29_1,38_1\}$&$\{1_1,2_1,15_1,19_1\}$&$\{8_1,15_1,17_1,33_1\}$&$\{13_1,19_1,32_1,34_1\}$\\
  &$\{20_1,37_1,2_2\}$&$\{6_1,35_1,11_2\}$&$\{26_1,35_1,33_2\}$&$\{9_1,16_1,12_2\}$\\
\end{longtable}
\end{footnotesize}

\subsection{The small optimal codes of length $n\equiv 2\pmod{3}$}\label{a3}

For each $n\in \{20$, $23$, $26$, $29$, $32$, $38$, $41$, $50$, $53$, $62$, $65$, $74$, $77$, $86$, $89$, $98$, $101$, $113\}$, the base codewords of the optimal codes in $\bbZ_n\times [2]$ are listed in Table~\ref{t3_B2} but with different group actions. The automorphism is $(0$  $\frac{n-8}{3}$  $\frac{2(n-8)}{3})(1$  $\frac{n-8}{3}+1 $ $\frac{2(n-8)}{3}+1)\cdots(\frac{n-8}{3}-1$  $\frac{2(n-8)}{3}-1$  $n-9)(n-8$ $n-6$  $n-4)(n-7$  $n-5$  $n-3)(n-2)(n-1)$ for $n=20$ and $23$; for $n=32$, the automorphism is
$(0$  $3$  $6$  $\cdots$  $21)(1$  $4$  $7$  $\cdots$  $22)(2$  $5$  $8$  $\cdots$  $23)(24$  $26$  $28$  $30$  $25$  $27$  $29$  $31)$;
for $n=38$, $50$, $62$, $74$, $86$, $98$, the automorphism is $(0$ $2$  $4$  $\cdots$  $n-4) (1$  $3$  $5$  $\cdots$  $n-3) (n-2$  $n-1)$; and for  $n=29$,  $41$, $53$, $65$, $77$, $89$, $101$, $113$, the automorphism is $(0$  $3$ $6$  $\cdots$  $n-5) (1$  $4$  $7$  $\cdots$  $n-4) (2$  $5$  $8$ $\cdots$  $n-3)(n-2)(n-1)$.

\setlength{\tabcolsep}{2pt}
\begin{footnotesize}
\begin{longtable}{l|llllll}
  \caption{Base codewords of small CWCs of length $n\equiv 2\pmod{3}$.}\label{t3_B2}\\
\hline
\multicolumn{1}{c|}{$n$} & \multicolumn{6}{c}{\text{Base Codewords}}\\\hline
\endfirsthead
\caption{Base codewords of small CWCs of length $n\equiv 2\pmod{3}$.}\\
\hline
\multicolumn{1}{c|}{$n$} & \multicolumn{6}{c}{\text{Base Codewords}}\\\hline
\endhead
\hline\\ \endfoot
\hline\endlastfoot
 \multirow{3}*{20} &\bm{$\{0_1,6_1,12_1,18_1\}$}&\bm{$\{1_1,7_1,13_1,19_1\}$}&\bm{$\{2_1,8_1,14_1,20_1\}$}&$\{13_1,18_1,22_1,30_1\}$&$\{6_1,11_1,29_1,30_1\}$&$\{2_1,5_1,10_1,29_1\}$\\
    &$\{12_1,13_1,25_1,28_1\}$&$\{5_1,16_1,20_1,28_1\}$&$\{2_1,3_1,16_1,19_1\}$&$\{0_1,2_1,9_1,28_1\}$&$\{0_1,3_1,10_1,11_1\}$&$\{5_1,24_1,15_2\}$\\
    &$\{27_1,29_1,13_2\}$&$\{0_1,22_1,20_2\}$&\bm{$\{24_2,25_2\}$}&\bm{$\{30_2,31_2\}$}\\
\hline \multirow{4}*{23} &$\{10_1,13_1,20_1,21_1\}$&$\{6_1,11_1,12_1,18_1\}$&\bm{$\{15_1,17_1,19_1,21_1\}$}&$\{6_1,9_1,10_1,19_1\}$&$\{3_1,14_1,18_1,19_1\}$&$\{1_1,9_1,15_1,18_1\}$\\
    &$\{3_1,9_1,20_1,22_1\}$&$\{0_1,7_1,18_1,20_1\}$&$\{1_1,3_1,8_1,10_1\}$&$\{5_1,7_1,10_1,15_1\}$&$\{2_1,7_1,13_1,14_1\}$&$\{2_1,4_1,6_1,21_1\}$\\
    &$\{15_1,22_1,2_2\}$&$\{3_1,15_1,6_2\}$&$\{10_1,14_1,4_2\}$&$\{1_1,22_1,0_2\}$&$\{7_1,19_1,8_2\}$&\bm{$\{15_2,16_2\}$}\\
    &\bm{$\{21_2,22_2\}$}\\
\hline
\multirow{4}*{26}&\bm{$\{3_1,9_1,15_1,21_1\}$}&\bm{$\{4_1,10_1,16_1,22_1\}$}&\bm{$\{5_1,11_1,17_1,23_1\}$}&\bm{$\{0_1,6_1,12_1,18_1\}$}&\bm{$\{1_1,7_1,13_1,19_1\}$}&\bm{$\{2_1,8_1,14_1,20_1\}$}\\
&$\{4_1,12_1,19_1,25_1\}$&$\{0_1,10_1,19_1,20_1\}$&$\{9_1,12_1,16_1,17_1\}$&$\{8_1,17_1,18_1,25_1\}$&$\{4_1,6_1,8_1,9_1\}$&$\{3_1,4_1,17_1,24_1\}$\\
&$\{7_1,9_1,20_1,24_1\}$&$\{2_1,5_1,9_1,18_1\}$&$\{2_1,7_1,10_1,21_1\}$&$\{1_1,21_1,12_2\}$&$\{0_1,17_1,1_2\}$&$\{1_1,5_1,22_2\}$\\
&$\{5_1,7_1,14_2\}$&$\{5_1,19_1,3_2\}$&$\{20_1,22_1,17_2\}$&\bm{$\{24_2,25_2\}$}\\
\hline
\multirow{2}*{29}&$\{6_1,14_1,16_1,28_1\}$&$\{2_1,8_1,17_1,24_1\}$&$\{3_1,16_1,19_1,25_1\}$&$\{2_1,3_1,6_1,7_1\}$&$\{0_1,2_1,19_1,27_1\}$&$\{1_1,13_1,14_1,17_1\}$\\
&$\{12_1,18_1,0_2\}$&$\{1_1,21_1,8_2\}$&$\{9_1,26_1,7_2\}$&\bm{$\{27_2,28_2\}$}\\
\hline
\multirow{3}*{32}&\bm{$\{0_1,6_1,12_1,18_1\}$}&\bm{$\{1_1,7_1,13_1,19_1\}$}&\bm{$\{2_1,8_1,14_1,20_1\}$}&$\{13_1,18_1,22_1,30_1\}$&$\{6_1,11_1,29_1,30_1\}$&$\{2_1,5_1,10_1,29_1\}$\\
    &$\{12_1,13_1,25_1,28_1\}$&$\{5_1,16_1,20_1,28_1\}$&$\{2_1,3_1,16_1,19_1\}$&$\{0_1,2_1,9_1,28_1\}$&$\{0_1,3_1,10_1,11_1\}$&$\{5_1,24_1,15_2\}$\\
    &$\{27_1,29_1,13_2\}$&$\{0_1,22_1,20_2\}$&\bm{$\{24_2,25_2\}$}&\bm{$\{30_2,31_2\}$}\\
\hline
\multirow{2}*{38}&$\{12_1,15_1,16_1,28_1\}$&$\{10_1,16_1,17_1,24_1\}$&$\{7_1,12_1,22_1,27_1\}$&$\{5_1,9_1,17_1,19_1\}$&$\{0_1,2_1,13_1,19_1\}$&\bm{$\{0_1,9_1,18_1,27_1\}$}\\
&$\{5_1,37_1,16_2\}$&$\{32_1,36_1,29_2\}$&\bm{$\{36_2,37_2\}$}\\
\hline \multirow{3}*{41}&$\{6_1,25_1,29_1,39_1\}$&$\{11_1,15_1,20_1,25_1\}$&$\{10_1,17_1,36_1,40_1\}$&$\{17_1,32_1,34_1,35_1\}$&$\{0_1,2_1,8_1,9_1\}$&$\{10_1,19_1,27_1,38_1\}$\\
&$\{2_1,14_1,24_1,27_1\}$&$\{0_1,4_1,6_1,27_1\}$&$\{4_1,10_1,15_1,22_1\}$&$\{7_1,21_1,6_2\}$&$\{10_1,13_1,23_2\}$&$\{20_1,28_1,4_2\}$\\
&\bm{$\{39_2,40_2\}$}\\
 \hline\multirow{2}*{50}&\bm{$\{0_1,12_1,24_1,36_1\}$}&\bm{$\{1_1,13_1,25_1,37_1\}$}&$\{14_1,20_1,25_1,36_1\}$&$\{2_1,17_1,31_1,47_1\}$&$\{28_1,35_1,37_1,45_1\}$&$\{17_1,24_1,37_1,43_1\}$\\
 &$\{23_1,27_1,36_1,44_1\}$&$\{0_1,2_1,20_1,23_1\}$&$\{5_1,6_1,10_1,20_1\}$&$\{23_1,48_1,22_2\}$&$\{26_1,49_1,3_2\}$&\bm{$\{48_2,49_2\}$}\\
\hline\multirow{3}*{53}&$\{16_1,23_1,24_1,50_1\}$&$\{28_1,42_1,43_1,46_1\}$&$\{12_1,15_1,31_1,50_1\}$&$\{5_1,11_1,25_1,41_1\}$&$\{0_1,12_1,22_1,46_1\}$&$\{12_1,30_1,39_1,44_1\}$\\
&$\{7_1,9_1,32_1,50_1\}$&$\{18_1,20_1,43_1,49_1\}$&$\{0_1,28_1,29_1,51_1\}$&$\{0_1,6_1,17_1,21_1\}$&$\{0_1,8_1,13_1,52_1\}$&$\{5_1,7_1,16_1,46_1\}$\\
&$\{43_1,47_1,3_2\}$&$\{12_1,32_1,19_2\}$&$\{2_1,44_1,5_2\}$&\bm{$\{51_2,52_2\}$}\\
\hline\multirow{3}*{62}&$\{3_1,15_1,28_1,37_1\}$&$\{0_1,16_1,48_1,50_1\}$&$\{13_1,31_1,33_1,52_1\}$&\bm{$\{0_1,15_1,30_1,45_1\}$}&$\{23_1,30_1,31_1,55_1\}$&$\{0_1,18_1,29_1,40_1\}$\\
  &$\{5_1,9_1,52_1,55_1\}$&$\{0_1,4_1,31_1,37_1\}$&$\{9_1,10_1,46_1,53_1\}$&$\{0_1,5_1,8_1,14_1\}$&$\{32_1,60_1,51_2\}$&$\{33_1,60_1,38_2\}$\\
  &\bm{$\{60_2,61_2\}$}\\
\hline \multirow{4}*{65}&$\{16_1,21_1,22_1,62_1\}$&$\{18_1,24_1,28_1,39_1\}$&$\{11_1,25_1,27_1,64_1\}$&$\{16_1,39_1,47_1,63_1\}$&$\{12_1,36_1,49_1,61_1\}$&$\{10_1,27_1,43_1,47_1\}$\\
&$\{9_1,26_1,31_1,41_1\}$&$\{12_1,17_1,21_1,55_1\}$&$\{7_1,20_1,59_1,62_1\}$&$\{1_1,9_1,16_1,23_1\}$&$\{2_1,11_1,12_1,31_1\}$&$\{28_1,37_1,55_1,56_1\}$\\
&$\{0_1,11_1,23_1,36_1\}$&$\{20_1,26_1,54_1,56_1\}$&$\{0_1,18_1,44_1,51_1\}$&$\{7_1,31_1,28_2\}$&$\{6_1,37_1,9_2\}$&$\{13_1,29_1,11_2\}$\\
&\bm{$\{63_2,64_2\}$}\\
\hline \multirow{3}*{74}&\bm{$\{1_1,19_1,37_1,55_1\}$}&$\{12_1,14_1,15_1,26_1\}$&$\{4_1,15_1,37_1,61_1\}$&$\{29_1,43_1,66_1,71_1\}$&$\{30_1,38_1,67_1,68_1\}$&\bm{$\{0_1,18_1,36_1,54_1\}$}\\
&$\{44_1,57_1,61_1,69_1\}$&$\{6_1,26_1,47_1,57_1\}$&$\{4_1,8_1,31_1,48_1\}$&$\{4_1,23_1,28_1,43_1\}$&$\{0_1,6_1,16_1,69_1\}$&$\{3_1,16_1,59_1,66_1\}$\\
&$\{10_1,17_1,55_1,57_1\}$&$\{36_1,73_1,62_2\}$&$\{9_1,72_1,15_2\}$&\bm{$\{72_2,73_2\}$}\\
\hline\multirow{4}*{77}&$\{33_1,51_1,53_1,72_1\}$&$\{23_1,35_1,62_1,72_1\}$&$\{18_1,24_1,28_1,43_1\}$&$\{25_1,30_1,65_1,76_1\}$&$\{23_1,30_1,46_1,57_1\}$&$\{20_1,24_1,48_1,70_1\}$\\
&$\{12_1,62_1,67_1,75_1\}$&$\{31_1,32_1,33_1,65_1\}$&$\{6_1,18_1,19_1,67_1\}$&$\{1_1,14_1,38_1,68_1\}$&$\{1_1,47_1,50_1,65_1\}$&$\{1_1,11_1,31_1,43_1\}$\\
&$\{10_1,19_1,42_1,57_1\}$&$\{3_1,10_1,14_1,45_1\}$&$\{6_1,29_1,35_1,51_1\}$&$\{5_1,7_1,58_1,64_1\}$&$\{1_1,4_1,32_1,45_1\}$&$\{8_1,66_1,69_1,74_1\}$\\
&$\{37_1,54_1,45_2\}$&$\{43_1,64_1,50_2\}$&$\{50_1,67_1,31_2\}$&\bm{$\{75_2,76_2\}$}\\
 \hline\multirow{3}*{86}&$\{42_1,53_1,73_1,75_1\}$&\bm{$\{0_1,21_1,42_1,63_1\}$}&$\{0_1,4_1,49_1,73_1\}$&$\{6_1,22_1,31_1,81_1\}$&$\{21_1,59_1,69_1,77_1\}$&$\{20_1,23_1,56_1,75_1\}$\\
 &$\{0_1,52_1,53_1,57_1\}$&$\{24_1,36_1,44_1,71_1\}$&$\{1_1,6_1,13_1,20_1\}$&$\{5_1,28_1,50_1,52_1\}$&$\{4_1,32_1,38_1,78_1\}$&$\{49_1,50_1,65_1,79_1\}$\\
 &$\{0_1,13_1,54_1,71_1\}$&$\{6_1,29_1,32_1,73_1\}$&$\{20_1,84_1,2_2\}$&$\{49_1,85_1,55_2\}$&\bm{$\{84_2,85_2\}$}\\
\hline\multirow{5}*{89}&$\{55_1,62_1,64_1,81_1\}$&$\{27_1,32_1,37_1,88_1\}$&$\{37_1,51_1,65_1,87_1\}$&$\{17_1,40_1,55_1,86_1\}$&$\{31_1,58_1,70_1,76_1\}$&$\{19_1,20_1,49_1,85_1\}$\\
&$\{37_1,42_1,61_1,84_1\}$&$\{30_1,54_1,60_1,76_1\}$&$\{30_1,63_1,64_1,67_1\}$&$\{16_1,49_1,51_1,86_1\}$&$\{20_1,34_1,53_1,59_1\}$&$\{4_1,12_1,38_1,48_1\}$\\
&$\{16_1,26_1,72_1,75_1\}$&$\{12_1,27_1,56_1,86_1\}$&$\{12_1,35_1,44_1,79_1\}$&$\{14_1,65_1,77_1,85_1\}$&$\{8_1,29_1,30_1,55_1\}$&$\{5_1,9_1,36_1,84_1\}$\\
&$\{14_1,39_1,46_1,59_1\}$&$\{5_1,21_1,32_1,39_1\}$&$\{0_1,2_1,13_1,17_1\}$&$\{6_1,56_1,53_2\}$&$\{18_1,27_1,76_2\}$&$\{3_1,79_1,24_2\}$\\
&\bm{$\{87_2,88_2\}$}\\
 \hline\multirow{4}*{98}&$\{13_1,71_1,72_1,75_1\}$&$\{28_1,80_1,85_1,95_1\}$&$\{20_1,42_1,46_1,73_1\}$&\bm{$\{0_1,24_1,48_1,72_1\}$}&\bm{$\{1_1,25_1,49_1,73_1\}$}&$\{22_1,41_1,71_1,90_1\}$\\
 &$\{6_1,24_1,82_1,89_1\}$&$\{5_1,21_1,26_1,62_1\}$&$\{3_1,17_1,53_1,59_1\}$&$\{26_1,40_1,72_1,82_1\}$&$\{28_1,45_1,62_1,63_1\}$&$\{35_1,46_1,67_1,87_1\}$\\
 &$\{4_1,6_1,12_1,75_1\}$&$\{0_1,9_1,80_1,93_1\}$&$\{1_1,9_1,24_1,54_1\}$&$\{8_1,67_1,69_1,95_1\}$&$\{1_1,52_1,64_1,75_1\}$&$\{37_1,96_1,8_2\}$\\
 &$\{68_1,97_1,61_2\}$&\bm{$\{96_2,97_2\}$}\\
\hline\multirow{5}*{101}&$\{24_1,38_1,43_1,50_1\}$&$\{26_1,50_1,72_1,84_1\}$&$\{32_1,49_1,93_1,100_1\}$&$\{51_1,64_1,75_1,84_1\}$&$\{31_1,44_1,53_1,93_1\}$&$\{22_1,59_1,77_1,78_1\}$\\
&$\{17_1,52_1,64_1,73_1\}$&$\{14_1,25_1,55_1,84_1\}$&$\{31_1,33_1,34_1,48_1\}$&$\{34_1,35_1,72_1,80_1\}$&$\{29_1,36_1,42_1,72_1\}$&$\{22_1,58_1,63_1,83_1\}$\\
&$\{0_1,21_1,72_1,89_1\}$&$\{15_1,17_1,43_1,47_1\}$&$\{10_1,42_1,60_1,64_1\}$&$\{54_1,59_1,65_1,79_1\}$&$\{20_1,43_1,59_1,92_1\}$&$\{0_1,16_1,73_1,83_1\}$\\
&$\{6_1,13_1,41_1,66_1\}$&$\{8_1,37_1,56_1,84_1\}$&$\{1_1,16_1,40_1,67_1\}$&$\{5_1,7_1,13_1,47_1\}$&$\{3_1,6_1,37_1,60_1\}$&$\{2_1,6_1,70_1,99_1\}$\\
&$\{44_1,47_1,0_2\}$&$\{29_1,44_1,8_2\}$&$\{76_1,84_1,94_2\}$&\bm{$\{99_2,100_2\}$}\\
\hline\multirow{6}*{113}&$\{58_1,91_1,95_1,101_1\}$&$\{10_1,29_1,45_1,100_1\}$&$\{12_1,34_1,81_1,87_1\}$&$\{57_1,72_1,101_1,109_1\}$&$\{60_1,64_1,71_1,106_1\}$&$\{22_1,49_1,107_1,109_1\}$\\
&$\{18_1,42_1,65_1,101_1\}$&$\{0_1,62_1,94_1,112_1\}$&$\{32_1,46_1,52_1,77_1\}$&$\{59_1,60_1,74_1,105_1\}$&$\{26_1,43_1,81_1,111_1\}$&$\{2_1,41_1,83_1,104_1\}$\\
&$\{13_1,27_1,47_1,94_1\}$&$\{27_1,60_1,65_1,77_1\}$&$\{28_1,30_1,48_1,60_1\}$&$\{11_1,22_1,38_1,45_1\}$&$\{27_1,43_1,48_1,61_1\}$&$\{38_1,67_1,82_1,95_1\}$\\
&$\{12_1,22_1,31_1,94_1\}$&$\{16_1,19_1,27_1,87_1\}$&$\{7_1,29_1,62_1,87_1\}$&$\{9_1,35_1,58_1,94_1\}$&$\{17_1,60_1,68_1,87_1\}$&$\{5_1,10_1,55_1,67_1\}$\\
&$\{17_1,27_1,30_1,55_1\}$&$\{3_1,12_1,66_1,73_1\}$&$\{5_1,8_1,45_1,84_1\}$&$\{17_1,110_1,21_2\}$&$\{93_1,94_1,95_2\}$&$\{47_1,71_1,1_2\}$\\
&\bm{$\{111_2,112_2\}$}\\
\end{longtable}
\end{footnotesize}

\subsection{The lower bound of $A_3(n,6,4)$ for $22$ special $n$ listed in Table~\ref{t7} }\label{a4}

\begin{lemma}\label{lowerbound}
  The lower and upper bound of $A_3(n,6,4)$ with $n\in M$ are given in Table~\ref{t7}, where $M=\{14$, $17$, $18$, $24$, $35$, $42$, $44$, $47$, $56$, $59$, $68$, $71$, $72$, $78$, $80$, $83$, $84$, $90$, $92$, $95$, $96$, $102\}$.
\end{lemma}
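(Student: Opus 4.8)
The plan is to split the statement cleanly into its two halves, since they are of very different difficulty. The upper bounds cost nothing: for each of the $22$ values the entry in the upper row of Table~\ref{t7} is exactly $U(n)=\lfloor n(n+5)/12\rfloor$, so it is precisely the bound already established in Lemma~\ref{A3(n,6,4)}. Hence the real content is the lower row, namely to exhibit, for each such $n$, an $(n,6,4)_3$ code of the tabulated size. I emphasise that no optimality is claimed here: the gap between the two rows is deliberately left open, so I never need to argue that $U(n)$ is unreachable, only that the stated lower value is reachable.

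For the lower bounds I would recycle the recursive machinery of Theorems~\ref{36m}, \ref{12m} and \ref{24m}, but now feed it \emph{deficient} ingredients so that the shortfall propagates additively. Recall that hitting $U(n)$ exactly requires a $2$-$(n,\{3,4\},1)$ packing carrying $n$ triples (each decorated by a single symbol $2$) together with a near-empty leave; when the perfect packing is unavailable one loses exactly one codeword per missing triple. As the clean illustrative case, for $n=56\equiv 2\pmod 3$ I would apply Theorem~\ref{24m} with $g=12$, writing $56=12\cdot 4+6+2$, and input into each of the four groups of size $12$ a \emph{deficient} $(14,6,4)_3$ code of size $U(14)-1$ that still carries the Property~(B) profile (twelve type $1^22^1$ codewords on the group points with symbol $2$ confined there, the single $2^2$ codeword $\{\imath_2,\jmath_2\}$ on the two adjoined points, and the rest of type $1^4$), while the group of size $m+2=8$ receives an optimal $(8,6,4)_3$ code. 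Each deficient group contributes one fewer codeword, so the size formula of Theorem~\ref{24m} returns $U(56)-4=280$, matching the table. The other composite values ($59,68,71,72,78,80,83,84,90,92,95,96,102$) are handled in the same spirit, choosing per value the GDD type and the set of deficient groups so that the accumulated shortfall equals the deficiency $c$ recorded in Table~\ref{t7}; values with $n\equiv 0,1\pmod 3$ use Theorems~\ref{36m} and \ref{12m}, typically with a single deficient group so that $c=1$ (as for $72,78$).

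The remaining genuinely small values ($14,17,18,24,35,42,44,47$) are too small for any recursion with $u\ge 4$ and so must be produced directly by computer search, presented as base codewords together with an explicit group action in the style of Tables~\ref{t3_B0}--\ref{t3_B2}; verifying each is then a finite check of the necessary and sufficient conditions $(a')$ and $(b')$. Note that the deficient $(14,6,4)_3$ code used throughout the $n\equiv 2\pmod 3$ family is itself one of these base codes, so the recursion genuinely reduces the large cases to a handful of small ones.

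The hard part is not the arithmetic but keeping every recursion legal. The deficient ingredient must preserve the exact structural profile demanded by Property~(A)/(B)---symbol $2$ confined to the intended coordinates, the $2^2$ pair confined to the adjoined points $\imath,\jmath$, and no leave edge that would either violate $(b')$ or, because the two extra points are \emph{shared} across all $u$ groups, persist globally and corrupt the count. Thus the delicate step is constructing, for each value of $g$ that actually occurs (chiefly $g=12$, and $g=24$ or $g=27$ for the larger blocks), a deficient short code whose missing codewords do not disturb this profile, and then confirming that the relevant size formula reproduces exactly the recorded $c$. Carrying this out for all $22$ entries, plus the direct verification of the sporadic base codes, is where essentially all the work lies.
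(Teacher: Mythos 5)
Your reading of the statement is right: the upper row of Table~\ref{t7} is just $U(n)$ from Lemma~\ref{A3(n,6,4)}, so only the lower bounds need work. But your mechanism for the lower bounds has a genuine gap. You propose to keep the ``good'' GDD types of Theorems~\ref{36m}--\ref{24m} (groups of size $12$, $24$, $36$) and instead feed them \emph{deficient} short codes of size $U(g+t)-1$ that still carry the full Property~(A)/(B) profile. Those ingredient codes are never constructed, and their existence is exactly the kind of nontrivial claim this lemma is supposed to settle. Take your flagship case $n=56$: you need a $(14,6,4)_3$ code of size $U(14)-1=21$ with precisely twelve codewords of type $1^22^1$ (symbol $2$ confined to the twelve group points), one codeword $\{\imath_2,\jmath_2\}$, and eight of type $1^4$. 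The only size-$21$ code of length $14$ the paper knows of comes from a $4$-GDD of type $2^7$ and has \emph{seven} codewords of type $2^2$ --- a completely different profile --- so the object you rely on is not known to exist, and $A_3(14,6,4)$ itself is one of the open entries of Table~\ref{t7}. The same unproved existence claim recurs for every $g$ and every deficiency amount you need (several of the entries have $U(n)-\text{lb}$ equal to $3$, $7$ or $8$, so you would need several deficient groups simultaneously). In addition, for some values your decomposition is not even available: e.g.\ $n=44$ and $n=47$ admit no $4$-GDD of type $12^um^1$ or $24^um^1$ with $u\ge 4$, so your plan silently pushes $14,35,42,44,47$ into ``direct computer search'' of codes with up to $200$ codewords, which is far more computation than the lemma actually requires.

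The paper's route inverts your idea and thereby avoids the gap: it keeps the ingredient codes \emph{optimal} and instead chooses GDDs whose group sizes are ``inefficient'' modulo $12$, so the shortfall comes from the floor in $U(g)$ rather than from a hypothetical deficient code. Concretely, $n=14,35,44,47,56,59,68,71,80,83,92,95$ use $4$-GDDs of types $2^um^1$ (each size-$2$ group contributing the single trivial codeword of type $2^2$), $n=42,72,78,84,90,96,102$ use types $6^7$, $15^4 12^1$, $15^4 18^1$, $12^7$, $6^{15}$, $7^{12}10^1$, $6^{12}30^1$, and only $n=17,18,24$ are done by computer. If you want to salvage your version, you must either exhibit the deficient Property~(A)/(B) codes explicitly for each required length, or switch to the paper's device of absorbing the deficiency into the group sizes, for which all the needed optimal short codes are already in hand.
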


\begin{proof}
  Upper bounds are given by Lemma~\ref{A3(n,6,4)} and we only deal with the lower bounds. First, we consider the case of $n\equiv 0\pmod{3}$. For $n=18$ and $24$, we can show that $A_3(n,6,4)\geq 33$ and $55$, respectively, by computer search.  The corresponding codes are available upon request. For $n=42, 72, 78, 84, 90, 102$, the codes are constructed  from  $4$-GDDs of types $6^7$, $15^412^1$, $15^418^1$, $12^7$, $6^{15}$,  $6^{12}30^1$ combined with the required short codes, respectively. When $n=96$, the code is obtained by using a $4$-GDD of type $7^{12}10^1$ and a $(9,6,4)_3$ code with Property (B) by Theorem~\ref{24m}.

  The remaining cases are $n\equiv 1\pmod{3}$. $A_3(17,6,4)\geq 30$ and the corresponding code is found by computer.  For $n=14, 35, 44, 47, 56, 59, 68, 71, 80, 83, 92, 95$, the codes are constructed from $4$-GDDs of types $2^7$, $2^{12}11^1$, $2^{22}$, $2^{18}11^1$, $2^{24}8^1$, $2^{24}11^1$, $2^{24}20^1$, $2^{24}23^1$, $2^{27}26^1$, $2^{30}23^1$, $2^{33}26^1$, $2^{33}29^1$ combined with the required short codes, respectively.
\end{proof}


\vskip 10pt
\bibliographystyle{IEEEtran}
\bibliography{cwc}

\begin{thebibliography}{10}
\providecommand{\url}[1]{#1}
\csname url@samestyle\endcsname
\providecommand{\newblock}{\relax}
\providecommand{\bibinfo}[2]{#2}
\providecommand{\BIBentrySTDinterwordspacing}{\spaceskip=0pt\relax}
\providecommand{\BIBentryALTinterwordstretchfactor}{4}
\providecommand{\BIBentryALTinterwordspacing}{\spaceskip=\fontdimen2\font plus
\BIBentryALTinterwordstretchfactor\fontdimen3\font minus
  \fontdimen4\font\relax}
\providecommand{\BIBforeignlanguage}[2]{{%
\expandafter\ifx\csname l@#1\endcsname\relax
\typeout{** WARNING: IEEEtran.bst: No hyphenation pattern has been}%
\typeout{** loaded for the language `#1'. Using the pattern for}%
\typeout{** the default language instead.}%
\else
\language=\csname l@#1\endcsname
\fi
#2}}
\providecommand{\BIBdecl}{\relax}
\BIBdecl

\bibitem{costello2007channel}
D.~J. Costello and G.~D. Forney, ``Channel coding: The road to channel
  capacity,'' \emph{Proceedings of the IEEE}, vol.~95, no.~6, pp. 1150--1177,
  2007.

\bibitem{king2003bounds}
O.~D. King, ``Bounds for {DNA} codes with constant {GC}-content,'' \emph{the
  electronic journal of combinatorics}, vol.~10, no.~1, p.~33, 2003.

\bibitem{milenkovic2005design}
O.~Milenkovic and N.~Kashyap, ``On the design of codes for {DNA} computing,''
  in \emph{International Workshop on Coding and Cryptography}.\hskip 1em plus
  0.5em minus 0.4em\relax Springer, 2005, pp. 100--119.

\bibitem{graham1980lower}
R.~Graham and N.~Sloane, ``Lower bounds for constant weight codes,'' \emph{IEEE
  Transactions on Information Theory}, vol.~26, no.~1, pp. 37--43, 1980.

\bibitem{brouwer1990new}
A.~Brouwer, J.~Shearer, N.~Sloane, and W.~Smith, ``A new table of constant
  weight codes,'' \emph{IEEE Transactions on Information Theory}, vol.~36,
  no.~6, pp. 1334--1380, 1990.

\bibitem{agrell2000upper}
E.~Agrell, A.~Vardy, and K.~Zeger, ``Upper bounds for constant-weight codes,''
  \emph{IEEE Transactions on Information Theory}, vol.~46, no.~7, pp.
  2373--2395, 2000.

\bibitem{zhang2010}
H.~{Zhang} and G.~{Ge}, ``Optimal ternary constant-weight codes of weight four
  and distance six,'' \emph{IEEE Transactions on Information Theory}, vol.~56,
  no.~5, pp. 2188--2203, 2010.

\bibitem{zhang2012optimal}
H.~Zhang, X.~Zhang, and G.~Ge, ``Optimal ternary constant-weight codes with
  weight 4 and distance 5,'' \emph{IEEE Transactions on Information Theory},
  vol.~58, no.~5, pp. 2706--2718, 2012.

\bibitem{chee2014complexity}
Y.~M. Chee, H.~Zhang, and X.~Zhang, ``Complexity of dependences in bounded
  domains, armstrong codes, and generalizations,'' \emph{IEEE Transactions on
  Information Theory}, vol.~61, no.~2, pp. 812--819, 2014.

\bibitem{chee2015hanani}
Y.~M. Chee, G.~Ge, H.~Zhang, and X.~Zhang, ``Hanani triple packings and optimal
  $q$-ary codes of constant weight three,'' \emph{Designs, Codes and
  Cryptography}, vol.~75, no.~3, pp. 387--403, 2015.

\bibitem{chee2017linear}
Y.~M. Chee and X.~Zhang, ``Linear size constant-composition codes meeting the
  {J}ohnson bound,'' \emph{IEEE Transactions on Information Theory}, vol.~64,
  no.~2, pp. 909--917, 2017.

\bibitem{chee2017constructions}
Y.~M. Chee, H.~M. Kiah, H.~Zhang, and X.~Zhang, ``Constructions of optimal and
  near-optimal multiply constant-weight codes,'' \emph{IEEE Transactions on
  Information Theory}, vol.~63, no.~6, pp. 3621--3629, 2017.

\bibitem{chee2019decompositions}
Y.~M. Chee, F.~Gao, H.~M. Kiah, A.~C. Hung~Ling, H.~Zhang, and X.~Zhang,
  ``Decompositions of edge-colored digraphs: A new technique in the
  construction of constant-weight codes and related families,'' \emph{SIAM
  Journal on Discrete Mathematics}, vol.~33, no.~1, pp. 209--229, 2019.

\bibitem{jain2017duplication}
S.~Jain, F.~F. Hassanzadeh, M.~Schwartz, and J.~Bruck, ``Duplication-correcting
  codes for data storage in the {DNA} of living organisms,'' \emph{IEEE
  Transactions on Information Theory}, vol.~63, no.~8, pp. 4996--5010, 2017.

\bibitem{kovavcevic2018asymptotically}
M.~Kova{\v{c}}evi{\'c} and V.~Y.~F. Tan, ``Asymptotically optimal codes
  correcting fixed-length duplication errors in {DNA} storage systems,''
  \emph{IEEE Communications Letters}, vol.~22, no.~11, pp. 2194--2197, 2018.

\bibitem{lenz2018bounds}
A.~Lenz, N.~J{\"u}nger, and A.~Wachter-Zeh, ``Bounds and constructions for
  multi-symbol duplication error correcting codes,'' in \emph{Sixteenth
  International Workshop on Algebraic and Combinatorial Coding Theory}, 2018.

\bibitem{tang2019single}
Y.~Tang, Y.~Yehezkeally, M.~Schwartz, and F.~F. Hassanzadeh, ``Single-error
  detection and correction for duplication and substitution channels,'' in
  \emph{2019 IEEE International Symposium on Information Theory (ISIT)}.\hskip
  1em plus 0.5em minus 0.4em\relax IEEE, 2019, pp. 300--304.

\bibitem{yehezkeally2019reconstruction}
Y.~Yehezkeally and M.~Schwartz, ``Reconstruction codes for dna sequences with
  uniform tandem-duplication errors,'' \emph{IEEE Transactions on Information
  Theory}, vol.~66, no.~5, pp. 2658--2668, 2019.

\bibitem{lander2001initial}
E.~Lander, L.~Linton, B.~Birren, C.~Nusbaum, M.~Zody, J.~Baldwin, K.~Devon,
  K.~Dewar, M.~Doyle, W.~FitzHugh \emph{et~al.}, ``Initial sequencing and
  analysis of the human genome.'' \emph{Nature}, vol. 409, no. 6822, pp.
  860--921, 2001.

\bibitem{usdin2008biological}
K.~Usdin, ``The biological effects of simple tandem repeats: lessons from the
  repeat expansion diseases,'' \emph{Genome research}, vol.~18, no.~7, pp.
  1011--1019, 2008.

\bibitem{fondon2004molecular}
J.~W. Fondon and H.~R. Garner, ``Molecular origins of rapid and continuous
  morphological evolution,'' \emph{Proceedings of the National Academy of
  Sciences}, vol. 101, no.~52, pp. 18\,058--18\,063, 2004.

\bibitem{barg2010codes}
A.~Barg and A.~Mazumdar, ``Codes in permutations and error correction for rank
  modulation,'' \emph{IEEE Transactions on Information Theory}, vol.~56, no.~7,
  pp. 3158--3165, 2010.

\bibitem{jiang2009error}
A.~Jiang, H.~Li, and Y.~Wang, ``Error scrubbing codes for flash memories,'' in
  \emph{2009 11th Canadian Workshop on Information Theory}.\hskip 1em plus
  0.5em minus 0.4em\relax IEEE, 2009, pp. 32--35.

\bibitem{tallini20111}
L.~G. Tallini and B.~Bose, ``On $l_1$-distance error control codes,'' in
  \emph{2011 IEEE International Symposium on Information Theory
  Proceedings}.\hskip 1em plus 0.5em minus 0.4em\relax IEEE, 2011, pp.
  1061--1065.

\bibitem{zhou2014systematic}
H.~Zhou, M.~Schwartz, A.~A. Jiang, and J.~Bruck, ``Systematic error-correcting
  codes for rank modulation,'' \emph{IEEE Transactions on Information Theory},
  vol.~61, no.~1, pp. 17--32, 2014.

\bibitem{farnoud2013error}
F.~Farnoud, V.~Skachek, and O.~Milenkovic, ``Error-correction in flash memories
  via codes in the ulam metric,'' \emph{IEEE Transactions on Information
  Theory}, vol.~59, no.~5, pp. 3003--3020, 2013.

\bibitem{kabatianskycodes}
G.~Kabatiansky and S.~Kruglik, ``On codes correcting constant number of errors
  in $l_1$ metric,'' in \emph{{\fontencoding{OT2}\selectfont Sbornik trudov
  39-\ae mezhdisciplinarno\ae xkody-konferencii IPPI RAN ``Informacionnye
  tehnologii i sistemy 2015"}}, 2015, pp. 152--157.

\bibitem{vincent2018code}
M.~{Kova\v{c}evi\'{c}} and V.~Y.~F. {Tan}, ``Codes in the space of
  multisets-coding for permutation channels with impairments,'' \emph{IEEE
  Transactions on Information Theory}, vol.~64, no.~7, pp. 5156--5169, 2018.

\bibitem{jinushi1990construction}
H.~Jinushi and K.~Sakaniwa, ``A construction method for multilevel
  error-correcting codes based on absolute summation weight,'' in \emph{Abst.
  1990 IEEE Int. Symp. Inform. Theory}, 1990, p.~87.

\bibitem{butson1962generalized}
A.~Butson, ``Generalized hadamard matrices,'' \emph{Proceedings of the American
  Mathematical Society}, vol.~13, no.~6, pp. 894--898, 1962.

\bibitem{ge2007group}
G.~Ge, ``Group divisible designs, handbook of combinatorial designs.''\hskip
  1em plus 0.5em minus 0.4em\relax Chapman and Hall/CRC Press, 2007, pp.
  255--260.

\bibitem{stinson2006packings}
D.~R. Stinson, R.~Wei, and J.~Yin, ``Packings,'' in \emph{Handbook of
  Combinatorial Designs}.\hskip 1em plus 0.5em minus 0.4em\relax Chapman and
  Hall/CRC, 2006, pp. 576--582.

\bibitem{bao2015completion}
J.~Bao and L.~Ji, ``The completion determination of optimal (3,4)-packings,''
  \emph{Designs, Codes and Cryptography}, vol.~77, no.~1, pp. 217--229, 2015.

\bibitem{Colbourn2006}
C.~J. Colbourn and J.~H. Dinitz, \emph{Handbook of combinatorial
  designs}.\hskip 1em plus 0.5em minus 0.4em\relax CRC press, 2006.

\bibitem{colbourn1986quadratic}
C.~J. Colbourn and A.~Rosa, ``Quadratic leaves of maximal partial triple
  systems,'' \emph{Graphs and Combinatorics}, vol.~2, no.~1, pp. 317--337,
  1986.

\bibitem{colbourn1998class}
C.~J. Colbourn and A.~C. Ling, ``A class of partial triple systems with
  applications in survey sampling,'' \emph{Communications in Statistics-Theory
  and Methods}, vol.~27, no.~4, pp. 1009--1018, 1998.

\bibitem{alon1998packing}
N.~Alon, Y.~Caro, and R.~Yuster, ``Packing and covering dense graphs,''
  \emph{Journal of Combinatorial Designs}, vol.~6, no.~6, pp. 451--472, 1998.

\bibitem{wei2014group}
H.~Wei and G.~Ge, ``Group divisible designs with block sizes from ${K}_{1(3)}$
  and {K}irkman frames of type $h^um^1$,'' \emph{Discrete Mathematics}, vol.
  329, pp. 42--68, 2014.

\bibitem{Shearer2006difference}
J.~B. Shearer, ``Difference triangle sets,'' in \emph{Handbook of Combinatorial
  Designs}.\hskip 1em plus 0.5em minus 0.4em\relax Chapman and Hall/CRC, 2006,
  pp. 436--440.

\end{thebibliography}

\end{document}